\begin{document}
%
\title{Efficient multi port-based teleportation schemes}
%
%
%

\author{Micha{\l} Studzi\'nski, Marek Mozrzymas, Piotr Kopszak and Micha{\l} Horodecki}

\theoremstyle{plain}
\newtheorem{theorem}{Theorem}
\newtheorem{lemma}[theorem]{Lemma}
\newtheorem{proposition}[theorem]{Proposition}
\newtheorem{corollary}[theorem]{Corollary}
\newtheorem{conjecture}[theorem]{Conjecture}
\newtheorem{definition}[theorem]{Definition}
\newtheorem{fact}[theorem]{Fact}
\newtheorem{notation}[theorem]{Notation}
\newtheorem{observation}[theorem]{Observation}
\newtheoremstyle{note}{\topsep}{\topsep}{\slshape}{}{\scshape}{}{ }{}
\theoremstyle{note}
\newtheorem{note}[theorem]{Note}
\newtheorem{remark}[theorem]{Remark}
\newtheorem{example}[theorem]{Example}

\newcommand\const{\operatorname{const}}
\newcommand\vect{\operatorname{Vect}}
\newcommand\vspan{\operatorname{span}}
\newcommand\supp{\operatorname{Supp}}
\newcommand\Div{\operatorname{div}}
\newcommand\rank{\operatorname{rank}}
\newcommand\diag{\operatorname{diag}}
\newcommand\sign{\operatorname{sign}}
\renewcommand\Re{\operatorname{Re}}
\renewcommand\Im{\operatorname{Im}}
\newcommand\im{\operatorname{im}}
\newcommand\tr{\operatorname{Tr}}
\newcommand\res{\operatorname{res}}
\newcommand\result{\operatorname{Res}}
\newcommand\mult{\operatorname{mult}}
\newcommand\spectr{\operatorname{spectr}}
\newcommand\ord{\operatorname{ord}}
\newcommand\grad{\operatorname{grad}}
\newcommand\lt{\operatorname{\mbox{\textsc{lt}}}}
\newcommand\card{\operatorname{card}}
\newcommand\trdeg{\operatorname{tr.\!deg}}
\newcommand\gdeg{\operatorname{\deg_{\vgamma}}}
\newcommand\sdeg{\operatorname{\mbox{$\vsigma$}-deg}}
\newcommand\ldeg{\operatorname{\mbox{$\vlambda$}-deg}}
\newcommand\id{\operatorname{\mathrm{Id}}}
\newcommand\rmd{\mathrm{d}}
\newcommand\SLtwoC{\ensuremath{\mathrm{SL}(2,\C)}}
\newcommand\CPOne{\ensuremath{\C\mathbb{P}^1}}
\newcommand\CP{\ensuremath{\C\bbP}}
\newcommand\VE[1]{\mathrm{VE}_{#1}}
\newcommand\rmi{\mathrm{i}\mspace{1mu}}
\newcommand\rme{\mathrm{e}}
\newcommand\Dt{\frac{\mathrm{d}\phantom{t} }{\mathrm{d}\mspace{1mu} t}}
\newcommand\Dy{\frac{\mathrm{d}\phantom{y} }{\mathrm{d}y}}
\newcommand\Dz{\frac{\mathrm{d}\phantom{z} }{ \mathrm{d}z}}
\newcommand\Dwz{\frac{\rmd w }{\mathrm{d} z}}
\newcommand\Dtt{\frac{\mathrm{d}^2\phantom{t} }{\mathrm{d}t^2}}
\newcommand\pder[2]{\dfrac{\partial #1 }{\partial #2}}
%
%
\newcommand\abs[1]{\lvert #1 \rvert}
\newcommand\norm[1]{\lVert #1 \rVert}
\newcommand\cc[1]{\overline#1}
\newcommand\figref[1]{Fig.~\ref{#1}}
\newcommand\scalar[2]{\langle \mvector{#1},\mvector{#2}\rangle}
\newcommand\pairing[2]{\langle {#1}, {#2}\rangle}
%
%
\newcommand\rscalar[2]{\langle #1, #2\rangle}
%
%
\newcommand\cn{\operatorname{cn}}
\newcommand\sn{\operatorname{sn}}
\newcommand\dn{\operatorname{dn}}
\newcommand\bfi[1]{{\bfseries\itshape{#1}}}
\newcommand\cpoly{\C[x_1,\ldots,x_n]}
\newcommand\crat{\C(x_1,\ldots,x_n)}
\newcommand\mtext[1]{\quad\text{#1}\quad}
\newcommand\mnote[1]{\marginpar{\tiny{#1}}}
\newcommand\defset[2]{\left\{{#1}\;\vert \;\; {#2} \,\right\}}
\newcommand\deftuple[2]{\left({#1}\;\vert \;\; {#2} \,\right)}

\newcommand{\<}{\langle}
\renewcommand{\>}{\rangle}
\newcommand{\ind}{\operatorname{ind}}

\newcommand\hcal{\cH}
\newcommand\be{\begin{equation}}
\newcommand\ee{\end{equation}}
\newcommand\bea{\begin{array}}
	\newcommand\eea{\end{array}}
\newcommand\ben{\begin{eqnarray}}
\newcommand\een{\end{eqnarray}}
\newcommand\ot{\otimes}
\newcommand\tred{\textcolor{red}}
\newcommand\cred{\color{red}}
\newcommand\blk{\color{black}}

\definecolor{forest}{RGB}{11, 102,35}
\newcommand\tforest{\textcolor{forest}}

\newcommand\bei{\begin{itemize}}
	\newcommand\eei{\end{itemize}}
\newcommand\bee{\begin{enumerate}}
	\newcommand\eee{\end{enumerate}}
\DeclarePairedDelimiter\ceil{\lceil}{\rceil}
\DeclarePairedDelimiter\floor{\lfloor}{\rfloor}

\newcommand{\mh}[1]{ \textcolor{blue}{{\tt MH}: #1}}

\maketitle

	\begin{abstract}
In this manuscript we analyse generalised port-based teleportation (PBT) schemes, allowing for transmitting more than one unknown quantum state (or a composite quantum state) in one go, where the state ends up in several ports at Bob's side.  We investigate the efficiency of our scheme discussing both deterministic and probabilistic case, where parties share maximally entangled states. It turns out that the new scheme gives better performance than various variants of the optimal PBT protocol used for the same task. All the results are presented in group-theoretic manner depending on such quantities like dimensions and multiplicities of irreducible representations in the Schur-Weyl duality. 
The presented analysis was possible by considering the algebra of permutation operators acting on $n$ systems distorted by the action of partial transposition acting on more than one subsystem.  Considering its action on the $n-$fold tensor product of the Hilbert space with finite dimension, we present construction of the respective irreducible matrix representations, which are in fact matrix irreducible representations of the Walled Brauer Algebra. I turns out that the introduced formalism, and symmetries beneath it, appears in many aspects of theoretical physics and mathematics - theory of anti ferromagnetism, aspects of gravity theory or in the problem of designing  quantum circuits for special task like for example inverting an unknown unitary.
	\end{abstract}

\begin{IEEEkeywords}
quantum information, quantum teleportation, group representation theory, symmetric group, port-based teleportation.
\end{IEEEkeywords}

\section{(Multi) Port-based teleportation protocols and their importance}
Quantum teleportation is one of the most important primitives in quantum information science. It performs an unknown quantum state transmission between two spatially separated systems. It requires pre-shared entangled resource state and consists of three elements: joint measurement, classical communication and \textit{correction operation} depending on the result of the measurements. 
Except  quantum teleportation protocol presented by Bennett et al. in~\cite{bennett_teleporting_1993} we distinguish Knill-Laflamme-Milburn (KLM) scheme~\cite{knill_scheme_2001}, based solely on linear optical tools and so-called Port-based Teleportation (PBT) protocols, introduced in~\cite{ishizaka_asymptotic_2008}.  Although, standard teleportation and KLM scheme are of the great importance and have fundamental meaning for the field with range of important applications~\cite{boschi_experimental_1998,gottesman_demonstrating_1999,gross_novel_2007, jozsa_introduction_2005, pirandola_advances_2015, raussendorf_one-way_2001, zukowski_event-ready-detectors_1993}, here we focus on PBT schemes. One of the main reasons of that is the PBT is the only scheme where in the last step the \textit{unitary correction is absent}.
The lack of correction in the last step allows for entirely new applications in modern quantum information science and the high amount of its symmetries make it tempting for analysis by representation-theoretic methods. For instance, PBT has found its place in non-local quantum computations and position-based cryptography~\cite{beigi_konig} resulted in new attacks on the cryptographic primitives, reducing the amount of consumable entanglement from doubly exponential to exponential, communication complexity~\cite{buhrman_quantum_2016} connecting the field of communication complexity and a Bell inequality violation,  theory of universal programmable quantum processor performing computation by teleportation~\cite{ishizaka_asymptotic_2008}, universal simulator for qubit channels~\cite{sim} improving simulations of the amplitude damping channel and allowing to obtain limitations of the fundamental nature for quantum channels discrimination~\cite{limit}.
Some aspects of PBT play a role in the general theory of  construction of universal quantum circuit for inverting general unitary operations~\cite{PhysRevLett.123.210502} as well as theory of storage and retrieval of unitary quantum channels~\cite{Stroing}.

In the original formulation of PBT scheme, see Figure~\ref{FPBT}, two parties share a resource state consisting of $N$ copies of maximally entangled state $|\psi^+\>$, each of them called a port. 
\begin{figure}[h]
	\begin{centering}
		\includegraphics[width=0.4\textwidth]{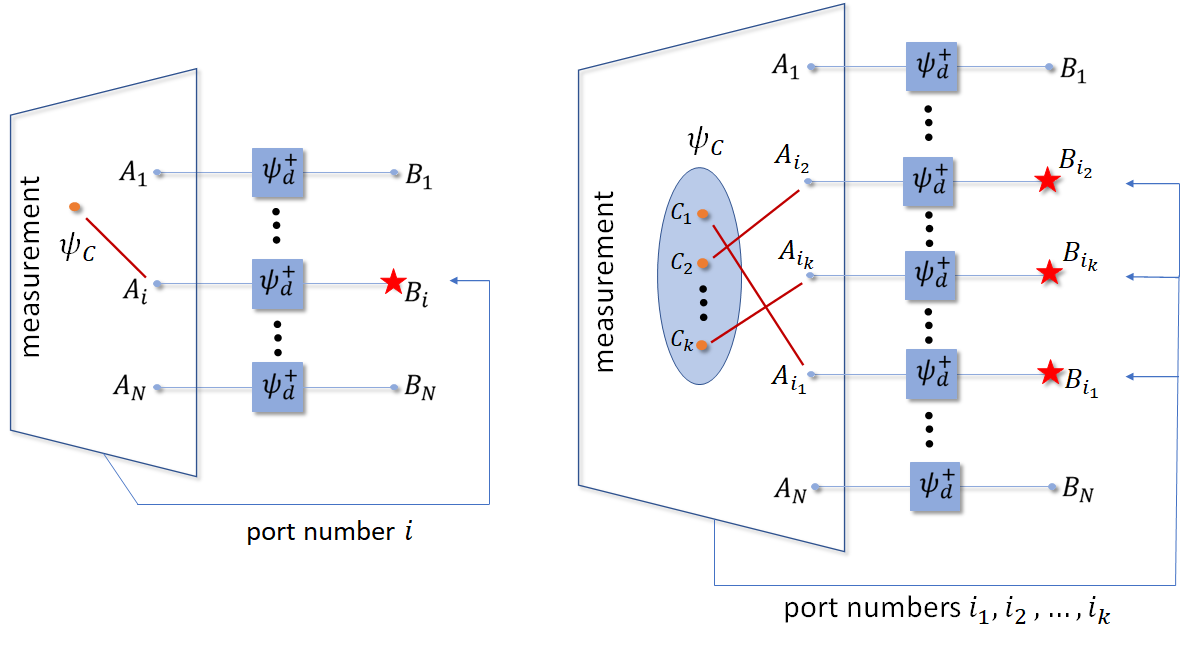}
		\caption{On the left-hand side we present the vanilla scheme for the standard PBT. Two parties share $N$ copies EPR pairs $\Phi_d^+=|\psi^+_d\>\<\psi^+_d|$, where $|\psi^+_d\>=(1/\sqrt{d})\sum_i |ii\>$. Alice to teleport an unknown state $\psi_C$ applies a joint measurement (the blue trapeze) on the state to be teleported and her half $A_1\cdots A_N$ of the resource state, getting a classical outcome $i$ transmitted to by by a classical channel. The index $i$ indicates port on the Bob's side (red star) on which teleported state appears. On the right-hand side we present basic scheme for multi-port teleportation scheme. Again, two parties share $N$ copies EPR pairs $\Phi_d^+=|\psi^+_d\>\<\psi^+_d|$, where $|\psi^+_d\>=(1/\sqrt{d})\sum_i |ii\>$. Alice  to teleport an unknown joint state $\psi_C=\psi_{C_1C_2\cdots C_k}$, where $k\leq \floor{N/2}$, to Bob performs a global measurement (the blue trapeze) on systems $C_1\cdots C_kA_1\cdots A_N$, getting a classical outcome $\mathbf{i}=(i_1,i_2,\ldots,i_k)$. She transmits the outcome $\mathbf{i}$ via classical communication to Bob. The index $\mathbf{i}$ indicates on which $k$ ports on the Bob's side the teleported state  arrives (red stars). Bob to recover  the teleported state has to pick up ports indicated by $\mathbf{i}$ with the right order.} 
		\label{FPBT}%
	\end{centering}
\end{figure}
Alice to teleport an unknown state $\psi_C$ to Bob performs a joint measurement on it and her half of the resource state, communicating the outcome through a classical channel to Bob. It turns out that the outcome received by Bob points to the system in the resource state where the state has been teleported to. We distinguish two versions of PBT protocol - \textit{deterministic} (dPBT) and \textit{probabilistic} (pPBT). In the first case, after the measurement Alice obtains a classical outcome $i\in \{1,\ldots, N\}$. In this scenario, the unknown state is always teleported, but it decoheres during the process. To learn about the efficiency we compute entanglement fidelity, checking how well we are able to transmit half of the maximally entangled state. From the no go theorem~\cite{bennett_teleporting_1993} for the deterministic universal processor, we know that we can achieve perfect teleportation only in the asymptotic limit $N\rightarrow \infty$. In the second case, the probabilistic one, Alice obtains a classical outcome $i\in \{0,1,\ldots,N\}$, where index 0 corresponds to an additional measurement  $\Pi_0^{AC}$ indicating the failure of the teleportation process. In all other cases in pPBT, when $i\in \{1,\ldots,N\}$, parties proceed with the procedure getting teleported state perfectly. To learn about efficiency, we compute the average probability of success of such a process. Similarly, as in the deterministic case, the probability is equal to 1 only in the asymptotic limit  $N\rightarrow \infty$. In every case, we can consider also \textit{optimised PBT}, where Alice optimises jointly over the shared state and measurements before she runs the protocol to increase the efficiency, see~\cite{ishizaka_quantum_2009} for further details. 

Effective evaluation of the performance of both variants of PBT requires determining all symmetries that occur in the problem and spectral analysis of certain operators. For qubits it has been done in~\cite{ishizaka_asymptotic_2008,ishizaka_quantum_2009} by exploiting representation theory of $SU(2)^{\ot N}$, in particular properties of Clebsch-Gordan (CG) coefficients, together with semidefinite programming. Unfortunately, such methods do not work effectively in a higher dimension, $d>2$.  It is because in the case of $SU(d)^{\ot N}$ there is no closed-form of the CG coefficients and to compute them we need an exponential overhead in $N$ and $d$.

The first attempt to describe the efficiency of PBT in higher dimensions has been done in~\cite{wang_higher-dimensional_2016} by exploiting elements of Temperley-Lieb algebra theory, mostly in its graphical representation. The authors presented closed expressions for entanglement fidelity as well as the probability of success for an arbitrary $d$ and $N=2,3,4$. 

Next, in papers~\cite{Studzinski2017,StuNJP,MozJPA}, authors develop new mathematical tools allowing for studies of PBT for arbitrary $N$ and $d$. From a technical point of view, the crucial role is played by the algebra of partially transposed permutation operators and its irreducible components. Or in the other words irreducible representations of the commutant of $U^{\otimes (n-1)}\otimes \overline{U}$, where the bar denotes complex conjugation, and $U$ is an element of unitary group $\mathcal{U}(d)$. It turns out that basic objects describing all variants of PBT belongs to the mentioned commutant. Knowing the full description of irreducible spaces we can reduce the analysis to every block separately and present entanglement fidelity and the probability of success in terms of parameters describing respective irreducible blocks like multiplicity or dimension.
Finally, in paper~\cite{majenz} authors investigated the asymptotic behaviour of PBT schemes which was uncovered in the previous works. Their results required advanced tools coming from connections between representation-theoretic formulas and random matrix theory.

Despite of all the results presented above still, we have many important questions to answer in the field of PBT protocols. 
Here we focus on the following problem:
\textit{What is the most effective way to teleport using PBT-like protocols a state of composite system or several systems, let us say $k$?} One of the answer could be the following:
\begin{itemize}
    \item The most obvious one is to run the original PBT with dimension of the port equal to $d^k$, however the performance of the PBT protocols gets worse with growing local dimension~\cite{majenz,StuNJP}.
    \item We could also keep dimensions of the ports and split the resource state into $k$ packages and then run $k$ separate PBT procedures independently. Such analysis, together with some aspets of asymptotic discussion of the teleportation protocols analysed here is studied in \cite{kopszak2021multiport}.
\end{itemize}
In the next sections of this paper, we show that allowing Bob for a mild correction in a form of ports permutation we can find a class of multi-port teleportation protocols (see the right panel of Figure~\ref{FPBT}), allowing for high performance measured in terms of entanglement fidelity or probability of success. Such class of protocols allows us to transfer the state with higher performance than the respective PBT schemes mentioned above.
To obtain the final answers
we deliver {\it novel mathematical tools} concerning both standard Schur-Weyl duality based on $n$-fold tensor product of unitary transformations, $U^{\ot n}$,
as well as, its "skew" version based on the product of type $U^{\otimes N} \otimes \overline{U}^{\otimes k}$ (where bar denotes complex conjugation). By considering irreducible representations of the commutant of $U^{\otimes N} \otimes \overline{U}^{\otimes k}$  (with $n=N+k$), we show its connection with the \textit{algebra of partially transposed permutation operators} $\mathcal{A}_n^{(k)}(d)$, composed of all linear combinations of the standard permutation operators deformed by the operation of partial transposition over last $k$ subsystems. In fact, our work covers unexplored earlier field of finding irreducible matrix representations of Walled Brauer Algebra~\cite{Cox1}. 

The tool kit presented here is not tailored only for effective description of port-based like teleportation protocols and mentioned  kind of symmetries appear in many problems of modern physics and mathematics. 

From the perspective of physics, studying quantum systems with such symmetries  play an important role in  antiferromagnetic systems~\cite{Candu2011}. In this paper the author considers  the spectrum of an integrable antiferromagnetic Hamiltonian of the $gl(M|N)$ spin chain of alternating fundamental and dual representations. In particular, to reduce the complexity of the numerical diagonalisation of the considered Hamiltonian author applies non-trivial tools emerging from the theory of Walled-Brauer Algebra. Here, our new tools possibly enable more analytical approach to the problem or at least further numerical simplifications.

Similar kinds of symmetries have found their place even in some aspects of gravity theories~\cite{1126-6708-2007-11-078,2008PhRvD..78l6003K} and particle physics~\cite{Kimura2010}. Here, authors by applying elements of the representation theory of (Walled) Brauer Algebras and Schur-Weyl duality focus on diagonalisation of the two-point functions of gauge invariant
multi-matrix operators. In particular, they describe how labels appearing in diagonal bases are related to respective Casimir operators and irreducible components of the Brauer and Walled Brauer Algebra. It turns out that the depper understanding the spectrum of states from the point of view of the conformal field theory (CFT) yields information about space-time physics via the AdS/CFT duality~\cite{Maldacena1999}.

Next, our analysis could be applied in the study of the theory of entanglement and positive maps. The first such approach has been made by Werner and Eggeling in seminal paper~\cite{PhysRevA.63.042111}, where the full analysis of tripartite $U \otimes U \otimes U$ states has been made, concentrating on their positivity after partial transposition property (PPT), which is equivalent to considering $U\otimes U\otimes \overline{U}$ invariant operators. Our tools can in principle be used for the  characterisation of multipartite states after having previously chosen the systems to be transposed. This field, although old, is still under exploration, more in the context of positive and $k-$positive maps. To support our claim let us consider recent papers by Collins and co-authors~\cite{Collins_2018,Bardet_2020}.

Furthermore, motivated by the recent results on Temperley–Lieb Quantum Channels~\cite{Brannan_2020}, one can apply methods developed here, together with the above investigations, to the problem of constructive examples of new quantum channels for which the minimum output R{\'e}nyi entropy is not additive.
 
The tools described in paper are enough for the full description of the universal $M\rightarrow N$ quantum cloning machines (where $M<N$) in the group-theoretic manner. 
Such approach has been successfully for  universal $1\rightarrow N$ quantum cloning machines in~\cite{PhysRevA.89.052322}. 

Finally, the methods developed here are very similar to the techniques used in abstract harmonic analysis for non-commutative groups, where irreducible representations play a crucial role~\cite{harmonic_analysis}. This similarity strongly suggests possibility of implementing our mathematical results to some aspect of harmonic analysis in future.

To address at least a part of described above problems we need to diagonalize and investigate properties of some operators  representing certain physical quantity. To do so we have to construct an analogue of the celebrated Young-Yamanouchi basis for the symmetric group $S(n)$. This is the only way of investigating operators which are $U^{\otimes n}$ invariant, thus having non-trivial component only the symmetric part in the Schur-Weyl duality.   However, in our case our symmetry is deformed - we have $k$ complex conjugations - the straightforward approach suggested by the Schur-Weyl duality is not enough. Also, combining with the approach based on considering the dual representation to $\overline{U}$ as it was done~\cite{majenz} is not enough, since we must have full information about matrix entries of the respective operators on the non-trivial sectors, similarly as it is for $U^{\otimes n}$ invariant operators, and pre-existing methods simply do not have access to these sectors of the space. 

From the perspective of pure mathematics we deliver tools for studying and understanding the Walled Brauer Algebras~\cite{Cox1}, which is a sub-algebra of the  Brauer Algebra~\cite{Brauer1} on the most friendly level for potential applications - irreducible matrix representation. Namely, the algebra of partially transposed permutation operators studied here is a representation of  the Walled Brauer Algebra on the space $(\mathbb{C}^d)^{\otimes n}$. Up to our knowledge it is the first result of such kind on this level of generality. We can go even further, and built a bridge between our tools,  the above-mentioned physical applications and  transposed Jucys-Murphy elements~\cite{Mu,Ju} which in their not distorted form generate commutative subalgebra of $\mathbb{C}[S(n)]$. This approach opens a new path: the opportunity for studying deformation of the permutation group $S(N)$ within a novel approach to representation theory put forward  in~\cite{Okunkov}. 

 The structure of this paper is the following. In Section~\ref{summary} we give summary of all our findings presented in the manuscript. In Section~\ref{interest} we rigorously introduce the multi-port-based teleportation schemes and discuss the quantities of interest which are entanglement fidelity and probability of success. Next, in Section~\ref{sym}  and discuss briefly the occurring symmetries. We explain the connection with the algebra of partially transposed permutation operators and the necessity of finding its irreducible components.  In Section~\ref{defs} we introduce the basic notions of the representation theory for the permutation group. We explain how to compute the basic quantities describing irreducible representations such as dimensions and multiplicities. We show how to construct an operator basis in every irreducible component. Schur-Weyl duality and notion of Young's lattice are also shortly explained. Most of the pieces of information are taken from~\cite{vershik}. In Section~\ref{preliminary} we prove a few results concerning partially transposed permutation operators. The notion of partially reduced irreducible representation (PRIR) in the generalized version concerning previous results is introduced. Using these two we prove certain summation rule for matrix elements of irreducible representations of permutations, which is up to our best knowledge not known in the literature. Finally, we present results on partial traces from the operator basis in every irreducible space of the permutation group. In Section~\ref{comm_structure} we present the main mathematical results of our paper. We construct an operator basis in every irreducible representation of the algebra of partially transposed permutation operators. Next, using this result, we compute matrix elements of a port-based teleportation operator determining the performance of teleportation schemes. We show that this object is diagonal in our basis, allowing us to determine its spectral decomposition. Having all mathematical results, in Section~\ref{detkPBT} and Section~\ref{probkPBT}, we describe deterministic and probabilistic MPBT scheme and derive expressions describing their performance. We end up by Section~\ref{diss}, where we discuss our results and present possible ways of further exploring the idea of multi-port-based teleportation schemes, for example by simultaneous optimization of the resource state and Alice's measurements.

\section{Summary of the main results}
\label{summary}
In this paper we present several results concerning twofold aspects. Firstly, we introduce tools relating the characterisation of the structure of the algebra $\mathcal{A}_n^{(k)}(d)$ to the new technical results for practical calculations in  the symmetric group $S(n)$.  Secondly, we apply our tools to characterise a class of multi-port based teleportation protocols (MPBT).
\\
{\bf Results concerning the symmetric group $S(n)$ and the algebra $\mathcal{A}_n^{(k)}(d)$:}
\begin{enumerate}[1)]
    \item In Proposition~\ref{summation0} we deliver new summation (orthogonality) rule for irreducible representations of the symmetric group $S(n)$, which is motivated by the celebrated Schur orthogonality relations~\cite{Curtis}.  This summation rule allows us for effective computations and simplifications quantities regarding MPBT protocols, especially when computing matrix elements of MPBT operator describing property of the deterministic scheme. It is also important by itself, giving deeper understanding of connection between matrix elements of a subgroup $H \subset S(n)$ and the whole group $S(n)$.
    \item We present effective tools for computing partial traces over an arbitrary number of systems from the irreducible operator basis in every irrep of $S(n)$ emerging from the Schur-Weyl duality. This is contained in Lemma~\ref{L3} and Corollary~\ref{corL3}. Up to our best knowledge these are new results on this level of generality and extending results from~\cite{Aud,Christandl_2007}.  Since these tools allow for effective calculations of partial traces in the group algebra of $S(n)$, which is often the case in quantum information science, they are of the separate interest.
    \item  We show that  the algebra $\mathcal{A}_n^{(k)}(d)$ of partially transposed permutation operators is in fact the matrix representation of the Walled Brauer Algebra on the space $(\mathbb{C}^d)^{\otimes n}$. This connection, due to~\cite{Cox1}, gives us all the ideals of the considered algebra and show how they are nested. In particular we identify the maximal ideal $\mathcal{M}$ (see Figure~\ref{structure_M}), which is the main object for further understanding multi-port based teleportation schemes. This identification is implied by the symmetries exhibit in our new teleportation protocols.
    \item We construct an orthonormal irreducible operator basis in the maximal ideal (Theorem~\ref{tmbas}). We show how the structure of the irreducible blocks looks like and explain their connection with the irreps of the symmetric groups $S(n)$ and $S(n-2k)$. In fact, this result gives us a way for constructing irreducible matrix representations of the Walled Brauer Algebra in the maximal ideal $\mathcal{M}$ on the space $(\mathbb{C}^d)^{\otimes n}$, which is the first result of such kind in the literature.
    It is analogue of the following basic result  regarding representations of $S(n)$ on $(\mathbb{C}^d)^{\otimes n}$:
    \begin{align}
        E_{ij}^\lambda=
        \frac{d_{\lambda}}{n!}\sum_{\sigma\in S(n)}
        \phi_{ji}^{\lambda}(\sigma^{-1}) V_{\sigma},
    \end{align}
    where $\lambda$ labels irreps of $S(n)$ of dimension $d_{\lambda}$, $V_{\sigma}$ is permutation operator, that permutes subsystems in 
    $(\mathbb{C}^d)^{\otimes n}$ according to permutation $\sigma\in S(n)$, and finally numbers $\phi_{ji}^{\lambda}(\sigma^{-1})$ are
    matrix elements of irreducible representation of $\sigma$.  
    The above formula is actually a general formula that works for any representation of a finite group. However, in our case, we have representation of an algebra, which is not a group algebra, and there is no such general formula.  
   
    We also construct set of projectors on irreducible blocks of the algebra $\mathcal{A}_n^{(k)}(d)$ in the maximal ideal $\mathcal{M}$, see Definition~\ref{efy} and Lemma~\ref{efy2}.
    \item 
    In the considered basis
    we find {\it matrix elements} of the basic objects for our study - namely, the permutation operators partially transposed on $k$ systems
    belonging to maximal ideal $\mathcal{M}$, as well as those permutation operators, that are not affected by partial transpose (Lemma~\ref{Vel}). 
    Our matrix elements are analogues of matrix elements 
    $\phi_{ij}^\lambda(\sigma)$ of irreps of $S(n)$ in Young-Yamanouchi basis.
    They are connected with the parameters describing irreps of the symmetric groups $S(n)$ and $S(n-2k)$.  
    This is non-trivial extension of the tools used in the Schur-Weyl duality to the case when one has to deal with symmetry of a different type (partial symmetry)- $U^{\otimes(n-k)}\otimes \overline{U}^{\otimes k}$, where the existing tools cannot be applied straightforwardly. It was possible by introducing notion of partially irreducible representations, involving concept of the induced representation and properties of subgroups. These tools allow us for effective calculations of compositions and partial traces of operators exhibiting partial symmetries, see for example Lemma~\ref{A1}, Lemma~\ref{A2} or Lemma~\ref{simple}, and surely they will find applications far beyond MPBT protocols.
\end{enumerate}
{\bf Results concerning multi-port based teleportation:}
\begin{enumerate}[1)]
    \item We investigate multi-port based teleportation schemes by identifying all their symmetries and present their connection with the algebra  $\mathcal{A}_n^{(k)}(d)$, so in fact with matrix representations of the Walled Brauer Algebra. We describe two variants, deterministic and probabilistic one. In particular, we show explicitly how operators, like signal states and measurements, encoding the performance of MPBT decompose in terms of partially transposed permutation operators (Sections~\ref{interest},~\ref{sym}). 
    \item Next, having construction of the irreducible basis in the maximal ideal $\mathcal{M}$  of the algebra $\mathcal{A}_n^{(k)}(d)$ we prove Theorem~\ref{kPBTmat} and Theorem~\ref{eig_dec_rho}. In particular these results show that the MPBT operator encoding properties of our protocols is diagonal in projectors onto irreps of the algebra $\mathcal{A}_n^{(k)}(d)$ which are known thanks to the first part of the paper. It is important to stress here that adaptation of the pre-existing tools like the Schur-Weyl duality and  the dual representation to $\overline{U}$, which led to re-computation of some known results in PBT~\cite{majenz}, when $k=1$, are not enough here. It is due to the fact that to obtain all the results one must have an orthogonal irreducible operator basis in every irreducible sector of the underlying algebra which has been not known previously.
    \item In the deterministic case we prove Theorem~\ref{Fthm} in which we present an explicit expression for entanglement fidelity $F$ of the protocol, when parties share $N=n-k$ maximally entangled states of dimension $d$ each, and use square-root measurements:
			\begin{align}
				F&=\frac{1}{d^{N+2k}}\sum_{\alpha \vdash N-k}\left(\sum_{\mu\in\alpha}m_{\mu/\alpha} \sqrt{m_{\mu}d_{\mu}}\right)^2,
\end{align}
	where $m_{\mu},d_{\mu}$ denote multiplicity and dimension of irreducible representations of $S(N)$ respectively in the Schur-Weyl duality, and $m_{\mu/\alpha}$ denotes number of paths on reduced Young's lattice in which diagram $\mu$ can be obtained from diagram $\alpha$ by adding $k$ boxes. The efficiency of the new deterministic protocol compared with deterministic PBT when teleporting a composite system is depicted in Figure~\ref{fig:test2}. In this case, we perform significantly better even than the optimal PBT. 
    \begin{figure}[h!]
	\includegraphics[width=\linewidth]{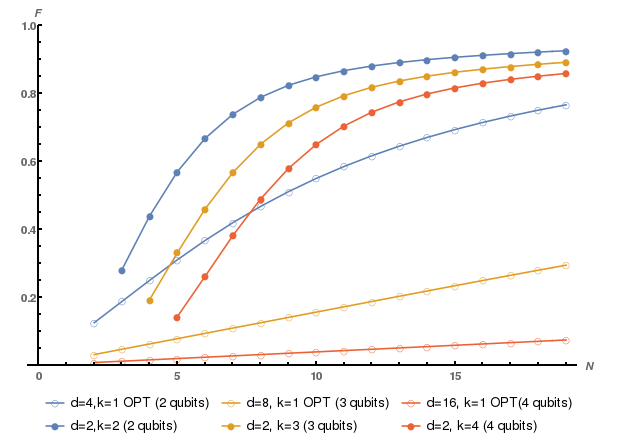}
	\caption{The performance of the deterministic version of our protocol, measured in entanglement fidelity $F$, for various choices of initial parameters which are local dimension $d$, number of ports $N$ and number of teleporting particles $k$. One can see that we achieve better performance in teleporting a state of two qubits ($d=2,k=2$) then standard PBT scheme with appropriate port dimension ($d=4,k=1$) as well as the optimal one (OPT). }
	\label{fig:test2}
\end{figure}
    \item In the probabilistic case we prove Theorem~\ref{thm_p} in which we connect probability of success $p$ with quantities describing symmetric groups $S(N)$ and $S(N-k)$:
    \be
p=\frac{k!\binom{N}{k}}{d^{2N}}\sum_{\alpha \vdash N-k}\mathop{\operatorname{min}}\limits_{\mu\in\alpha}\frac{m_{\alpha}d_{\alpha}}{\lambda_{\mu}(\alpha)},
\ee
The numbers $\lambda_{\mu}(\alpha)$ are eigenvalues of MPBT operator and $m_{\alpha}, d_{\alpha}$ denote multiplicity and dimension of the irrep labelled by $\alpha$ in the Schur-Weyl duality. The optimal measurements in this case are also derived in the same theorem. The efficiency of the new probabilistic protocol compared with probabilistic PBT when teleporting a composite system is depicted in Figure~\ref{fig:test3}. In this  case we outperform the optimal PBT scheme for $k\geq 3$.   We obtain these results by solving the dual with the primal problem and showing that they coincide, giving us the exact value of the probability and form of the optimal measurements. Exploiting symmetries of the protocol with the mathematical tools developed in this paper, we were able to solve the optimisation problem analytically, which is not the general case in the optimisation theory.
    \begin{figure}[h!]
	\includegraphics[width=\linewidth]{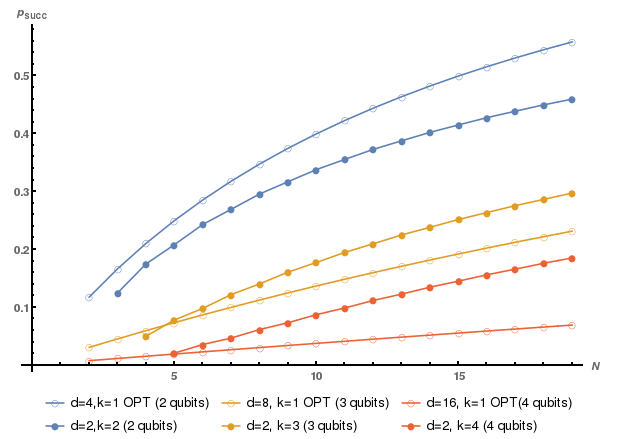}
	\caption{The performance of the probabilistic version of our protocol, measured in success probability $p$, for various choices of initial parameters which are local dimension $d$, number of ports $N$ and number of teleporting particles $k$. One can see that we start achieving better performance than the corresponding optimal PBT scheme with appropriate port dimension for a state of three qubits ($d=2,k=3$). }
	\label{fig:test3}
\end{figure}
\end{enumerate}
\section{Quantities of interest - entanglement fidelity and probability of success} 
\label{interest}
In  multi-port based teleportation protocols Alice wishes send  to Bob  an unknown composite qudit quantum state $\psi_C=\psi_{C_1C_2\cdots C_k}$, for $k\leq \floor{N/2}$, through $N$ ports, each port given as maximally entangled qudit state $|\psi^+\>=(1/\sqrt{d})\sum_{i=1}^d|ii\>$, where $d$ stands for the dimension of the underlying local Hilbert space. Both parties share so called resource state of the form $|\Psi\>=\bigotimes_{i=1}^N|\psi^+\>_{A_iB_i}$, see  Figure~\ref{FPBT}. 
Defining  
the set
\begin{multline}
\mathcal{I}:=\big\lbrace (i_1,i_2,\ldots,i_k) \ : \ \forall 1\leq l\leq k  \ i_l=1,\ldots,N  \ \\ \text{and} \  i_1\neq i_2\neq \cdots \neq i_k \big\rbrace 
\end{multline}
consisting of $k-$tuples (not necessarily ordered) denoting ports through which subsystems of the composite state $\psi_{C}$ are teleported. For example, having $N=5, k=2$ and ${\bf i}=(5,3)$ means that particle $\psi_{C_1}$ is on fifth port and $\psi_{C_2}$  on the third port.

In the next step Alice performs a joint measurement with outcomes $\mathbf{i}$ from the set $\mathcal{I}$. Every effect is described by positive operator valued measure (POVM) satisfying $\sum_{\mathbf{i}\in\mathcal{I}}\Pi_{\mathbf{i}}^{AC}=\mathbf{1}_{AC}$.
Having that we are in the position to describe a teleportation channel $\mathcal{N}$, which maps the density operators acting on $\mathcal{H}_C=\bigotimes_{i=1}^k\mathcal{H}_{C_i}$ to those acting on Bob's side:
\be
\label{ch1}
\begin{split}
	\mathcal{N}\left(\psi_{C} \right)&=\sum_{\mathbf{i}\in \mathcal{I}}\tr_{A\bar{B}_{\mathbf{i}}C}\left[ \sqrt{\Pi_{\mathbf{i}}^{AC}}\left(|\Psi\>\<\Psi|_{AB}\ot \psi_{C} \right)\sqrt{\Pi_{\mathbf{i}}^{AC}}^{\dagger}\right]\\
	                                 &=\sum_{\mathbf{i}\in \mathcal{I}}\tr_{AC}\left[\Pi_{\mathbf{i}}^{AC}\tr_{\bar{B}_{\mathbf{i}}}\bigotimes_{j=1}^N |\psi\>\<\psi|_{A_jB_j}  \ot \psi_{C} \right]_{B_{\mathbf{i}}\rightarrow \widetilde{B}}\\
	&=\sum_{\mathbf{i}\in \mathcal{I}} \tr_{AC}\left[\Pi_{\mathbf{i}}^{AC} \sigma_{\mathbf{i}}^{A\widetilde{B}}  \ot \psi_{C}\right], 
\end{split}
\ee
    where $\bar{B}_{\mathbf{i}}=\bar{B}_{i_1}\bar{B}_{i_2}\cdots \bar{B}_{i_k}$ denotes discarded  subsystems except those on positions $i_1,i_2,\ldots,i_k$ and operation $B_{\mathbf{i}}\rightarrow \widetilde{B}$ is assigning $B_{N+1}\cdots B_{N+k}$ for every index ${\bf i}$ on Bob's side, introduced for the mathematical convenience.  The states (signals) $\sigma_{\mathbf{i}}^{AB}$ or shortly $\sigma_{\mathbf{i}}$ for $\mathbf{i}\in \mathcal{I}$, from~\eqref{ch1}, are given as
		\begin{align}
\label{sigma}
\sigma_{\mathbf{i}}^{A\widetilde{B}}:= &\tr_{\bar{B}_{\mathbf{i}}}\left(P^+_{A_1B_1}\ot P^+_{A_2B_2}\ot \cdots \ot P^+_{A_NB_N} \right)_{B_{\mathbf{i}}\rightarrow \widetilde{B}} \\ =&\frac{1}{d^{N-k}}\mathbf{1}_{\bar{A}_{\mathbf{i}}}\ot P^+_{A_{\mathbf{i}}\widetilde{B}}.
\end{align}
In above $\bar{A}_{\mathbf{i}}$ has the same meaning as $\bar{B}_{\mathbf{i}}$. Then 
$P^+_{A_{\mathbf{i}}\widetilde{B}}$ is a tensor product of projectors on maximally entangled sates with respect to subsystems defined by index $\mathbf{i}$ and prescription $B_{\mathbf{i}}\rightarrow \widetilde{B}$. For example, when ${\bf i}=(5,3)$, the notation $P^+_{A_{\mathbf{i}}\widetilde{B}}$ means $P^+_{A_5B_6}\ot P^+_{A_3B_7}$. 
For the further reasons, we introduce here the following multi port-based operator given as:
\be
\label{PBT_standard}
\rho:=\sum_{\mathbf{i}\in \mathcal{I}}\sigma_{\mathbf{i}}.
\ee
In the general case in above sum we have $k!\binom{N}{k}=\frac{N!}{(N-k)!}=|\mathcal{I}|$ elements. One can see that for $k=1$ we reproduce $|\mathcal{I}|=N$ number of signals from the original PBT scheme. For $k=2$ we have $|\mathcal{I}|=N(N-1)$, for $k=3$ it is $|\mathcal{I}|=N(N-1)(N-2)$ and so on.

{\bf Deterministic version} In this version of the protocol receiver always accepts state of $k$ of $N$ ports as the teleported states. Since the ideal transmission of states is impossible, we would like to know how well we are able to preform the scheme, possibly as a function of global parameters like number of ports or local dimension. We investigate this by checking how well the teleportation channel $\mathcal{N}$ transmits quantum correlations. To do so we compute its entanglement fidelity $F$, teleporting halves of maximally entangled states
\be
\label{ent_fid}
\begin{split}
	F&=\tr\left[P^+_{\widetilde{B}C}(\mathcal{N}\ot \mathbf{1}_{D})P^+_{CD} \right]\\
	 &=\sum_{\mathbf{i}\in\mathcal{I}}\tr\left[P^+_{\widetilde{B}C} \Pi_{\mathbf{i}}^{AC}\left( \sigma_{\mathbf{i}}^{A\widetilde{B}} \ot P^+_{CD}\right)  \right]\\
	 &=\frac{1}{d^{2k}}\sum_{\mathbf{i}\in\mathcal{I}}\tr\left[\Pi_{\mathbf{i}}^{A\widetilde{B}} \sigma_{\mathbf{i}}^{A\widetilde{B}} \right],
\end{split}
\ee
where $D=D_1D_2\cdots D_k$. 
 To have explicit answer what is the value of $F$ we need to choose a specific form of POVM operators $\Pi_{\mathbf{i}}^{AC}$.
As it is explained in previous papers~\cite{ishizaka_asymptotic_2008, ishizaka_quantum_2009}, PBT scheme is equivalent to the state discrimination problem, where authors use $N$ square-root measurements for distinguishing an ensemble $\{1/N,\sigma_{i}^{AC}\}$. In our case the situation is similar and the corresponding ensemble is of the form $\{1/|\mathcal{I}|,\sigma_{\mathbf{i}}^{AC}\}$ with corresponding POVMs:
\be
\label{povm}
\forall \ \mathbf{i}\in \mathcal{I}\qquad \Pi_{\mathbf{i}}^{AC}=\frac{1}{\sqrt{\rho}}\sigma_{\mathbf{i}}^{AC}\frac{1}{\sqrt{\rho}}+\Delta,
\ee
where states $\sigma_{\mathbf{i}}^{AC}$ are given in~\eqref{sigma} and $\rho$ is the port-based operator from~\eqref{PBT1}. It can be easy seen that operator $\rho$ is not of the full rank, so inversion on the support is required. Due to this, to every component in~\eqref{povm} an additional term $\Delta$ of the form
\be
\Delta=\frac{1}{|\mathcal{S}_{n,k}|}\left(\mathbf{1}_{(\mathbb{C}^d)^{\ot n}}^{AC}-\sum_{\mathbf{i}\in\mathcal{I}}\Pi_{\mathbf{i}}^{AC} \right) 
\ee
is added. This addition ensures that all effects sum up to identity operator on whole space $(\mathbb{C}^d)^{\ot n}$. Such procedure does not change the entanglement fidelity $F$ in~\eqref{ent_fid}. Our goal is to evaluate the entanglement fidelity from~\eqref{ent_fid} with measurements given in~\eqref{povm}. The solution given in terms of group-theoretic parameters is presented in Theorem~\ref{Fthm} in Section~\ref{detkPBT}.

{\bf Probabilistic version} In this scenario transmission  sometimes fails, but whenever succeeds then fidelity of the teleported state is maximal $F=1$. The teleportation channel in the probabilistic version looks exactly the same as it is in deterministic protocol, however is non-trace preserving. This fact is due to the reason that now Alice has access to $1+|\mathcal{I}|$ POVMs, where an additional POVM $\Pi_0^{AC}$ corresponds to the failure.
To evaluate the performance of the scheme we need to calculate the average success probability of teleportation $p$, where we average over all possible input states. This leads to the following expression (see~\cite{ishizaka_quantum_2009,kopszak2021multiport} for detailed calculations):
\be
\label{p_sini}
p=\frac{1}{d^{N+k}}\sum_{\mathbf{i}\in \mathcal{I}}\tr\left(\Pi_{\mathbf{i}}^{AC} \right).
\ee 
Requirement of the unit fidelity gives strong condition on the form of the measurement applied by Alice. Namely, using argumentation presented in~\cite{ishizaka_asymptotic_2008,ishizaka_quantum_2009,Studzinski2017} all the POVMs corresponding to the success of teleportation are of the form:
\be
\label{mess}
\forall \ \mathbf{i}\in\mathcal{I}\qquad \Pi_{\mathbf{i}}^{AC}=P^+_{A_{\mathbf{i}}C}\ot \Theta_{\overline{A}_{\mathbf{i}}}.
\ee
Having expression for probability of success~\eqref{p_sini} we ask what is the maximal possible value of $p$ and what is then the optimal form the operators $\Theta_{\overline{A}_{\mathbf{i}}}$ from~\eqref{mess} ensuring mentioned maximisation. It turns out that this problem can be written as a semidefinite program (SDP).  We write down a primal problem whose solution $p^*$ lower bounds the real value of $p$, and then we write a dual problem where the solution $p_*$ is a respective upper bound for the real value of $p$.  In the boxes below we write down  explicitly primal and dual problem. 
\begin{tcolorbox}
{\bf The primal problem for probabilistic scheme} The goal is to maximise the  quantity:
\be
\label{primal}
p^*=\frac{1}{d^{N+k}}\sum_{\mathbf{i}\in \mathcal{I}}\tr \Theta_{\overline{A}_{\mathbf{i}}},
\ee
with respect to the following constraints:
\be
\label{con1}
(1) \ \Theta_{\overline{A}_{\mathbf{i}}}\geq 0,\qquad (2) \ \sum_{\mathbf{i}\in \mathcal{I}}P^+_{A_{\mathbf{i}}C}\ot \Theta_{\overline{A}_{\mathbf{i}}}\leq \mathbf{1}_{AC}.
\ee 
\end{tcolorbox}
\begin{tcolorbox}
{\bf The dual problem for probabilistic scheme}  The dual problem is to minimise the quantity
\be
\label{dual0}
p_*=\frac{1}{d^{N+k}}\tr \Omega
\ee
with respect to the following constraints
\be
\label{dual}
(1) \ \Omega \geq 0, \qquad (2) \ \forall \mathbf{i}\in\mathcal{I} \ \tr_{A_{\mathbf{i}}C}\left(P^+_{A_{\mathbf{i}}C}\Omega\right)\geq \mathbf{1}_{N-k}, 
\ee
where the operator $\Omega$ acts on $N+k$ systems, the identity $\mathbf{1}_{N-k}$ acts on all systems but $A_{\mathbf{i}}$ and $C$.
\end{tcolorbox}
The solutions of the above primal and the dual problem in terms of group-theoretic parameters are presented in Theorem~\ref{thm_p} in Section~\ref{probkPBT}. For potential reader who would like to learn more about the concept of SDP, formulation of the respective primal and dual problem, we refer  to book~\cite{Boyd}.

\section{Symmetries in multi port-based teleportation} 
\label{sym}
In every variant of (multi) port-based teleportation protocols we distinguish two type of symmetries. One is connected with covariance and invariance with respect to the symmetric group $S(n-k)$, while the second one with invariance with respect to the action of $U^{\otimes (n-k)}\otimes \overline{U}^{\otimes k}$. We now shall describe briefly connection of these two types of symmetries with the operators describing analysed teleportation schemes.

Let us take index $\mathbf{i}_0$ such that $\mathbf{i}_0=(N-2k+1,N-2k+2,\ldots,N-k)$, then having $n=N+k$ signal $\sigma_{\mathbf{i}_0}$ can be written as
\be
\begin{split}
	\sigma_{\mathbf{i}_0}&=\frac{1}{d^{n-2k}}\mathbf{1}_{\bar{A}_{\mathbf{i}_0}}\ot P^+_{n-2k+1,n}\ot P^+_{n-2k+2,n-1}\ot \cdots \ot P^+_{n-k,n-k+1} \\
	&=\frac{1}{d^{n-k}}\mathbf{1}_{\bar{A}_{\mathbf{i}_0}}\ot V^{t_n}_{n-2k+1,n}\ot V^{t_{n-1}}_{n-2k+2,n-1}\ot \cdots \ot V^{t_{n-k+1}}_{n-k,n-k+1},
\end{split}
\ee
where by $t_n,t_{n-1}$ etc. we denote the partial transpositions with respect to particular subsystem and by $V_{r,s}$ the permutation operator between system $r$ and $s$ (since now we drop off indices for $A,B$ unless they necessary), for the $V_{r,s}^{t_s}$ operator is proportional to $P^+_{r,s}$. Further we assume whenever it is necessary that permutation operators  are properly embedded in whole $(\mathbb{C}^d)^{\ot n}$ space so we will write just $V_{r,s}$ instead of $V_{r,s}\ot \mathbf{1}_{\bar{r},\bar{s}}$. Moreover for the signal $\sigma_{\mathbf{i}_0}$ we introduce simpler notation
\be
\label{signal2}
\begin{split}
	\sigma_{\mathbf{i}_0}&=\frac{1}{d^{n-k}}\mathbf{1}\ot V^{t_n}_{n-2k+1,n}\ot V^{t_{n-1}}_{n-2k+2,n-1}\ot \cdots \ot V^{t_{n-k+1}}_{n-k,n-k+1}\\
	&\equiv \frac{1}{d^{N}} V^{(k)},
\end{split}
\ee
where 
\be
\begin{split}
	&V^{(k)}\equiv \mathbf{1}\ot V^{t_n}_{n-2k+1,n}\ot V^{t_{n-1}}_{n-2k+2,n-1}\ot \cdots \ot V^{t_{n-k+1}}_{n-k,n-k+1},\\
	&(k)\equiv t_n \circ t_{n-1}\circ \cdots \circ t_{n-k+1},
\end{split}
\ee
$\circ$ denotes composition of maps, and $\mathbf{1}$ denotes identity acting on the space untouched by tensor product of projectors on maximally entangled states. Form the definition of the signals $\sigma_{\mathbf{i}}$ in~\eqref{sigma} and form of $\sigma_{\mathbf{i}_0}$ from~\eqref{signal2} we can deduce that PBT operator $\rho$ can be written as
\be
\label{PBT1}
\rho=\sum_{\mathbf{i}\in \mathcal{I}}\sigma_{\mathbf{i}}=\frac{1}{d^N}\sum_{\tau \in \mathcal{S}_{n,k}}V_{\tau^{-1}}V^{(k)}V_{\tau},
\ee
where sum runs over all permutations $\tau$ from the coset $\mathcal{S}_{n,k}:= \frac{S(n-k)}{S(n-2k)}$, and $V_{\tau}$ the permutation operator corresponding to the permutation $\tau$. Due to the construction we have that $|\mathcal{I}|=\left|\frac{S(n-k)}{S(n-2k)}\right|$. Moreover, it is easy to notice that any signal state satisfies:
\begin{equation}
\label{rel1}
\forall \ \tau\in \mathcal{S}_{n,k} \qquad V_{\tau}\sigma_{\mathbf{i}}V_{\tau^{-1}}=\sigma_{\tau(\mathbf{i})}.
\end{equation}
Let us notice that the operator $\rho$ is invariant with respect to action of any permutation from $S(n-k)$ acting on the first $n-k$ systems:
\be
\label{rel2}
\forall \ \tau\in S(n-k) \qquad V_{\tau}\rho V_{\tau^{-1}}=\rho.
\ee
In particular, relation~\eqref{rel1} and~\eqref{rel2} imply covariance of the SRM measurements given in~\eqref{povm} with respect to the coset $\mathcal{S}_{n,k}$. The same type of covariance we require for POVMs in the probabilistic scheme from~\eqref{mess}. 

 We have also the second kind of symmetries. Notice that all operators $\sigma_{\mathbf{i}}$ as well PBT operator from~\eqref{PBT_standard} are invariant with respect to action of $U^{\ot (n-k)}\ot \overline{U}^{\ot k}$, where the bar denotes the complex wise conjugation and $U$ is an element of unitary group $\mathcal{U}(d)$. This observation follows from the structure of the signal states and the fact that every bipartite maximally entangled state $P^+_{ij}$, between system $i$ and $j$, is $U\otimes \overline{U}$ invariant. 
 
 This property with expression~\eqref{signal2} means that basic elements describing the performance of the  presented teleportation protocol belong to the algebra $\mathcal{A}^{(k)}_n(d)$ of partially transposed permutation operators with respect to $k$ last subsystems.
For $k=1$ we reduce to the known case, and standard $d$ dimensional port-based teleportation introduced in~\cite{Studzinski2017} and the algebra $\mathcal{A}^{(1)}_n(d)$ discussed in~\cite{Stu1,Moz1}. In this particular case we have  $\mathcal{S}_{n,1}=S(n-1)/S(n-2)$ and all permutations $\tau$ are of the form of transpositions $(a,n)$ for $a=1,\ldots,n-1$, and $|S(n-1)/S(n-2)|=n-1$.

The operation of partial transposition changes significantly properties of the operators under consideration, making the resulting set of operators no longer the group algebra of the symmetric group $S(n)$. To see it explicitly, let us consider a swap operator $V_{i,j}$ interchanging systems on positions $i$ and $j$. It is obvious that by applying the swap operator twice, we end up with an identity operator. However, applying the partial transposition to $V_{i,j}$, the swap operator is mapped to the operator $dP^+_{ij}$, which is proportional to maximally entangled state between respective systems. Applying partially transposed swap operator twice, we end up with $d^2P^+_{ij}$, since $P^+_{ij}P^+_{ij}=P^+_{ij}$. This property makes our further analysis more complex, and direct application of the standard methods from the representation theory of the group algebra of the symmetric group $S(n)$ is insufficient here.

In next sections we introduce notations and definitions and construct irreducible orthonormal basis of the algebra $\mathcal{A}^{(k)}_n(d)$ and formulate auxiliary lemmas required to spectral analysis of the operator $\rho$ and describing the performance of the protocol. 
\section{Notations and Definitions}
\label{defs}
For a given natural number $n$ we can define its partition $\mu$ in the following way
\be
\mu=(\mu^{(1)},\mu^{(2)},\ldots,\mu^{(l)})
\ee

\text{such that}
\begin{multline}
\forall_{1\leq i\leq l} \  \mu^{(i)}\in \mathbb{N}, \quad  \mu^{(1)}\geq \mu^{(2)} \geq \cdots \geq \mu^{(l)}\geq 0,\\\sum_{i=1}^l\mu^{(i)}=n.
\end{multline}
The Young frame associated with partition $\mu$ is the array formed by $n$ boxes with $l$ left-justified rows. The $i$-th row contains exactly $\mu^{(i)}$ boxes for all $i=1,2,\ldots,l$. 
 Further, we denote Young diagrams by the Greek letters. The set of all Young diagrams, with up to $n\in\mathbb{N}$ boxes, is denoted as $\mathbb{Y}_n$. The restriction to the set of Young diagrams with no more then $d$ rows is denoted as $\mathbb{Y}_{n,d}$. We endow $\mathbb{Y}_n,\mathbb{Y}_{n,d}$ with a structure of a partially ordered set by setting, for $\mu=(\mu^{(1)},\mu^{(2)},\ldots,\mu^{(l)})\vdash n$ and $\alpha=(\alpha^{(1)},\alpha^{(2)},\ldots,\alpha^{(s)})\vdash n-k$, 
\be
\alpha \preceq \mu, 
\ee
if $\mu^{(i)}\geq \alpha^{(i)}$ for all $i=1,2,\ldots,l$. If $\alpha \preceq \mu$ we denote by $\mu/\alpha$ the array, called also a skew shape, obtained by removing from the Young frame $\mu$ the boxes of the Young frame of $\alpha$. We have illustrated this procedure by an example presented in Figure~\ref{skewshape}. 
\begin{figure}[h]
	\includegraphics[width=\linewidth]{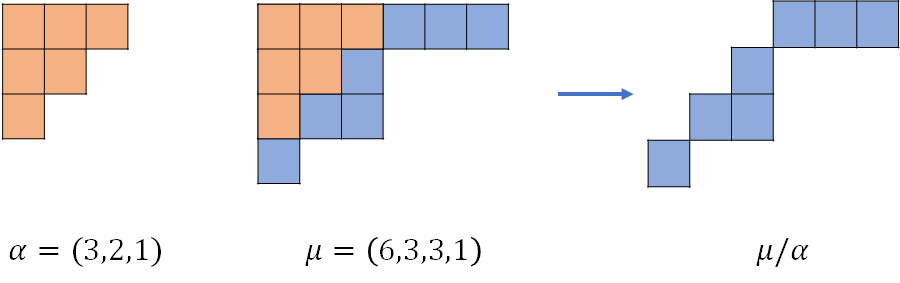}
	\caption{Graphics presents construction of a skew shape $\mu/\alpha$ for Young frames $\mu=(6,3,3,1)$ and $\alpha=(3,2,1)$. }
	\label{skewshape}
\end{figure}
For any $\alpha, \mu \in \mathbb{Y}_n$ we say that $\mu$ covers $\alpha$, or $\alpha$ is covered by $\mu$ if $\alpha \preceq \mu$ and
\be
\alpha \preceq \nu \preceq \mu, \quad \nu \in \mathbb{Y}_n \Rightarrow \nu=\alpha \ \text{or} \ \nu=\mu.
\ee
In other words, $\mu$ covers $\alpha$ if and only if $\alpha \preceq \mu$ and $\mu/\alpha$ consists of at least a single box.  Later we use an equivalent symbol  $\mu \in \alpha$ to denote Young diagrams $\mu \vdash n$ obtained from Young diagrams $\alpha \vdash n-k$ by adding $k$ boxes. While by the symbol $\alpha \in \mu$ we denote Young diagrams $\alpha \vdash n-k$ obtained from Young diagrams $\mu \vdash n$ by subtracting $k$ boxes. Informally it means that a Young diagram with $n-k$ boxes is contained in a Young diagram with $n$ boxes.
 Having the concept of Young diagram and sets $\mathbb{Y}_n, \mathbb{Y}_{n,d}$ we define Young's lattice and its reduced version (see Figure~\ref{bratteli}). 
\begin{figure}[h]
	\includegraphics[width=\linewidth]{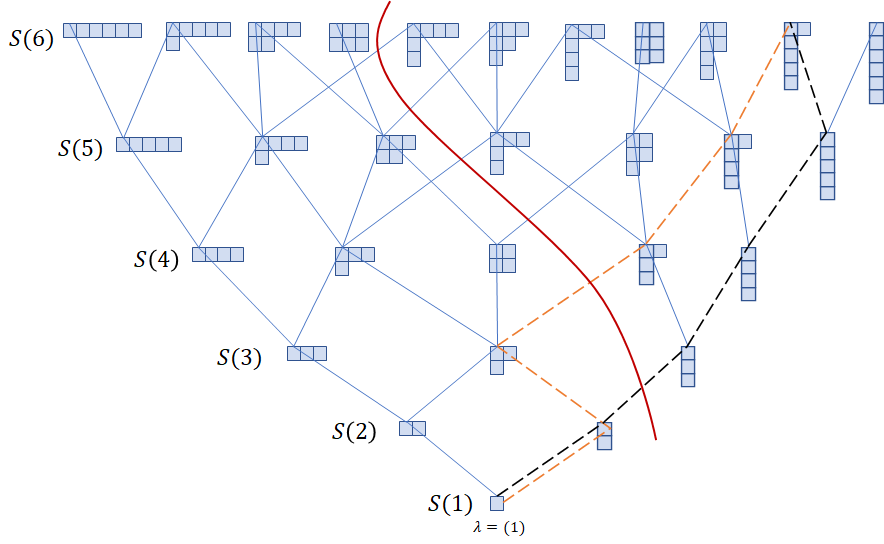}
	\caption{The Young's lattice $\mathbb{Y}_6$, i.e. with six consecutive layers labelled by permutation groups from $S(1)$ to $S(6)$. By orange and black dashed lines we depict two possible paths from irrep $\lambda=(1)$ of $S(1)$ to irrep $\lambda'=(2,1,1,1,1)$ of $S(6)$. The reduced Young's lattice $Y_{6,2}$, i.e. for $d=2$ is defined by all diagram on the right-hand side of the red line.}
	\label{bratteli}
\end{figure}
The Young's lattice arises when we construct subsequent Young diagrams by adding boxes one by  one. In this way we obtain subsequent layers of Young diagrams for growing $n$. 
We connect a diagram  with a subsequent diagram by an edge, that is obtained by 
adding a box.  More formally the Young's lattice of $\mathbb{Y}_n$ is the non-oriented graph with vertex set $\mathbb{Y}_n$ and an edge from $\lambda$ to $\mu$ if and only if $\lambda$ covers $\mu$. The same definition applies for Young's lattice of $\mathbb{Y}_{n,d}$, but we remove all Young diagrams with more than $d$ rows. A path $r_{\mu/\alpha}$ in the Young's lattice is a sequence $r_{\mu/\alpha}=(\mu\equiv\mu_n \vdash n \rightarrow \mu_{n-1} \vdash n-1 \rightarrow \cdots \rightarrow \alpha \equiv \mu_{n-k} \vdash n-k)$, for some $k\in\mathbb{N}$ and $k<n$.
The integer number $m_{\mu/\alpha}$ is the total lengths of all paths from $\mu$ to $\alpha$.

All irreducible representations (irreps) of $S(n)$ are labelled by Young diagrams with $n$ boxes denoted as $\alpha \vdash_d n$, where $d$ is a natural parameter, meaning we take into account diagrams with at most $d$ rows.
The dimension $d_{\alpha}$ of the irrep $\alpha$ is given by the hook length formula
\be
d_{\mu}=\frac{n!}{\mathop{\prod}\limits_{(i,j)\in \mu}h_{\mu}(i,j)},
\ee
where $h_{\mu}(i,j)$ is so called the hook length of the hook with corner at the box $(i,j)$ given as one plus the number of boxes below $(i,j)$ plus number of boxes to the right of $(i,j)$.
Multiplicity $m_{\mu, d}$ of every irrep $\alpha \vdash_d n$ is characterised by Weyl dimension formula, saying that
\be
m_{\mu, d}=\mathop{\prod}\limits_{1\leq i\leq j\leq d}\frac{\mu_i-\mu_j+j-i}{j-i}.
\ee
Later on we suppress notation to $\alpha \vdash n$ and $m_{\mu}$, having in mind the dependence of the natural parameter $d$, playing later the role of local dimension of the space $\mathbb{C}^d$.

Having commuting representations of $S(n)$ and $\mathcal{U}(d)$ on $(\mathbb{C}^d)^{\ot n}$, acting by permuting the tensor factors, and multiplication by $U^{\ot n}$ respectively, we can decompose the space $(\mathbb{C}^d)^{\ot n}$, using Schur-Weyl duality~\cite{Wallach} into direct sum of irreducible subspaces as follows:
\be
\label{SW}
(\mathbb{C}^d)^{\ot n}\cong \bigoplus_{\alpha \vdash_d n} \mathcal{U}_{\alpha}\ot \mathcal{S}_{\alpha}.
\ee
In the above $\mathcal{S}_{\alpha}$ are representation spaces for the permutation groups $S(n)$, while $\mathcal{U}_{\alpha}$ are representation spaces of $\mathcal{U}(d)$.
In Schur basis producing the decomposition~\eqref{SW} we can define in every space $\mathcal{S}_{\alpha}$ an orthonormal operator basis $E_{ij}^{\alpha}$, for $i,j=1,\ldots,d_{\alpha}$, separating the multiplicity and representation space of permutations respectively. Namely we have
\be
\label{schur_Eij}
E_{ij}^{\alpha}=\mathbf{1}_{\alpha}^{\mathcal{U}}\otimes |\alpha,i\>\<\alpha,j|\equiv \mathbf{1}_{\alpha}^{\mathcal{U}}\otimes |i\>\<j|_{\alpha}.
\ee 
We can also use representation of $E^{\alpha}_{ij}$ on the space $(\mathbb{C}^d)^{\ot n}$, which is of the form
\be
\label{Eij}
E_{ij}^{\alpha}=\frac{d_{\alpha}}{n!}\sum_{\tau \in S(n)}\phi_{ji}^{\alpha}(\tau^{-1})V_{\tau},
\ee
where $\phi_{ji}^{\alpha}(\tau^{-1})$ denotes matrix element irreducible representation of the permutation $\tau^{-1}\in S(n)$.
The operators from~\eqref{schur_Eij} have the following properties
\be
\label{tr_prop}
E_{ij}^{\alpha}E_{kl}^{\beta}=\delta^{\alpha \beta}\delta_{jk}E^{\alpha}_{il},\quad \tr E^{\alpha}_{ij}=m_{\alpha}\delta_{ij}.
\ee
Let us observe that operators $E^{\alpha}_{ii}$ are projectors. Action of the operators $E_{ij}^{\alpha}$ on an arbitrary permutation operator $V_{\sigma}$, from the left and from the right-hand side, for $\sigma\in S(n)$ is given by
\be
\label{actionE}
E_{ij}^{\alpha}V_{\sigma}=\sum_k \varphi_{jk}^{\alpha}(\sigma)E_{ik}^{\alpha},\quad V_{\sigma}E_{ij}^{\alpha}=\sum_k\varphi_{ki}^{\alpha}(\sigma)E_{kj}^{\alpha}.
\ee
Using this basis we can write matrix representation of a given permutation $\tau^{-1}\in S(n)$, on every irreducible space labelled by $\alpha \vdash n$ as
\be
\label{blaa}
\phi^{\alpha}(\tau^{-1})=\sum_{ij}\phi_{ij}^{\alpha}(\tau^{-1})E_{ij}^{\alpha}.
\ee
 Moreover, using this operators we construct Young projectors, the projectors on components $\mathcal{U}_{\alpha}\ot \mathcal{S}_{\alpha}$ from~\eqref{SW}:
\be
\label{def_P}
P_{\alpha}=\sum_i E^{\alpha}_{ii}=\frac{d_{\alpha}}{n!}\sum_{\tau \in S(n)}\chi^{\alpha}(\tau^{-1})V_{\tau}.
\ee
The numbers $\chi^{\alpha}(\tau^{-1})=\sum_i \phi_{ii}^{\alpha}(\tau^{-1})$ are irreducible characters.

Sometimes instead of $|\alpha,i\>$ we write $|\alpha,i_{\alpha}\>$ or just $|i_{\alpha}\>$. Defining $c=\sum_{\alpha}m_{\alpha}d_{\alpha}$ any operator  $X\in M(c\times c, \mathbb{C})$ can be written using elements $\{E_{ij}^{\alpha}\}$ as $X=\sum_{\alpha}\sum_{i,j=1}^{d_{\alpha}}x_{ij}^{\alpha}E_{ij}^{\alpha}.$
Considering $n-$particle system, by writing $V_{n-1,n}$ we understand $\mathbf{1}_{1\ldots n-2}\otimes V_{n-1,n}$, and similarly for other operators. The operator $\mathbf{1}_{1\ldots n-2}$ is identity operator on first $n-2$ particles.
\section{Preliminary Mathematical Results}
\label{preliminary}
\subsection{Partial trace over Young projectors}
In this section we present a set of auxiliary lemmas which are crucial for the presentation in the further sections. 
Introducing notation $\tr_{(k)}=\tr_{n-2k+1,\ldots,n-k}$, denoting the partial trace over a set $\{n-2k+1,\ldots,n-k\}$ of particles,  and  recalling the notation
\be
\label{parV}
V^{(k)}:=V^{t_{n}}_{n-2k+1,n}V^{t_{n-1}}_{n-2k+2,n-1}\cdots V^{t_{n-k+1}}_{n-k,n-k+1}
\ee
we start from formulation of the following:
\begin{fact}
	\label{kpartr}
	For any operator acting on $n-k$ systems we have the following equality
	\be
	V^{(k)}X\ot \mathbf{1}_{n-k+1\ldots n}V^{(k)}=\tr_{(k)}(X)V^{(k)},
	\ee
	where $\mathbf{1}_{n-k+1\ldots n}$ is the identity operator on $k$ last subsystems, while the operator $X$ on first $n-k$.
\end{fact}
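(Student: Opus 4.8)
The plan is to reduce the general statement to the $k=1$ case, which is the standard fact that $V^{t_n}_{a,n}\,(X\otimes\mathbf 1_n)\,V^{t_n}_{a,n}=\tr_n(X)\,V^{t_n}_{a,n}$ for $X$ acting on the first $n-1$ systems, proved directly in components: writing $V^{t_n}_{a,n}=\sum_{i,j}|i\rangle\langle j|_a\otimes |i\rangle\langle j|_n$ (so that it equals $d\,P^+_{a,n}$), one sandwiches $X\otimes\mathbf 1$ and uses $\langle j|X|j'\rangle$ summed against $\langle j|\,\cdot\,|j'\rangle$ on the $n$-th leg to collapse the double sum, leaving $\sum_{i,j'}\big(\sum_j X_{jj'}^{(a)}\big)\cdots$, i.e. exactly $\tr_n(X)V^{t_n}_{a,n}$. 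The only subtlety here is that $X$ acts on \emph{all} of the first $n-1$ systems, not just system $a$; but since the contraction happens only on legs $a$ and $n$, and $\tr_n$ is the partial trace over the $n$-th leg alone, this is immediate from linearity after expanding $X$ in a product basis on the first $n-1$ factors.

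For general $k$, the key observation is that the $k$ factors making up $V^{(k)}$ act on \emph{pairwise disjoint} pairs of systems: the pair $(n-2k+1,n)$, the pair $(n-2k+2,n-1)$, \dots, the pair $(n-k,n-k+1)$. Hence the operators $V^{t_n}_{n-2k+1,n},\ldots,V^{t_{n-k+1}}_{n-k,n-k+1}$ all commute with one another, and $V^{(k)}$ is literally a tensor product (up to embedding) of $k$ commuting ``entangling'' factors. I would then peel them off one at a time: apply the $k=1$ identity to the factor on the pair $(n-k,n-k+1)$ first, tracing out system $n-k+1$ and replacing $X\otimes\mathbf 1_{n-k+1\ldots n}$ by $\tr_{n-k+1}(X)$ on the remaining systems, then to the factor on $(n-k+1\text{-removed},\,n-2)$\,—\,more precisely, iterate over the pairs $(n-k,n-k+1),(n-k-1+? \ldots)$ in the right order so that at each stage the already-contracted auxiliary systems have been eliminated and the operator sitting ``in the middle'' still has the identity on the legs about to be contracted. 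After $k$ steps one has traced out exactly the systems $\{n-k+1,\ldots,n\}$\,—\,wait, one must be careful: the statement traces out $\{n-2k+1,\ldots,n-k\}$ via $\tr_{(k)}$, and the factors are $V^{t_{n-j}}_{n-2k+j,n-j}$ for $j=1,\ldots,k$, so tracing out leg $n-j$ at step $j$ combined with the transpose $t_{n-j}$ on that leg is what produces $\tr_{n-2k+j}$ acting on the ``paired'' leg; collecting all $k$ steps yields $\tr_{n-2k+1,\ldots,n-k}(X)=\tr_{(k)}(X)$ sitting in front of the surviving product of factors, which is again $V^{(k)}$.

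The main obstacle, and the only place requiring genuine care, is bookkeeping the index ranges and the embeddings: one must check that at each stage of the iteration the operator whose identity-legs are being contracted really is of the form $(\text{something})\otimes\mathbf 1_{(\text{legs to contract})}$, so that the $k=1$ lemma applies verbatim, and that the partial transposes on distinct legs never interfere because they act on disjoint subsystems. I would handle this cleanly by induction on $k$: strip off the outermost factor $V^{t_{n-k+1}}_{n-k,n-k+1}$ using the base case (noting $X\otimes\mathbf 1_{n-k+1\ldots n}$ has $\mathbf 1$ on leg $n-k+1$), obtain $\tr_{n-k+1}(X)$ which now acts on systems $\{1,\ldots,n-k\}\setminus\{n-k+1\}$ together with the still-present identities, relabel, and invoke the inductive hypothesis for $V^{(k-1)}$ on the appropriate $n-1$ systems. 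The representation-theoretic content is nil; it is purely a statement about how partially transposed swaps compose, and the proof is a controlled index computation.
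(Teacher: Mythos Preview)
Your approach is essentially the paper's: prove the $k=1$ case by a direct computation with $V^{t_n}_{n-1,n}=dP^+_{n-1,n}$, then iterate over the $k$ disjoint pairs. One correction on bookkeeping: in the base case the sandwich produces $\tr_{n-1}(X)$, not $\tr_n(X)$ (the leg that gets traced is the one inside the support of $X$, namely the ``inner'' partner of the pair), and likewise at each step you trace the inner leg $n-2k+j$, not the outer leg $n-j$; you eventually self-correct to this, but the write-up should state it cleanly from the start.
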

In particular cases, when $k=1,2$, we have respectively
\be
\label{kpareq}
\begin{split}
	V^{(1)}X\otimes \mathbf{1}_nV^{(1)}&=\tr_{n-1}(X)V^{(1)},\\ V^{(2)}X\ot \mathbf{1}_{n-1,n}V^{(2)}&=\tr_{n-3,n-2}(X)V^{(2)}.
\end{split}
\ee

Now we prove Fact~\ref{kpartr}:
\begin{proof}
	It is enough to show that expression~\eqref{kpareq} holds for $k=1$ and then use the argumentation below iteratively. Using identity $V^{(1)}=dP_+$, where $P_+$ is the projector on maximally entangled state and $d$ is the dimension of local Hilbert space, we write
	\be
	\begin{split}
		V^{(1)}&(X\ot \mathbf{1}_n)V^{(1)}=\\
					 &d^2(\mathbf{1}_{1,\dots,n-2}\ot P_+)\left(\sum_{ij}X_{ij}^{1,\dots,n-2}\ot e_{ij}^{(n-1)}\ot \mathbf{1}_n \right)(\mathbf{1}_{1,\dots,n-2}\ot P_+), 
	\end{split}
	\ee 
	where  $\{e_{ij}^{(n-1)}\}$ is standard operator  basis on subsystem $n-1$ and $X_{ij}^{1,\dots,n-2}$ are operators on subsystems from $1$ to $n-2$.	Using explicit form of $P_+=(1/d)\sum_{k,l}e_{kl}^{(n-1)}\ot e_{kl}^{(n)}$ and $X_{ij}^{(1,\dots,n-2)}$ we get~\eqref{eq:long}, given at the top of the next page,
	where $x_{i_1j_1,\dots,i_{n-1}j_{n-1}}$ denotes matrix elements of X.
\newcounter{MYtempeqncnt}

\begin{figure*}[t!]
	\normalsize
	\setcounter{MYtempeqncnt}{\value{equation}}
	\setcounter{equation}{40}
	\begin{align}
		V^{(1)}(X\ot \mathbf{1}_n)V^{(1)}&=\sum_{kl}\left(\mathbf{1}_{1,\dots,n-2}\otimes e_{kl}^{(n-1)} \otimes e_{kl}^{(n)}\right) \left(\sum_{ij}X_{ij}^{1,\dots,n-2}\ot e_{ij}^{(n-1)}\ot \mathbf{1}_n \right)\sum_{pq} \left(\mathbf{1}_{1,\dots,n-2}\otimes e_{pq}^{(n-1)} \otimes e_{pq}^{(n)}\right) \label{eq:long}\\
	&=\sum_{kq}  \left(\sum_{\substack{i_1,\dots,i_{n-2},p,\\ j_1, \dots, j_{n-2}, p}} x_{i_1j_1,\dots,i_{n-2}j_{n-2},pp} e_{i_1j_1}^{(1)}\otimes\dots\otimes e_{i_{n-2}j_{n-2}}^{(n-2)}\right)\otimes e_{kq}^{(n-1)}\otimes e_{kq}^{(n)}\nonumber\\
	&=\sum_{kq}\left(\sum_p X_{pp}^{1,\dots,n-2}\right)\otimes e_{kq}^{(n-1)}\otimes e_{kq}^{(n)}\nonumber\\
	&=\tr_{n-1}(X)V^{(1)}\nonumber
	\end{align}
	\setcounter{equation}{\value{MYtempeqncnt}+1}
	\hrulefill
	\vspace*{4pt}
\end{figure*}
\end{proof}
\begin{fact}
	\label{f1a}
	Let $\{|i\>_{\alpha}\}_{i=1}^{d_{\alpha}}$ denote a basis in an irrep $\alpha \vdash n$ of dimension $d_{\alpha}$. Then for any operator  $X^{\alpha}=\sum_{ij}x^{\alpha}_{ij}|i\>\<j|_{\alpha}$ acting on the space $\mathbb{C}^{d_{\alpha}}$, we have
	\be
	\label{ef1}
	\sum_{\tau\in S(n)}\tr \left(X^{\alpha} \phi^{\alpha}(\tau^{-1}) \right)V_{\tau}=\frac{n!}{d_{\alpha}}\mathbf{1}_{\alpha}^U\otimes X^{\alpha}.
	\ee
\end{fact}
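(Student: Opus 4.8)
The identity is essentially a restatement of formula~\eqref{Eij} after expanding $X^\alpha$ in the elementary basis $\{|i\>\<j|_\alpha\}$. The plan is to reduce the left-hand side to a linear combination of the operators $E_{ij}^\alpha$ and then invoke~\eqref{schur_Eij}.

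First I would write $X^\alpha=\sum_{ij}x_{ij}^\alpha|i\>\<j|_\alpha$ and compute the matrix trace appearing in the sum: since $\phi^\alpha(\tau^{-1})$ is the $d_\alpha\times d_\alpha$ matrix with entries $\phi_{ji}^\alpha(\tau^{-1})$, linearity of the trace gives
\be
\tr\!\left(X^\alpha\phi^\alpha(\tau^{-1})\right)=\sum_{ij}x_{ij}^\alpha\,\<j|\phi^\alpha(\tau^{-1})|i\>=\sum_{ij}x_{ij}^\alpha\,\phi_{ji}^\alpha(\tau^{-1}).
\ee
Substituting this into the left-hand side and interchanging the finite sums over $\tau\in S(n)$ and over $i,j$ yields
\be
\sum_{\tau\in S(n)}\tr\!\left(X^\alpha\phi^\alpha(\tau^{-1})\right)V_\tau=\sum_{ij}x_{ij}^\alpha\sum_{\tau\in S(n)}\phi_{ji}^\alpha(\tau^{-1})V_\tau .
\ee

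Next I would recognize the inner sum: by~\eqref{Eij} we have $E_{ij}^\alpha=\frac{d_\alpha}{n!}\sum_{\tau\in S(n)}\phi_{ji}^\alpha(\tau^{-1})V_\tau$, so $\sum_{\tau}\phi_{ji}^\alpha(\tau^{-1})V_\tau=\frac{n!}{d_\alpha}E_{ij}^\alpha$. Plugging this back in and using the explicit form $E_{ij}^\alpha=\mathbf{1}_\alpha^{\mathcal U}\otimes|i\>\<j|_\alpha$ from~\eqref{schur_Eij}, one obtains
\be
\sum_{\tau\in S(n)}\tr\!\left(X^\alpha\phi^\alpha(\tau^{-1})\right)V_\tau=\frac{n!}{d_\alpha}\sum_{ij}x_{ij}^\alpha\,\mathbf{1}_\alpha^{\mathcal U}\otimes|i\>\<j|_\alpha=\frac{n!}{d_\alpha}\,\mathbf{1}_\alpha^{\mathcal U}\otimes X^\alpha ,
\ee
which is the claimed equality. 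There is no genuine obstacle here: the only subtlety to be careful about is the index placement (the transpose $ji$ versus $ij$) in matching $\tr(X^\alpha\phi^\alpha(\tau^{-1}))$ against the coefficient $\phi_{ji}^\alpha(\tau^{-1})$ in~\eqref{Eij}, and the fact that~\eqref{Eij} is used in the form solved for the permutation-operator sum rather than the other way around.
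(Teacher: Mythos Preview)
Your proof is correct and follows essentially the same approach as the paper: expand the trace, recognize the resulting permutation-operator sum as $\frac{n!}{d_\alpha}E_{ij}^\alpha$ via~\eqref{Eij}, and then apply~\eqref{schur_Eij}. The paper routes the calculation through the expansion~\eqref{blaa} and the intermediate quantity $\tr(X^\alpha E_{ij}^\alpha)$, whereas you expand $X^\alpha$ directly into matrix units; this makes your version slightly more transparent with the index bookkeeping, but the underlying argument is identical.
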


\begin{proof}
	Inserting expression~\eqref{blaa} into~\eqref{ef1} we obtain
	\be
	\begin{split}
		\sum_{\tau\in S(n)}\tr \left(X^{\alpha} \phi^{\alpha}(\tau^{-1}) \right)V_{\tau}& 
		=\sum_{\tau\in S(n)}\tr \left(X^{\alpha}\sum_{ij}\phi_{ij}^{\alpha}(\tau^{-1})E_{ij}^{\alpha}\right)V_{\tau}\\
		& =\sum_{ij}\tr\left(X^{\alpha} E_{ij}^{\alpha}\right)\sum_{\tau\in S(n)}\phi_{ij}^{\alpha}(\tau^{-1})V_{\tau}\\
		& =\frac{n!}{d_{\alpha}}\sum_{ij}\tr\left(X^{\alpha} E_{ij}^{\alpha}\right)E_{ji}^{\alpha}.
	\end{split}
	\ee
	In Schur basis every operator $E_{ij}^{\alpha}$ has a form $\mathbf{1}^U_{\alpha}\otimes |i\>\<j|_{\alpha}$. This allows us to write
	\be
	\begin{split}
		&\frac{n!}{d_{\alpha}}\sum_{ij}\tr\left( X^{\alpha} E_{ij}^{\alpha}\right)E_{ji}^{\alpha}\\
		&=\frac{n!}{m_{\alpha}d_{\alpha}}\sum_{ij}\tr\left[ \left( \mathbf{1}^{U}_{\alpha}\otimes X^{\alpha}\right) \left(  \mathbf{1}^{U}_{\alpha}\otimes |i\>\<j|_{\alpha}\right) \right] \left(\mathbf{1}^{U}_{\alpha}\otimes |j\>\<i|_{\alpha} \right)\\
		&=\frac{n!}{d_{\alpha}}\mathbf{1}^U_{\alpha}\ot \tr\left(X^{\alpha}|i\>\<j|_{\alpha}\right) |j\>\<i|_{\alpha}=\frac{n!}{d_{\alpha}}\mathbf{1}^U_{\alpha}\ot X^{\alpha}.
	\end{split}
	\ee
\end{proof}

Now, let us introduce the following objects:
\be
\label{obj}
\widetilde{\phi}^{\mu}(a,n):= d^{\delta_{a,n}}\phi^{\mu}(a,n),\quad \text{and} \quad \widetilde{V}_{a,n}:= d^{\delta_{a,n}}V_{a,n},
\ee
where $\phi^{\mu}(a,n)$ is a matrix representation of permutation $V_{a,n}$ on irrep $\mu \vdash n$.
 Having that we can formulate the following
\begin{lemma}
	\label{object}
	Let us denote by $\mathbf{1}^{\mu}_{\gamma}$ the identity on an irrep $\gamma$ of $S(n-1)$ contained in irrep $\mu$ of $S(n)$, then we have the following restriction of $\widetilde{\phi}^{\mu}(a,n)$ to irrep $\beta$ of $S(n-1)$
	\be
	\left[ \sum_{a=1}^{n}\widetilde{\phi}^{\mu}(a,n)\right]_{\beta}=x^{\mu}_{\beta}\mathbf{1}^{\mu}_{\beta}\quad with \quad x_{\beta}^{\mu}=n\frac{m_{\mu}d_{\beta}}{m_{\beta}d_{\mu}}.
	\ee
\end{lemma}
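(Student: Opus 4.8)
The plan is to recognise the quantity $\sum_{a=1}^{n}\widetilde{\phi}^{\mu}(a,n)$ as a (weighted) Jucys--Murphy operator and to identify its action blockwise by Schur's lemma. The first point to notice is that the term $a=n$ is exceptional: the ``transposition'' $(n,n)$ is the identity permutation, so $\phi^{\mu}(n,n)=\mathbf{1}_{\mu}$ and hence $\widetilde{\phi}^{\mu}(n,n)=d\,\mathbf{1}_{\mu}$, while $\widetilde{\phi}^{\mu}(a,n)=\phi^{\mu}((a,n))$ for $a<n$. Writing $X_{n}:=\sum_{a=1}^{n-1}(a,n)$ for the $n$-th Jucys--Murphy transposition sum, we thus have $\sum_{a=1}^{n}\widetilde{\phi}^{\mu}(a,n)=d\,\mathbf{1}_{\mu}+\phi^{\mu}(X_{n})$, and everything reduces to describing how $\phi^{\mu}(X_{n})$ acts on each subspace $\mathbf{1}^{\mu}_{\beta}$.

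Next I would check that this operator is scalar on each block. For $\sigma\in S(n-1)$ conjugation sends $(a,n)\mapsto(\sigma(a),n)$, and since $\sigma$ permutes $\{1,\dots,n-1\}$ this merely permutes the summands of $X_{n}$, so $\sigma X_{n}\sigma^{-1}=X_{n}$; as $\mathbf{1}_{\mu}$ is central, $\sum_{a}\widetilde{\phi}^{\mu}(a,n)$ commutes with $\phi^{\mu}(\sigma)$ for every $\sigma\in S(n-1)$. By Young's branching rule (the structure of the Young lattice recalled in Section~\ref{defs}) the restriction of $\mathcal{S}_{\mu}$ to $S(n-1)$ is multiplicity-free, $\mathcal{S}_{\mu}=\bigoplus_{\beta\preceq\mu}\mathcal{S}^{\mu}_{\beta}$ with pairwise non-isomorphic irreducible summands; Schur's lemma then forces $\sum_{a}\widetilde{\phi}^{\mu}(a,n)$ to act as a scalar $x^{\mu}_{\beta}$ on each $\mathcal{S}^{\mu}_{\beta}$, i.e.\ $\bigl[\sum_{a}\widetilde{\phi}^{\mu}(a,n)\bigr]_{\beta}=x^{\mu}_{\beta}\,\mathbf{1}^{\mu}_{\beta}$.

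To pin down $x^{\mu}_{\beta}$ I would invoke the classical Jucys--Murphy eigenvalue: $\phi^{\mu}(X_{n})$ restricted to $\mathcal{S}^{\mu}_{\beta}$ equals $c(\mu/\beta)\,\mathbf{1}^{\mu}_{\beta}$, where $c(\mu/\beta)=j_{0}-i_{0}$ is the content of the unique box $(i_{0},j_{0})$ of the skew shape $\mu/\beta$ (see \cite{Ju,Mu}). If one prefers a self-contained argument, this follows from $X_{n}=\sum_{1\le i<j\le n}(i,j)-\sum_{1\le i<j\le n-1}(i,j)$, since each of these two class sums is central in the respective group algebra and acts on an irrep $\nu$ by the content sum $\sum_{\square\in\nu}c(\square)$, and the difference of the two content sums is exactly $c(\mu/\beta)$; alternatively one takes the trace of the block identity against $\mathbf{1}^{\mu}_{\beta}=\tfrac{d_{\beta}}{(n-1)!}\sum_{\sigma\in S(n-1)}\chi^{\beta}(\sigma^{-1})\phi^{\mu}(\sigma)$ (the analogue of \eqref{def_P} for $S(n-1)$) and uses $S(n-1)$-character orthogonality. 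Either way, $x^{\mu}_{\beta}=d+c(\mu/\beta)$.

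It remains to match $d+c(\mu/\beta)$ with the asserted value $n\,m_{\mu}d_{\beta}/(m_{\beta}d_{\mu})$. Using the hook--content form of the Weyl dimension formula, $m_{\nu}=\prod_{(i,j)\in\nu}\frac{d+j-i}{h_{\nu}(i,j)}$, together with the hook length formula $d_{\nu}=|\nu|!\,/\!\prod_{(i,j)\in\nu}h_{\nu}(i,j)$, divide the $\mu$-quantities by the $\beta$-quantities: the content factors differ only by the new box, contributing $d+j_{0}-i_{0}$, while $\prod_{(i,j)\in\beta}h_{\beta}(i,j)\big/\prod_{(i,j)\in\mu}h_{\mu}(i,j)=\bigl((n-1)!/d_{\beta}\bigr)/\bigl(n!/d_{\mu}\bigr)=d_{\mu}/(n\,d_{\beta})$ — note that working with the \emph{full} hook products means the individual hook-length changes never have to be tracked. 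Hence $m_{\mu}/m_{\beta}=(d+j_{0}-i_{0})\,d_{\mu}/(n\,d_{\beta})$, that is $x^{\mu}_{\beta}=d+c(\mu/\beta)=n\,m_{\mu}d_{\beta}/(m_{\beta}d_{\mu})$, as claimed. The main obstacle is the third step, the evaluation of the scalar $x^{\mu}_{\beta}$; the surrounding Schur-lemma reduction and the dimension bookkeeping are routine, but one must be careful that the isolated $a=n$ term contributes precisely $d\,\mathbf{1}_{\mu}$, since that extra factor of $d$ is exactly what converts a bare content eigenvalue into the ratio appearing in the lemma.
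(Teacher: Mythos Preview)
Your argument is correct and takes a genuinely different route from the paper. Both proofs begin identically, observing that $\sum_{a=1}^{n}\widetilde{\phi}^{\mu}(a,n)=d\,\mathbf{1}_{\mu}+\phi^{\mu}(X_{n})$ commutes with $S(n-1)$ and therefore acts as a scalar $x^{\mu}_{\beta}$ on each multiplicity-free block $\mathcal{S}^{\mu}_{\beta}$. The divergence is in how the scalar is identified. The paper never mentions Jucys--Murphy elements or contents: instead it expresses the Young projector $P_{\mu}$ via the coset decomposition $S(n)=\bigcup_{a}(a,n)S(n-1)$, applies Fact~\ref{f1a} to each $S(n-1)$-block, takes the partial trace over the $n$-th system to obtain $\tr_{n}P_{\mu}=\tfrac{1}{n}\sum_{\beta\in\mu}\tfrac{d_{\mu}}{d_{\beta}}x^{\mu}_{\beta}P_{\beta}$, and then reads off $x^{\mu}_{\beta}$ from the known overlap $\tr(P_{\beta}P_{\mu})=m_{\mu}d_{\beta}$. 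You instead use the classical content eigenvalue $\phi^{\mu}(X_{n})\big|_{\beta}=c(\mu/\beta)\,\mathbf{1}^{\mu}_{\beta}$ and then verify the combinatorial identity $d+c(\mu/\beta)=n\,m_{\mu}d_{\beta}/(m_{\beta}d_{\mu})$ via the hook--content and hook--length formulas.

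What each buys: your approach yields the extra information $x^{\mu}_{\beta}=d+c(\mu/\beta)$, a closed form in the box content that is pleasant for asymptotics and connects naturally to the Okounkov--Vershik framework the paper alludes to in the introduction; it also makes the proof essentially one line once the Jucys--Murphy spectrum is quoted. The paper's approach is more self-contained within the Schur--Weyl setting (no external content formula is needed) and, as a by-product, produces the partial-trace relation $\tr_{n}P_{\mu}=\sum_{\beta\in\mu}\tfrac{m_{\mu}}{m_{\beta}}P_{\beta}$, which is precisely the $k=1$ seed for Lemma~\ref{L3} and Corollary~\ref{corL3} used throughout the rest of the paper.
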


\begin{proof}
	Consider
	\be
	\sum_{a=1}^{n}\widetilde{\phi}^{\mu}(a,n)=\sum_{a=1}^{n-1}\phi^{\mu}(a,n)+d\mathbf{1}_{\mu}
	\ee
	which is clearly invariant with respect to $S(n-1)$. Hence it admits the decomposition
	\be
	\sum_{a=1}^{n}\widetilde{\phi}(a,n)=\sum_{\gamma \in \mu}x_{\gamma}^{\mu}\mathbf{1}^{\mu}_{\gamma}
	\ee
	for some $x_{\gamma}^{\mu}\in \mathbb{C}$. The restriction for  chosen irrep $\beta \in \mu$ reduces the above to
	\be
	\label{x3}
	\left[ \sum_{a=1}^{n}\widetilde{\phi}^{\mu}(a,n)\right]_{\beta}=x^{\mu}_{\beta}\mathbf{1}^{\mu}_{\beta}.
	\ee
	Now our goal is to compute the unknown coefficients $x_{\gamma}^{\mu}$.  To do so let us first observe that we can write every projector $P_{\mu}$ in terms of coset elements $\phi(a, n)$ and permutations from $S(n-1)$. Indeed we have
	\be
	\label{P1}
	\begin{split}
		P_{\mu}&=\frac{d_{\mu}}{n!}\sum_{\sigma \in S(n)}\tr\left[  \phi^{\mu}(\sigma^{-1})\right] V_{\sigma}\\
					 &=\frac{d_{\mu}}{n!}\sum_{a=1}^{n}\sum_{\tau \in S(n-1)}\tr\left[\phi^{\mu}((a, n)\circ \tau^{-1}) \right]V_{a,n}V_{\tau}.
	\end{split}
	\ee
	Since every representation is a homomorphism we have $\phi^{\mu}((a, n)\circ \tau^{-1}) =\phi^{\mu}((a, n))\phi^{\mu}(\tau^{-1})$. Moreover, because $\tau \in S(n-1)$ and $\mu \vdash n$, representation $\phi^{\mu}(\tau^{-1})$ has to be block diagonal in $\alpha \vdash n-1$:
	\be
	\label{phi}
	\phi^{\mu}(\tau^{-1})=\bigoplus_{\alpha \in \mu}\phi^{\alpha}(\tau^{-1}),
	\ee
	where the symbol $\alpha \in \mu$ denotes all Young frames obtained from $\mu$ by removing a single box. Denoting by $\mathbf{1}_{\mu},\mathbf{1}_{\alpha}$ identities on irreps $\mu \vdash n$ and $\alpha \vdash n-1$ respectively, for which $\alpha \in \mu$ holds, we write $\mathbf{1}_{\mu}=\bigoplus_{\alpha \in \mu}\mathbf{1}_{\alpha}$. Applying this identity together with~\eqref{phi} to equation~\eqref{P1} we rewrite as
	\be
	\label{P2}
	\begin{split}
		P_{\mu}&=\frac{d_{\mu}}{n!}\sum_{a=1}^{n}V_{a,n}\sum_{\alpha \in \mu}\sum_{\tau \in S(n-1)}\tr\left(\left[\phi^{\mu}(a,n) \right]_{\alpha}\phi^{\alpha}(\tau^{-1}) \right) V_{\tau},
	\end{split}
	\ee
	where $\left[\phi^{\mu}(a,n) \right]_{\alpha}\equiv \mathbf{1}_{\alpha}\phi^{\mu}(a,n)\mathbf{1}_{\alpha}$. Using Fact~\ref{f1a} to expression~\eqref{P2} we have
	\be
	\label{Pan}
	\begin{split}
		P_{\mu}&=\frac{1}{n}\sum_{a=1}^{n}V_{a,n}\sum_{\alpha \in \mu}\frac{d_{\mu}}{d_{\alpha}}\left(\mathbf{1}_{\alpha}^U\ot \left[\phi^{\mu}(a,n) \right]_{\alpha} \right). 	
	\end{split}
	\ee
	Having~\eqref{Pan} and definitions~\eqref{obj}, together with~\eqref{x3}, and $\tr_n V(a,n)=d^{\delta_{a,n}}\mathbf{1}_{1\ldots n-1}$, we write
	\be
	\label{x2}
	\begin{split}
		\tr_{n-1}P_{\mu}& =\frac{1}{n}\sum_{\beta \in \mu=}\frac{d_{\mu}}{d_{\beta}}\mathbf{1}_{\beta}^{U}\ot \left[\sum_{a=1}^{n}\widetilde{\phi}^{\mu}(a,n) \right]_{\beta}\\
		& =\frac{1}{n}\sum_{\beta \in \mu}\frac{d_{\mu}}{d_{\beta}}x_{\beta}^{\mu}\mathbf{1}_{\beta}^U\ot \mathbf{1}_{\beta}^S=\frac{1}{n}\sum_{\beta \in \mu}\frac{d_{\mu}}{d_{\beta}}x_{\beta}^{\mu}P_{\beta}.
\end{split}
	\ee
	Using~\eqref{x2}, property $\tr(P_{\beta}P_{\mu})=m_{\mu}d_{\beta}$, for $\mu \vdash n, \beta \vdash n-1$, and
	\be
	\begin{split}
		\tr\left( P_{\beta}P_{\mu}\right)&=\frac{1}{n}\sum_{\gamma \in \mu}\frac{d_{\mu}}{d_{\gamma}}x_{\gamma}^{\mu}\tr\left(P_{\gamma}P_{\beta} \right)\\
																		 &=\frac{1}{n}\frac{d_{\mu}}{d_{\beta}}x^{\mu}_{\beta}d_{\beta}m_{\beta}=\frac{1}{n}d_{\mu}m_{\beta}x^{\mu}_{\beta}  
\end{split}
	\ee
	we deduce that
	\be
	\label{coeff}
	x_{\beta}^{\mu}=n\frac{m_{\mu}d_{\beta}}{m_{\beta}d_{\mu}}.
	\ee
	This finishes the proof.
\end{proof}

\subsection{A new summation rule for irreducible representations and PRIR (Partially Reduced Irreducible Representation) notation}
\label{prir}
Let $H\subset S(n)$ be an arbitrary subgroup of $S(n)$ with transversal $%
T=\{\tau _{k}:k=1,\ldots,\frac{n!}{|H|}\}$, i.e. we have 
\be
S(n)=\bigcup _{k=1}^{\frac{n!}{|H|}}\tau _{k}H. 
\ee
For the further purposes, we can also introduce simplified notation
\begin{notation}
\label{not0}
Let us take $\mu \vdash n$ and $\alpha \vdash n-k$, for $k<n$.
By index $r_{\mu/\alpha}$ we denote a path on Young's lattice from diagram $\mu$ to $\alpha$. This path is uniquely determined by choosing a chain of covered young frames from $\mu$ to $\alpha$, differencing by one box in each step:
\be
r_{\mu/\alpha}=(\mu,\mu_{n-1},\ldots,\mu_{n-k+1},\alpha)
\ee 
and
\be
\mu \ni \mu_{n-1} \ni \cdots \ni \mu_{n-k+1} \ni \alpha.
\ee
\end{notation}
Consider an arbitrary unitary irrep $\phi ^{\mu }$ of $S(n)$. It
can be always unitarily transformed to $PRIR$ $\phi _{R}^{\mu },$ such that 
\be
\label{M1}
\forall \varkappa \in H\quad \phi _{R}^{\mu }(\varkappa )=\bigoplus _{\alpha
	\in \mu ,r_{\mu/\alpha}}\varphi ^{\alpha ,r_{\mu/\alpha}}(\varkappa )\equiv \bigoplus _{r_{\mu/\alpha}}\varphi ^{r_{\mu/\alpha}}(\varkappa ),
\ee
where $\alpha $ labels the type of if a irrep of $H$ and $r_{\mu/\alpha}$ denotes path on Young's lattice from $\mu$ to $\alpha$. It means that element $\varphi ^{\alpha ,r_{\mu/\alpha}}(\varkappa )$ is repeated $|\mathcal{R}_{\mu/\alpha}|=m_{\mu/\alpha}$ times, where $\mathcal{R}_{\mu/\alpha}$ is the set composed of all paths $r_{\mu/\alpha}$ from $\mu$ to $\alpha$. Whenever it is clear from the context we write just $\varphi ^{\alpha }(\varkappa )$  instead of $\varphi ^{\alpha ,r_{\mu/\alpha}}(\varkappa )$. Diagonal blocks in the decomposition~\eqref{M1} are labelled and in fact
ordered by the two indices $\alpha ,r_{\mu/\alpha}$. The $PRIR$
representation of $S(n)$, reduced to the subgroup $H$, has block diagonal form
of completely reduced representation, which in matrix notation takes the form
\be
\label{M2}
\forall \varkappa \in H\quad (\phi _{R}^{\mu })_{i_{\alpha }\quad 
	j_{\beta }}^{r_{\mu/\alpha}, \widetilde{r}_{\mu/\beta}}(\varkappa )=\delta
^{r_{\mu/\alpha} \widetilde{r}_{\mu/\beta}}\varphi _{i_{\alpha }j_{\alpha }}^{\alpha}(\varkappa),
\ee
where indices $i_{\alpha},j_{\alpha}$ run from 1 to dimension of the irrep $\alpha$, and $\delta
^{r_{\mu/\alpha} \widetilde{r}_{\mu/\beta}}=\delta^{\mu\nu}\delta^{\mu_{n-1}\nu_{n-1}}\cdots \delta^{\alpha\beta}$. The above considerations allow us to introduce the following 
\begin{notation}
	\label{not16}
	Every basis index $i_{\mu}$, where $\mu \vdash n$, can be written uniquely using a path on Young's lattice as
	\be
	\label{vv}
	i_{\mu}\equiv(r_{\mu/\alpha}, l_{\alpha}), \qquad     \alpha\in\mu,
	\ee
	and $l_{\alpha}$ denotes now index running only within the range of the irrep $\alpha$. The indices $i_{\mu},l_{\alpha}$ are of the same type as $r_{\mu/\alpha}$, but with trivial last element, i.e. a single box Young diagram. Equation~\eqref{vv} defines the division of the chosen path on Young's lattice from diagram $\mu$ to single box diagram, through a diagram $\alpha$. By writing $\delta_{i_{\mu}j_{\nu}}$, where $\mu \vdash n$ and $\alpha \vdash n-k$, we understand the following
	\be
	\delta_{i_{\mu}j_{\nu}}=\delta^{r_{\mu/\alpha}\widetilde{r}_{\nu/\beta}}\delta_{l_{\alpha}l'_{\beta}}=\delta_{\mu\nu}\delta_{\mu_{n-1}\nu_{n-1}}\cdots \delta_{\mu_{n-k+1}\nu_{n-k+1}}\delta_{\alpha\beta}\delta_{l_{\alpha}l'_{\beta}}.
	\ee
\end{notation}
Similarly as in~\cite{Studzinski2017,MozJPA} the block structure of this reduced representation
allows to introduce such a block indexation for the $PRIR$ $\phi_{R}^{\mu }$
of $S(n)$, which gives 
\be
\forall \sigma \in S(n)\quad \phi _{R}^{\mu }(\sigma )=\left( (\phi _{R}^{\mu })_{i_{\alpha }\quad j_{\beta }}^{r_{\mu/\alpha}, \widetilde{r}_{\mu/\beta}}(\sigma)\right) , 
\ee
where the matrices on the diagonal $(\phi _{R}^{\mu })_{i_{\alpha }\quad j_{\beta }}^{r_{\mu/\alpha}, \widetilde{r}_{\mu/\beta}}(\sigma)$ are of dimension of corresponding irrep $
\varphi ^{\alpha }$ of $S(n-1)$. The off diagonal blocks need not to be
square.

Now we formulate the  main result of this subsection, the generalized version of $PRIR$ orthogonality relation. The following proposition plays the central role in investigating matrix elements of MPBT operator in irreducible orthonormal operator basis presented later in Section~\ref{comm_structure}.
\begin{proposition}
	\label{summation0}
	Let $H\subset S(n)$ be an arbitrary subgroup of $S(n)$ with transversal $%
	T=\{\tau _{k}:k=1,\ldots,\frac{n!}{|H|}\}$,
	In the $PRIR$ notation $\phi _{R}^{\mu }$ of $S(n)$ satisfy the
	following bilinear sum rule%
	\begin{multline}
	\sum_{k=1}^{\frac{n!}{|H|}}\sum_{k_{\beta }=1}^{d_{\beta}}(\phi
	_{R}^{\mu })_{i_{\alpha }\quad  k_{\beta }}^{r_{\mu/\alpha},\widetilde{r}_{\mu/\beta}}(\tau _{k}^{-1})(\phi _{R}^{\nu })_{k_{\beta }\quad
		 j_{\gamma }}^{r_{\nu/\beta}, \widetilde{r}_{\nu/\gamma}}(\tau _{k})\\
	=\frac{n!}{|H|}\frac{d_{\beta }}{d_{\mu }}\delta ^{r_{\mu/\alpha}\widetilde{r}_{\nu/\gamma}}\delta _{i_{\alpha
		}j_{\gamma}}, \;
	\forall r_{\mu/\alpha},\widetilde{r}_{\mu/\beta}, r_{\nu/\beta},\widetilde{r}_{\nu/\gamma}
\end{multline}
	where $\alpha ,\beta ,\gamma $ are $irreps$ of $H$ contained in
	the irrep $\mu $ of $S(n)$, and $|H|$ denotes cardinality of the subgroup $H$.
\end{proposition}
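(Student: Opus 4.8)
\emph{Proof idea.} The plan is to read the left-hand side as a single matrix entry of a sum, over the transversal $T$, of the operators $\phi^\mu_R(\tau_k^{-1})\,W\,\phi^\nu_R(\tau_k)$ for a suitably chosen fixed matrix $W$, to promote this coset sum to an honest sum over all of $S(n)$ by exploiting the completely reduced form of the $PRIR$ restricted to $H$, and then to close with the classical Schur orthogonality relations for $S(n)$. Concretely, I would first introduce the $d_\mu\times d_\nu$ matrix $W$ which, in the $PRIR$ bases of $\phi^\mu_R$ and $\phi^\nu_R$ (Notation~\ref{not16}), maps the copy of the $H$-irrep $\varphi^\beta$ carried by $\phi^\nu_R$ along the path $r_{\nu/\beta}$ identically onto the copy of $\varphi^\beta$ carried by $\phi^\mu_R$ along the path $\widetilde{r}_{\mu/\beta}$, and annihilates every other block:
\be
W_{(s_{\mu/\delta},\,p_\delta),\,(s'_{\nu/\delta'},\,p'_{\delta'})}=\delta^{s_{\mu/\delta}\,\widetilde{r}_{\mu/\beta}}\,\delta^{s'_{\nu/\delta'}\,r_{\nu/\beta}}\,\delta_{\delta\beta}\,\delta_{\delta'\beta}\,\delta_{p_\delta p'_{\delta'}} .
\ee
A short index computation shows that the left-hand side of the proposition equals
\be
\sum_{k=1}^{n!/|H|}\Big[\,\phi^\mu_R(\tau_k^{-1})\,W\,\phi^\nu_R(\tau_k)\,\Big]_{(r_{\mu/\alpha},\,i_\alpha),\,(\widetilde{r}_{\nu/\gamma},\,j_\gamma)} .
\ee

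The decisive point is that $W$ is an $H$-intertwiner of $\phi^\mu_R$ and $\phi^\nu_R$: by~\eqref{M1} and~\eqref{M2} the matrices $\phi^\mu_R(\varkappa)$ and $\phi^\nu_R(\varkappa)$ are block-diagonal for $\varkappa\in H$ and reduce to $\varphi^\beta(\varkappa)$ on the two blocks singled out by $W$, hence $\phi^\mu_R(\varkappa)\,W=W\,\phi^\nu_R(\varkappa)$ and therefore $\phi^\mu_R(\varkappa^{-1})\,W\,\phi^\nu_R(\varkappa)=W$ for every $\varkappa\in H$. Writing each $\sigma\in S(n)$ uniquely as $\sigma=\varkappa\,\tau_k$ with $\varkappa\in H$, $\tau_k\in T$, and using this invariance to absorb the $H$-factor, the coset sum inflates to a sum over the whole group:
\be
\sum_{k=1}^{n!/|H|}\phi^\mu_R(\tau_k^{-1})\,W\,\phi^\nu_R(\tau_k)=\frac{1}{|H|}\sum_{\sigma\in S(n)}\phi^\mu_R(\sigma^{-1})\,W\,\phi^\nu_R(\sigma) .
\ee

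Finally I would evaluate the right-hand side with Schur orthogonality for the irreducible unitary representations $\phi^\mu_R,\phi^\nu_R$ of $S(n)$ (unitary recouplings of $\phi^\mu,\phi^\nu$), in the form $\sum_{\sigma\in S(n)}\phi^\mu_{x a}(\sigma^{-1})\,\phi^\nu_{b y}(\sigma)=\frac{n!}{d_\mu}\,\delta^{\mu\nu}\,\delta_{x y}\,\delta_{a b}$; contracting the free $\mu$- and $\nu$-indices against $W$ turns the $a,b$-summation into $\tr W$ and leaves
\be
\mathrm{LHS}=\frac{n!}{|H|\,d_\mu}\,\delta^{\mu\nu}\,\delta_{(r_{\mu/\alpha},i_\alpha),\,(\widetilde{r}_{\nu/\gamma},j_\gamma)}\,\tr W
=\frac{n!}{|H|}\,\frac{d_\beta}{d_\mu}\,\delta^{r_{\mu/\alpha}\widetilde{r}_{\nu/\gamma}}\,\delta_{i_\alpha j_\gamma},
\ee
where $\delta_{(r_{\mu/\alpha},i_\alpha),(\widetilde{r}_{\nu/\gamma},j_\gamma)}=\delta^{r_{\mu/\alpha}\widetilde{r}_{\nu/\gamma}}\delta_{i_\alpha j_\gamma}$ in the notation of~\eqref{M2} and $\tr W=d_\beta$ is the trace of the identity on $\varphi^\beta$.

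I expect the main obstacle to be the middle step: tracking precisely how the nested path/multiplicity indexing of the $PRIR$ lets the $H$-average be absorbed into $W$, and making sure $T$ is used consistently on both sides of the cosets, which holds for the standard transversals of $S(n)/S(n-k)$. By contrast, writing $W$ down and running the final Schur contraction are essentially routine.
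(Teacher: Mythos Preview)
Your proposal is correct and follows essentially the same route as the paper's own argument: the paper's proof merely restates Schur orthogonality in $PRIR$ notation and then defers the passage from the full group sum to the transversal sum to Proposition~29 of~\cite{MozJPA}, which is precisely the ``inflate via an $H$-intertwiner'' step you spell out in detail with your matrix $W$. Your version is in fact more self-contained, and your flagged caveat about left versus right cosets is well placed: the invariance $\phi^\mu_R(\varkappa^{-1})W\phi^\nu_R(\varkappa)=W$ absorbs the $H$-factor cleanly for a \emph{right} transversal, whereas the paper writes $S(n)=\bigcup_k\tau_k H$; for the applications (cosets of $S(n-2k)$ in $S(n-k)$) one may take a common two-sided transversal, so this is a cosmetic rather than substantive issue.
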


\begin{proof}
	The proof is based on the classical orthogonality relations for irreps, which in PRIR notation takes a form
	\be
	\sum_{g\in G}(\phi _{R}^{\mu })_{i_{\alpha }\quad   k_{\beta }}^{r_{\mu/\alpha}\widetilde{r}_{\mu/\beta}}(g^{-1})(\phi _{R}^{\nu })_{k_{\beta }\quad
		 j_{\gamma }}^{r_{\nu/\beta}\widetilde{r}_{\nu/\gamma}}(g)=\frac{|G|}{%
		d_{\mu}}\delta ^{\widetilde{r}_{\mu/\beta}r_{\nu/\beta}}\delta _{i_{\alpha }j_{\gamma }},
	\ee
	where $|G|$ denotes cardianlity of the group $G$.
	It means, that even if $\alpha =\gamma$, i.e.  these representations
	are of the same type, but $\widetilde{r}_{\mu/\beta}\neq r_{\nu/\beta}$, the $RHS$ of the
	above equation is equal to zero. Next part of the proof follows from the proof of Proposition 29 in paper~\cite{MozJPA}.
\end{proof}

\subsection{Properties of irreducible operator basis and Young projectors under partial trace}
\label{parEij}
For further purposes, namely for effective computations of performance of our teleportation schemes, we prove here how irreducible operator basis given in~\eqref{Eij} or~\eqref{schur_Eij}, and Young projectors from~\eqref{def_P} behave under taking a partial trace over last $k$ systems. Our formulas are generalisations of attempts to similar problem made in~\cite{Aud}. We start considerations from calculating the partial trace from operators~\eqref{Eij} over last system. In all lemmas presented below we use PRIR representation described in Subsection~\ref{prir}.
\begin{lemma}
	\label{L3a}
	For irreducible operator basis $E^{\mu}_{kl}$, where $\mu \vdash n$, introduced in~\eqref{Eij}, the partial trace over last system equals to
	\be
	\tr_{n}E_{i_{\beta}j_{\beta'}}^{\beta \beta'}(\mu)=\sum_{\alpha \in\mu}\frac{m_{\mu}}{m_{\alpha}}E_{i_{\alpha}j_{\alpha}}^{\alpha}\delta_{\alpha \beta}\delta_{\alpha \beta'}=\frac{m_{\mu}}{m_{\beta}}E^{\beta}_{i_{\beta}j_{\beta}}\delta_{\beta \beta'}.
	\ee
\end{lemma}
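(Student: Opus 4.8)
\textbf{Proof plan for Lemma~\ref{L3a}.}

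The plan is to reduce the computation of $\tr_n E^{\beta\beta'}_{i_\beta j_{\beta'}}(\mu)$ to the already-established partial-trace formula for Young projectors, namely equation~\eqref{x2} from Lemma~\ref{object}, combined with the defining representation-theoretic formula~\eqref{Eij} for the operator basis. First I would write $E^{\beta\beta'}_{i_\beta j_{\beta'}}(\mu)$ in its coset-decomposed form: split the sum over $\sigma\in S(n)$ in~\eqref{Eij} as $\sigma = (a,n)\circ\tau$ with $a=1,\dots,n$ and $\tau\in S(n-1)$, exactly as was done to derive~\eqref{P1}--\eqref{Pan}. Using the PRIR notation of Subsection~\ref{prir}, the matrix elements $\phi^\mu_{ji}(\sigma^{-1})$ factor through the restriction to $S(n-1)$, so $\phi^\mu(\tau^{-1})$ is block-diagonal in $\alpha\in\mu$ as in~\eqref{phi}. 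The indices $i_\beta, j_{\beta'}$ are already written as paths ending in the irreps $\beta,\beta'$ of $S(n-1)$ (this is the content of the notation $E^{\beta\beta'}_{i_\beta j_{\beta'}}(\mu)$), which is what makes the reduction clean.

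Next I would carry out the $\tr_n$ using $\tr_n V_{a,n} = d^{\delta_{a,n}}\mathbf{1}_{1\ldots n-1}$, which introduces the tilde-objects $\widetilde\phi^\mu(a,n)$ and $\widetilde V_{a,n}$ from~\eqref{obj} — this is precisely the bookkeeping that appeared in~\eqref{x2}. After the partial trace, the remaining sum over $\tau\in S(n-1)$ together with the matrix elements of $\phi^\alpha(\tau^{-1})$ should be recognized, via Fact~\ref{f1a} applied at level $n-1$, as producing the operator basis element $E^\alpha_{\cdot\cdot}$ on $S(n-1)$ tensored with an identity on the multiplicity space. Then the sum $\sum_{a=1}^n \widetilde\phi^\mu(a,n)$ restricted to the block $\beta$ collapses by Lemma~\ref{object} to the scalar $x^\mu_\beta = n\, m_\mu d_\beta/(m_\beta d_\mu)$ times the identity on that block — this is the step that forces $\beta=\beta'$ (off-diagonal blocks are killed because $\sum_a\widetilde\phi^\mu(a,n)$ is $S(n-1)$-invariant, hence block-diagonal), yielding the $\delta_{\beta\beta'}$. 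Finally, assembling the prefactors $d_\mu/n!$ from~\eqref{Eij}, the factor $d^{\delta_{a,n}}$ absorbed into $\widetilde\phi$, the $1/n$ and $x^\mu_\beta$ from Lemma~\ref{object}, and the $(n-1)!/d_\alpha$ from Fact~\ref{f1a}, the numerical coefficient should simplify to $m_\mu/m_\beta$, giving the claimed $\frac{m_\mu}{m_\beta}E^\beta_{i_\beta j_\beta}\delta_{\beta\beta'}$, with the sum-over-$\alpha$ form being the unsummed intermediate expression.

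The main obstacle I anticipate is purely the careful index-and-prefactor accounting: making sure the PRIR path indices $i_\beta, j_{\beta'}$ are correctly matched through the restriction $\phi^\mu\!\downarrow_{S(n-1)} = \bigoplus_{\alpha\in\mu}\phi^\alpha$ so that the right block of $\sum_a\widetilde\phi^\mu(a,n)$ is selected, and checking that the three separate numerical factors multiply to exactly $m_\mu/m_\beta$ (in particular that the factors of $n$, $d_\mu$, $d_\alpha$ all cancel as they should). A secondary subtlety is justifying that the $\tr_n$ does not disturb the multiplicity-space identity $\mathbf{1}^{\mathcal U}_\mu$, i.e. that $E^{\beta\beta'}_{i_\beta j_{\beta'}}(\mu)$ restricted to the multiplicity sector behaves as in~\eqref{schur_Eij}; this follows because $\tr_n$ acts only on the representation (Schur) factor for operators living in the image of the $S(n)$-action, which is exactly the situation here. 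No genuinely new idea beyond Lemma~\ref{object} and Fact~\ref{f1a} should be needed.
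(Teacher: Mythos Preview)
Your proposal is correct and follows essentially the same route as the paper's own proof: coset decomposition $\sigma=(a,n)\circ\tau$, block-diagonality of $\phi^\mu(\tau^{-1})$ via~\eqref{phi}, application of Fact~\ref{f1a} to convert the $\tau$-sum into $\mathbf{1}^U_\alpha\otimes[\cdot]_\alpha$, the partial trace $\tr_n V_{a,n}=d^{\delta_{a,n}}\mathbf{1}$ producing the tilde objects, and finally Lemma~\ref{object} to collapse $\big[\sum_a\widetilde\phi^\mu(a,n)\big]_\beta$ to the scalar $x^\mu_\beta$, with the $\delta_{\beta\beta'}$ coming from the $S(n-1)$-invariance of that sum. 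The only cosmetic difference is that the paper applies Fact~\ref{f1a} before taking $\tr_n$ while you describe the opposite order; since the two operations act on disjoint factors ($V_{a,n}$ versus the $\tau$-sum) this is immaterial.
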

\begin{proof}
	Similarly as it was done for Young projectors $P_{\mu}$ in~\eqref{P1}, we can rewrite $E^{\mu}_{ij}$ as
	\be
	\begin{split}
		E^{\mu}_{kl}&=\frac{d_{\mu}}{n!}\sum_{\sigma \in S(n)} \phi^{\mu}_{lk}(\sigma^{-1}) V_{\sigma}\\
								&=\frac{d_{\mu}}{n!}\sum_{a=1}^{n}\sum_{\tau \in S(n-1)}\phi^{\mu}_{lk}((a, n)\circ \tau^{-1})V_{a,n}V_{\tau}.
	\end{split}
	\ee
	Observing that $\phi^{\mu}_{lk}(\sigma^{-1})=\tr\left(|k\>\<l|\phi^{\mu}(\sigma^{-1}) \right)$, where $|k\>,|l\>$ are basis vector in irrep $\mu$, we can write in PRIR notation $k=k_{\mu}=(\beta,i_{\beta})$ and $l=l_{\mu}=(\beta',j_{\beta'})$ having
	\begin{multline}
	E_{i_{\beta}j_{\beta'}}^{\beta \beta'}(\mu)=
	\frac{d_{\mu}}{n!}\sum_{a=1}^nV_{a,n}\sum_{\tau\in S(n-1)}\\
	\tr\left[|\beta,i_{\beta}\>\<\beta',j_{\beta'}|\phi^{\mu}(a,n)\phi^{\mu}(\tau^{-1}) \right]V_{\tau}.
\end{multline}
	Since $\tau \in S(n-1)$ and $\mu \vdash n$, we can apply directly decomposition from~\eqref{M1} writing
	\begin{multline}
	E_{i_{\beta}j_{\beta'}}^{\beta \beta'}(\mu)=\frac{d_{\mu}}{n!}\sum_{a=1}^nV_{a,n}\sum_{\alpha\in\mu}\sum_{\tau\in S(n-1)}\\ \tr\left( \left[|\beta,i_{\beta}\>\<\beta',j_{\beta'}|\phi^{\mu}(a,n)\right]_{\alpha}\phi^{\alpha}(\tau^{-1})\right)  V_{\tau}.
\end{multline}
	In the above, by $\left[|\beta,i_{\beta}\>\<\beta',j_{\beta'}|\phi^{\mu}(a,n)\right]_{\alpha}$ we denote the restriction to irrep $\alpha$. Applying Fact~\ref{f1a} to the above expression we write 
	\be
	\begin{split}
		E_{i_{\beta}j_{\beta'}}^{\beta \beta'}&(\mu)\\
		=&\frac{d_{\mu}}{n!}\sum_{a=1}^{n}V_{a,n}\sum_{\alpha \in \mu}\frac{(n-1)!}{d_{\alpha}}\mathbf{1}^U_{\alpha}\ot \left[|\beta,i_{\beta}\>\<\beta',j_{\beta'}|\phi^{\mu}(a,n) \right]_{\alpha}.
	\end{split}
	\ee
Taking the partial trace over last system, and having in mind the definition of $\widetilde{\phi}^{\mu}(a,n-1)$ from~\eqref{obj}, we have:
	\be
	\begin{split}
		\tr_{n} & E_{i_{\beta}j_{\beta'}}^{\beta \beta'}(\mu)\\ & =\frac{d_{\mu}}{n!}\tr_{n}\left(\sum_{a=1}^{n}V_{a,n}\right)\sum_{\alpha \in \mu}\frac{(n-1)!}{d_{\alpha}}\mathbf{1}^U_{\alpha}\ot \left[|\beta,i_{\beta}\>\<\beta',j_{\beta'}|\phi^{\mu}(a,n) \right]_{\alpha} \\
		&=\frac{d_{\mu}}{n!}\sum_{a=1}^{n}\left(\sum_{\alpha \in \mu}\frac{(n-1)!}{d_{\alpha}}\mathbf{1}^U_{\alpha}\ot \left[|\beta,i_{\beta}\>\<\beta',j_{\beta'}|\widetilde{\phi}^{\mu}(a,n) \right]_{\alpha} \right).
	\end{split}
	\ee
	From the proof of Lemma~\ref{object} we know that the  object $\left[ \sum_{a=1}^{n}\widetilde{\phi}^{\mu}(a,n)\right]_{\beta}$ is invariant with respect to $S(n-1)$. Together with property $\mathbf{1}^{\alpha}_{\mu}|\beta,i_{\beta}\>\<\beta',j_{\beta'}|\mathbf{1}^{\alpha}_{\mu}=\delta_{\alpha\beta}\delta_{\alpha,\beta'}|\alpha,i_{\alpha}\>\<\alpha,j_{\alpha}|$, we have
	\be
	\begin{split}
		\tr_{n}&E_{i_{\beta}j_{\beta'}}^{\beta \beta'}(\mu)\\ & =\frac{d_{\mu}}{n!}\sum_{\alpha\in\mu}\frac{(n-1)!}{d_{\alpha}}x^{\mu}_{\alpha}\mathbf{1}^U_{\alpha}\ot |\alpha,i_{\alpha}\>\<\alpha,j_{\alpha}|\delta_{\alpha\beta}\delta_{\alpha\beta'}\\ 
					 & =\frac{1}{n}\frac{d_{\mu}}{d_{\beta}}x^{\mu}_{\beta}\mathbf{1}^U_{\alpha}\ot |\beta,i_{\beta}\>\<\beta,j_{\beta}|\delta_{\beta\beta'}=\frac{m_{\mu}}{m_{\beta}}E^{\beta}_{i_{\beta}j_{\beta}}\delta_{\beta\beta'}.
	\end{split}
	\ee
	In the last step we use explicit form of coefficients $x^{\mu}_{\beta}$ given in Lemma~\ref{object} and expression~\eqref{schur_Eij}.
\end{proof}

\begin{corollary}
	From Lemma~\ref{L3} we see that taking a partial trace over $n-$th subsystem we destroys all the coherences between block labelled by different $\beta\vdash n-1$.
\end{corollary}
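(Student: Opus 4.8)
The statement is an immediate reading of the formula established in Lemma~\ref{L3a} (cited in the corollary as Lemma~\ref{L3}), so the plan is simply to unpack what the Kronecker delta $\delta_{\beta\beta'}$ on its right-hand side means. First I would recall that, in the PRIR notation of Subsection~\ref{prir}, a basis index $i_\mu$ of the irrep $\mu\vdash n$ decomposes as $i_\mu\equiv(r_{\mu/\beta},i_\beta)$ with $\beta\in\mu$, so that the operators $E^{\beta\beta'}_{i_\beta j_{\beta'}}(\mu)$ with $\beta\neq\beta'$ are precisely the matrix units among the $\{E^\mu_{kl}\}$ whose row index lives in the $S(n-1)$-isotypic block labelled $\beta$ and whose column index lives in the block labelled $\beta'$. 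These are what one means by ``coherences'' between the blocks labelled by different $\beta\vdash n-1$ occurring in the branching $\mu\downarrow S(n-1)$.

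Next I would invoke Lemma~\ref{L3a} verbatim: $\tr_n E^{\beta\beta'}_{i_\beta j_{\beta'}}(\mu)=\frac{m_\mu}{m_\beta}\,E^{\beta}_{i_\beta j_\beta}\,\delta_{\beta\beta'}$. The factor $\delta_{\beta\beta'}$ shows at once that the partial trace of every off-diagonal matrix unit ($\beta\neq\beta'$) vanishes, whereas each diagonal matrix unit ($\beta=\beta'$) is merely rescaled by $m_\mu/m_\beta$ and identified with the corresponding operator-basis element $E^{\beta}$ of the irrep $\beta\vdash n-1$. Since the family $\{E^\mu_{kl}\}$ spans all $U^{\otimes n}$-invariant operators on $(\mathbb{C}^d)^{\otimes n}$, linearity then extends the conclusion to an arbitrary such operator: $\tr_n$ annihilates every component that mixes distinct $S(n-1)$-blocks and leaves only the block-diagonal part, suitably rescaled.

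There is essentially no obstacle here --- the content is entirely carried by Lemma~\ref{L3a}; the only point requiring care is bookkeeping, namely using the PRIR labelling consistently so that a ``block labelled $\beta$'' on the $S(n)$ side corresponds, after the trace, to the genuine irrep $\beta$ of $S(n-1)$, with the rescaling $m_\mu/m_\beta$ recorded correctly. If desired I would close with a one-line remark that this is the representation-theoretic shadow of the familiar fact that the partial trace is an $S(n-1)$-covariant (completely positive) map, hence can only map isotypic components to matching isotypic components.
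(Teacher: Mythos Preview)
Your proposal is correct and matches the paper's treatment: the paper gives no separate proof of this corollary at all, simply stating it as an immediate consequence of the preceding lemma, and your reading of the Kronecker delta $\delta_{\beta\beta'}$ in the formula $\tr_n E^{\beta\beta'}_{i_\beta j_{\beta'}}(\mu)=\frac{m_\mu}{m_\beta}E^{\beta}_{i_\beta j_\beta}\delta_{\beta\beta'}$ is exactly the intended content. You also correctly noticed the labelling slip (the corollary cites Lemma~\ref{L3} but the relevant single-system formula is Lemma~\ref{L3a}).
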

 For further purpose of having explicit connection with the structure of multi-port teleportation scheme, let us assume that now $\mu \vdash n-k$, such that $2k<n$. Having that and extended notion of $PRIR$, we are in position to present the second main result of this section.
\begin{lemma}
\label{L3}
For basis operators $E^{\mu}_{kl}$ in the irreducible representation labelled by $\mu\vdash n-k$, we have the following equality:
\be
\tr_{(k)}E_{i_{\beta}\quad j_{\beta'}}^{r_{\mu/\beta} \  \widetilde{r}_{\mu/\beta'}}=\frac{m_{\mu}}{m_{\beta}}E^{\beta}_{i_{\beta}j_{\beta}}\delta_{r_{\mu/\beta}  \widetilde{r}_{\mu/\beta'}}
\ee
where we use simplified notation $\tr_{(k)}=\tr_{n-2k+1,\ldots,n-k}$.
\end{lemma}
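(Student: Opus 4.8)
The plan is to obtain Lemma~\ref{L3} by iterating the single-system result Lemma~\ref{L3a} exactly $k$ times. Since the partial traces over the pairwise disjoint subsystems $n-k,\,n-k-1,\,\ldots,\,n-2k+1$ commute, we may factor
\[
\tr_{(k)} \;=\; \tr_{n-2k+1}\circ\,\tr_{n-2k+2}\circ\cdots\circ\,\tr_{n-k},
\]
and peel off one system at a time, starting from system $n-k$ (the last system on which $E^{\mu}_{kl}$ with $\mu\vdash n-k$ acts) and moving inward. The hypothesis $2k<n$ guarantees that every intermediate diagram produced along the way is a genuine partition with at least one box, so each application of Lemma~\ref{L3a} — used with its generic parameter ``$n$'' specialised successively to $n-k,\,n-k-1,\,\ldots,\,n-2k+1$ — is legitimate.

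The one point requiring care is the bookkeeping of PRIR indices. Write $\mu=\mu_0\ni\mu_1\ni\cdots\ni\mu_k=\beta$ for the path $r_{\mu/\beta}$, with $\mu_t\vdash n-k-t$. By the transitivity of the PRIR index decomposition around \eqref{M1}–\eqref{M2} and Notation~\ref{not16}, the composite index $k_\mu$ carried by $E^{r_{\mu/\beta}\ \widetilde{r}_{\mu/\beta'}}_{i_\beta\ j_{\beta'}}$ splits as $k_\mu\equiv(\mu_1,(r_{\mu_1/\beta},i_\beta))$, where the first entry $\mu_1\vdash n-k-1$ is precisely the one-box removal recognised by Lemma~\ref{L3a} and the remainder $(r_{\mu_1/\beta},i_\beta)$ is the full index inside the irrep $\mu_1$. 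Hence one application of Lemma~\ref{L3a} gives
\[
\tr_{n-k}\,E^{r_{\mu/\beta}\ \widetilde{r}_{\mu/\beta'}}_{i_\beta\ j_{\beta'}}
\;=\;\frac{m_{\mu_0}}{m_{\mu_1}}\,
E^{r_{\mu_1/\beta}\ \widetilde{r}_{\mu_1/\beta'}}_{i_\beta\ j_{\beta'}}\,\delta_{\mu_1\mu_1'},
\]
with $\mu_1'$ the first link of the primed path; the surviving operator on systems $1,\ldots,n-k-1$ is again of exactly the form in the statement, with $\mu$ replaced by $\mu_1\vdash n-k-1$ and one fewer box on the path. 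Iterating this $k$ times yields a telescoping product of multiplicity ratios $\frac{m_{\mu_0}}{m_{\mu_1}}\frac{m_{\mu_1}}{m_{\mu_2}}\cdots\frac{m_{\mu_{k-1}}}{m_{\mu_k}}=\frac{m_\mu}{m_\beta}$ and a product of Kronecker deltas $\delta_{\mu_1\mu_1'}\delta_{\mu_2\mu_2'}\cdots\delta_{\mu_k\mu_k'}$, which is exactly $\delta_{r_{\mu/\beta}\ \widetilde{r}_{\mu/\beta'}}$ in the sense of Notation~\ref{not16}; the final surviving factor is $E^{\beta}_{i_\beta j_\beta}$ on systems $1,\ldots,n-2k$, giving the claimed identity.

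The main obstacle is organisational rather than computational: one must check that the one-box ``split'' invoked inside Lemma~\ref{L3a} at each stage coincides with the next link of the length-$k$ path encoded in the PRIR index, so that the $k$ deltas accumulated over the steps reassemble correctly into the single path-delta $\delta_{r_{\mu/\beta}\ \widetilde{r}_{\mu/\beta'}}$ (this is precisely the transitivity of the PRIR decomposition). An alternative, less economical route would mimic the direct proof of Lemma~\ref{L3a}: expand $E^{\mu}_{kl}=\tfrac{d_\mu}{(n-k)!}\sum_{\sigma\in S(n-k)}\phi^{\mu}_{lk}(\sigma^{-1})V_\sigma$, decompose $S(n-k)$ into cosets of $S(n-2k)$, insert the PRIR form~\eqref{M1} together with a $k$-box analogue of Lemma~\ref{object} and the new summation rule Proposition~\ref{summation0}; but the iterative argument above avoids re-deriving those ingredients and is the approach I would use.
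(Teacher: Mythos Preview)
Your proposal is correct and follows essentially the same approach as the paper: iterate Lemma~\ref{L3a} one system at a time, observe that the multiplicity ratios telescope to $m_\mu/m_\beta$, and that the accumulated one-step Kronecker deltas assemble into the path-delta $\delta_{r_{\mu/\beta}\,\widetilde{r}_{\mu/\beta'}}$. The paper's write-up differs only in notation (it indexes the intermediate diagrams by the layer $n-k-t$ rather than by $\mu_t$), and it does not discuss the alternative coset-expansion route you mention.
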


\begin{proof}
To prove the above statement we use iteratively Lemma~\ref{L3a}. Let us write explicitly indices $k=k_{\mu}, l=l_{\mu}$ in PRIR notation:
\be
\label{paths}
\begin{split}
k_{\mu}&=(\mu_{n-k-1},r_{\mu_{n-k-1}})=(\mu_{n-k-1},\ldots,\mu_{n-2k+1},\mu_{n-2k},i_{\mu_{n-2k}})\\ &=(\mu_{n-k-1},\ldots,\mu_{n-2k+1},\beta,i_{\beta}),\\
l_{\mu}&=(\mu'_{n-k-1},s_{\mu'_{n-k-1}})=(\mu'_{n-k-1},\ldots,\mu'_{n-2k+1},\mu'_{n-2k},i_{\mu'_{n-2k}})\\ &=(\mu'_{n-k-1},\ldots,\mu'_{n-2k+1},\beta',j_{\beta'}),
\end{split}
\ee 
where we put $\beta=\mu_{n-2k},\beta'=\mu'_{n-2k}$ for simpler notation. Each lower index denotes a proper layer on the reduced Young's lattice, starting from the highest layer labelled by the number $n-k$. In the first step we compute the partial trace over $(n-k)$-th system getting
\be
\label{n-k}
\begin{split}
	\tr_{n-k}&E_{r_{\mu_{n-k-1}}s_{\mu'_{n-k-1}}}^{\mu_{n-k-1} \  \mu'_{n-k-1}}(\mu_{n-k})\\ &=\frac{m_{\mu_{n-k}}}{m_{\mu_{n-k-1}}}E^{\mu_{n-k-1}}_{r_{\mu_{n-k-1}}s_{\mu_{n-k-1}}}\delta_{\mu_{n-k-1}\mu'_{n-k-1}}.
\end{split}
\ee
This procedure reduced paths in~\eqref{paths} to
\be
\begin{split}
	r_{\mu_{n-k-1}}&=(\mu_{n-k-2},q_{\mu_{n-k-2}})\\&=(\mu_{n-k-2},\mu_{n-k-3},\ldots,\mu_{n-2k+1},\beta,i_{\beta}),\\ \end{split}
\ee
\be
\begin{split}
	s_{\mu_{n-k-1}}&=(\mu'_{n-k-2},p_{\mu'_{n-k-2}})\\ &=(\mu'_{n-k-2},\mu'_{n-k-3},\ldots,\mu'_{n-2k+1},\beta',j_{\beta'}),
\end{split}
\ee
where $\beta=\mu_{n-2k},\beta'=\mu'_{\mu-2k}$.
Now computing the trace from~\eqref{n-k} over $(n-k-1)$-th particle we write
\be
\begin{split}
&\delta_{\mu_{n-k-1}\mu'_{n-k-1}}\frac{m_{\mu_{n-k}}}{m_{\mu_{n-k-1}}}\tr_{n-k-1}E^{\mu_{n-k-2}\mu'_{n-k-2}}_{q_{\mu_{n-k-2}}p_{\mu'_{n-k-2}}}(\mu_{n-k-1})\\
&=\delta_{\mu_{n-k-1}\mu'_{n-k-1}}\delta_{\mu_{n-k-2}\mu'_{n-k-2}}\frac{m_{\mu_{n-k}}}{m_{\mu_{n-k-1}}}\frac{m_{\mu_{n-k-1}}}{m_{\mu_{n-k-2}}}E^{\mu_{n-k-2}}_{q_{\mu_{n-k-2}}p_{\mu_{n-k-2}}}.
\end{split}
\ee
Continuing the above procedure, up to last system in $\tr_{(k)}$ we obtain expression~\eqref{eq:long2}, displayed at the top of the following page
since in the last line we used definition of $r_{\mu/\alpha}$ and suppressed indices labelling layers on reduced Bratelli diagram. This finishes the proof.
\end{proof}

\begin{figure*}[!t]
	\normalsize
	\setcounter{MYtempeqncnt}{\value{equation}}
	\setcounter{equation}{79}
\be
\begin{split}\label{eq:long2}
&\delta_{\mu_{n-k-1}\mu'_{n-k-1}}\delta_{\mu_{n-k-2}\mu'_{n-k-2}}\times \cdots \times \delta_{\mu_{n-2k+1}\mu'_{n-2k+1}}\delta_{\mu_{n-2k}\mu'_{n-2k}}\frac{m_{\mu_{n-k}}}{m_{\mu_{n-k-1}}}\frac{m_{\mu_{n-k-1}}}{m_{\mu_{n-k-2}}}\times \cdots \times \frac{m_{\mu_{n-2k+1}}}{m_{\mu_{n-2k}}}E^{\mu_{n-2k}}_{i_{\mu_{n-2k}}j_{\mu_{n-2k}}}\\
&=\delta_{\mu_{n-k-1}\mu'_{n-k-1}}\delta_{\mu_{n-k-2}\mu'_{n-k-2}}\times \cdots \times \delta_{\mu_{n-2k+1}\mu'_{n-2k+1}}\delta_{\mu_{n-2k}\mu'_{n-2k}}\frac{m_{\mu_{n-k}}}{m_{\mu_{n-2k}}}E^{\mu_{n-2k}}_{i_{\mu_{n-2k}}j_{\mu_{n-2k}}}=\delta^{r_{\mu/\beta}\widetilde{r}_{\mu/\beta'}}\frac{m_{\mu}}{m_{\beta}}E^{\beta}_{i_{\beta}j_{\beta}},
\end{split}
\ee
\setcounter{equation}{\value{MYtempeqncnt}+1}
\hrulefill
\vspace*{4pt}
\end{figure*}

 Then Lemma~\ref{L3} implies the following statement about the Young projector:
\begin{corollary}
	\label{corL3}
	Let $P_{\mu}$ be a Young projector on irrep labelled by $\mu \vdash n-k$, then
	\be
	\tr_{(k)}P_{\mu}=\sum_{\beta \in \mu}m_{\mu/\beta}\frac{m_{\mu}}{m_{\beta}}P_{\beta}
	\ee
	where we use simplified notation $\tr_{(k)}=\tr_{n-2k+1,\ldots,n-k}$.
\end{corollary}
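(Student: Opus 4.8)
The plan is to expand the Young projector $P_{\mu}$ in the irreducible operator basis, decompose the summation index according to the PRIR notation, and then apply Lemma~\ref{L3} to each term.

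First I would recall from~\eqref{def_P} that $P_{\mu}=\sum_{i_{\mu}}E^{\mu}_{i_{\mu}i_{\mu}}$, where the sum runs over all $d_{\mu}$ basis indices of the irrep $\mu\vdash n-k$. By Notation~\ref{not16}, every such index factorizes uniquely as $i_{\mu}\equiv(r_{\mu/\beta},l_{\beta})$, with $\beta\in\mu$ a diagram obtained from $\mu$ by removing $k$ boxes, $r_{\mu/\beta}$ a path on the reduced Young's lattice from $\mu$ to $\beta$, and $l_{\beta}$ an index running within the range $1,\ldots,d_{\beta}$ of the irrep $\beta$. Regrouping the sum accordingly gives
\be
P_{\mu}=\sum_{\beta\in\mu}\sum_{r_{\mu/\beta}}\sum_{l_{\beta}=1}^{d_{\beta}}E^{r_{\mu/\beta}\ r_{\mu/\beta}}_{l_{\beta}\ l_{\beta}}.
\ee

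Next I would apply Lemma~\ref{L3} to each summand. Since the two PRIR path labels coincide, the Kronecker delta $\delta_{r_{\mu/\beta}\,r_{\mu/\beta}}=1$, so Lemma~\ref{L3} yields $\tr_{(k)}E^{r_{\mu/\beta}\ r_{\mu/\beta}}_{l_{\beta}\ l_{\beta}}=\tfrac{m_{\mu}}{m_{\beta}}E^{\beta}_{l_{\beta}l_{\beta}}$, which does not depend on the particular path $r_{\mu/\beta}$. Summing over $r_{\mu/\beta}$ then contributes a factor equal to the number of such paths, which by definition is $m_{\mu/\beta}$, while summing over $l_{\beta}$ reconstructs $\sum_{l_{\beta}}E^{\beta}_{l_{\beta}l_{\beta}}=P_{\beta}$ via~\eqref{def_P}. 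Putting the pieces together produces exactly
\be
\tr_{(k)}P_{\mu}=\sum_{\beta\in\mu}m_{\mu/\beta}\,\frac{m_{\mu}}{m_{\beta}}\,P_{\beta}.
\ee

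Once Lemma~\ref{L3} is available this is essentially a bookkeeping exercise, so I do not expect a genuine obstacle; the only point requiring care is the correct identification of the index range in the PRIR decomposition — namely that the diagonal indices of $P_{\mu}$ are exhausted precisely by the triples $(\beta,r_{\mu/\beta},l_{\beta})$, with distinct triples labelling distinct (and mutually orthogonal) diagonal operators, so that no overcounting or cross terms appear. This is guaranteed by the block structure of the reduced representation recorded in~\eqref{M2} together with Notation~\ref{not16}, and the linearity of the partial trace then finishes the argument.
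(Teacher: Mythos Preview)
Your proof is correct and follows exactly the paper's own argument: expand $P_{\mu}$ as $\sum_{i_{\mu}}E^{\mu}_{i_{\mu}i_{\mu}}$, rewrite the index in PRIR form $i_{\mu}=(r_{\mu/\beta},l_{\beta})$, apply Lemma~\ref{L3} termwise, and then sum over $r_{\mu/\beta}$ (giving $m_{\mu/\beta}$) and over $l_{\beta}$ (giving $P_{\beta}$). There is nothing to add.
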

Indeed, knowing that $P_{\mu}=\sum_{k}E^{\mu}_{ii}$, we write in PRIR basis
\be
\begin{split}
	\tr_{(k)}P_{\mu}&=\sum_{k_{\mu}}E^{\mu}_{k_{\mu}k_{\mu}}=\sum_{\beta \in \mu}\sum_{r_{\mu/\beta}}\sum_{i_{\beta}}\tr_{(k)}E_{i_{\beta}\quad i_{\beta}}^{r_{\mu/\beta}  r_{\mu/\beta}}\\ & =\sum_{\beta \in \mu}\sum_{r_{\mu/\beta}}\sum_{i_{\beta}}\frac{m_{\mu}}{m_{\beta}}E^{\beta}_{i_{\beta}i_{\beta}}
	=\sum_{\beta \in \mu}\sum_{r_{\mu/\beta}}\sum_{i_{\beta}}\frac{m_{\mu}}{m_{\beta}}E^{\beta}_{i_{\beta}i_{\beta}}\\ & =\sum_{\beta \in \mu}\sum_{r_{\mu/\beta}}\frac{m_{\mu}}{m_{\beta}}P_{\beta}=\sum_{\beta \in \mu}m_{\mu/\beta}\frac{m_{\mu}}{m_{\beta}}P_{\beta}.
\end{split}
\ee
\section{The Commutant Structure of $U^{\ot (n-k)}\ot \overline{U}^{\ot k}$ Transformations and MPBT operator}
\label{comm_structure}
 In this section we deliver an orthonormal basis for the commutant of $U^{\ot (n-k)}\ot \overline{U}^{\ot k}$, or equivalently for the algebra $\mathcal{A}^{(k)}_n(d)$. Being more strict, we introduce an irreducible basis for an two-sided ideal $\mathcal{M}$ generated by the element $V^{(k)}$ and elements of the algebra $\mathcal{A}^{(k)}_n(d)$:
\be
\label{idealM}
\mathcal{M}=\{V_{\tau}V^{(k)}V^{\dagger}_{\tau'} \ | \ \tau,\tau'\in S(n-k)\}.
\ee
For our problem full description of $\mathcal{M}$, together with irreducible representation is enough since all basic objects describing MPBT scheme belong to this ideal, see for example definition of MPBT operator from~\eqref{PBT1}. In the most general case the algebra $\mathcal{A}^{(k)}_n(d)$ contains also two-sided ideals generated by the elements $V^{(k')}$, for $k'<k$, and elements of the algebra $\mathcal{A}^{(k)}_n(d)$. We have the following chain of inclusions
\be
\mathcal{M}\equiv \mathcal{M}^{(k)}\subset \mathcal{M}^{(k-1)}\subset \cdots \subset \mathcal{M}^{(1)}\subset \mathcal{M}^{(0)}\equiv \mathcal{A}^{(k)}_n(d).
\ee
The irreducible basis fir the ideals with $k'< k$ will be studied elsewhere, since we do not use objects from the outside of the ideal $\mathcal{M}$. In Figure~\ref{structure_M} we present nested structure of $\mathcal{A}^{(2)}_5(d)$ for $d>3$, together with labelling subsequent blocks within them.
\begin{figure}[h]
	\includegraphics[width=\linewidth]{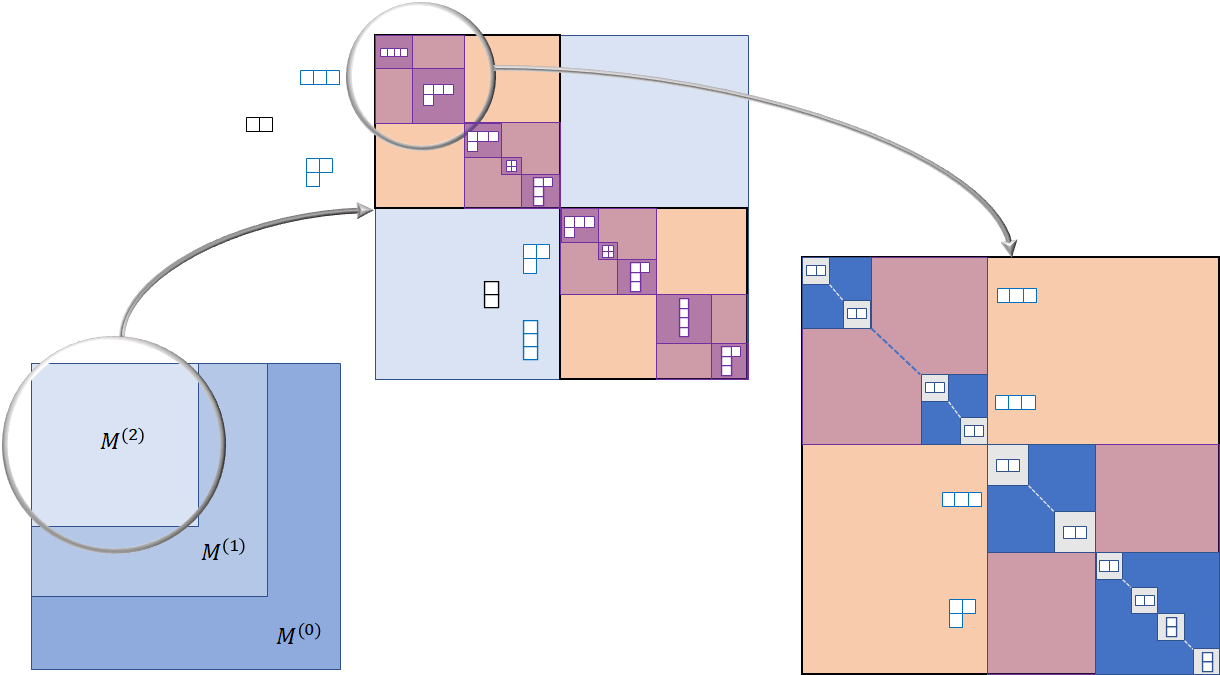}
	\caption{Graphic presents the interior structure of the algebra $\mathcal{A}^{(2)}_6(d)$, with the nested structure of the ideals $\mathcal{M}^{(0)}, \mathcal{M}^{(1)}, \mathcal{M}^{(2)}$, for $d\geq4$ (only with this requirement all the Young frames occur in the decomposition). In particular, we focus on the interior block structure of the ideal $\mathcal{M}^{(2)}$, on which objects describing multi-port teleportation schemes are defined. The middle figure represents the process of the induction by adding two boxes to two allowed starting Young frames which are $(2)$ and $(1,1)$. In this case we have nested structure of three layers.  The most right figure presents process of the reduction from irreps labelled by Young frames of 4 boxes to irreps labelled to Young frames of two boxes. We present here the process of the reduction for the two most left upper blocks.}
	\label{structure_M}
\end{figure}
Having expressions for partial trace over an arbitrary number of particles from irreducible basis operators of the symmetric group we are in the position to formulate the main result, namely we have:
\begin{theorem}	
	\label{tmbas}
	The orthonormal operator basis of the commutant of $U^{\ot (n-k)}\ot \overline{U}^{\ot k}$ in the maximal ideal $\mathcal{M}$ is given by the following set of operators
	\be
	\label{tmbas2}
	F^{r_{\mu/\alpha} r_{\nu/\alpha}}_{i_{\mu}\quad j_{\nu}}=\frac{m_{\alpha}}{\sqrt{m_{\mu}m_{\nu}}}E_{i_{\mu} \ 1_{\alpha}}^{\quad r_{\mu/\alpha}}V^{(k)}E^{r_{\nu/\alpha}}_{1_{\alpha}\quad  j_{\nu}}
	\ee
	satisfying the following composition rule
	   \be
	\label{orto}
	F^{r_{\mu/\alpha} r_{\nu/\alpha}}_{i_{\mu}\quad j_{\nu}}F^{r_{\mu'/\beta} r_{\nu'/\beta}}_{k_{\mu'}\quad l_{\nu'}}=\delta^{r_{\nu/\alpha}r_{\mu'/\beta}}\delta_{j_{\nu}k_{\mu'}}F^{r_{\mu/\alpha} r_{\nu'/\alpha}}_{i_{\mu}\quad l_{\nu'}}
	\ee
	where $m_{\mu},m_{\nu}$ and $m_{\alpha}$ are multiplicities of respective irreps of $S(n-k)$ and $S(n-2k)$ in the Schur-Weyl duality.
\end{theorem}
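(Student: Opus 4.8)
The plan is to verify three things in turn: the composition rule~\eqref{orto}; that the operators $F$ span the ideal $\mathcal{M}$; and that they are linearly independent, which together with~\eqref{orto} makes them an orthonormal system of matrix units (orthonormal in exactly the sense in which $\{E^{\alpha}_{ij}\}$ is, i.e.\ $\tr(F^{\dagger}F)$ is a multiplicity, not $1$). Besides Fact~\ref{kpartr}, Lemma~\ref{L3} and formula~\eqref{Eij}, I will use one elementary \emph{splitting property} of the Young--Yamanouchi basis relative to the chain $S(n-2k)\subset S(n-k)$: writing an index of $\mu\vdash n-k$ in the PRIR form $i_{\mu}=(r_{\mu/\alpha},l_{\alpha})$ of Notation~\ref{not16}, one has $E^{\mu}_{i_{\mu},(r_{\mu/\alpha},l_{\alpha})}\bigl(E^{\alpha}_{l'_{\alpha}l''_{\alpha}}\otimes\mathbf{1}\bigr)=\delta_{l_{\alpha}l'_{\alpha}}E^{\mu}_{i_{\mu},(r_{\mu/\alpha},l''_{\alpha})}$, together with its left-handed analogue, both sides vanishing unless the bottom diagram of the path matches the label of $E^{\alpha}$. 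This is immediate from the fact that $E^{\alpha}_{l'l''}\otimes\mathbf{1}$ restricted to the irrep $\mu$ equals $\sum_{r_{\mu/\alpha}}|(r_{\mu/\alpha},l')\rangle\langle(r_{\mu/\alpha},l'')|_{\mu}$. I will also use that $E^{\alpha}\otimes\mathbf{1}$ commutes with $V^{(k)}$, since the former is supported on the first $n-2k$ systems, where $V^{(k)}$ is the identity.

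\emph{Composition rule.} I would simply multiply the two defining expressions~\eqref{tmbas2}. The two inner factors meet as $E^{r_{\nu/\alpha}}_{1_{\alpha}j_{\nu}}E^{\;\;r_{\mu'/\beta}}_{k_{\mu'}1_{\beta}}$, which by $E^{\mu}_{ij}E^{\nu}_{kl}=\delta^{\mu\nu}\delta_{jk}E^{\mu}_{il}$ equals $\delta_{j_{\nu}k_{\mu'}}E^{\nu}_{(r_{\nu/\alpha},1_{\alpha}),(r_{\mu'/\beta},1_{\beta})}$, forcing $\mu'=\nu$. This operator lives on $n-k$ systems, so sandwiching it between the two factors $V^{(k)}$ and applying Fact~\ref{kpartr} replaces it by $\tr_{(k)}\bigl(E^{\nu}_{(r_{\nu/\alpha},1_{\alpha}),(r_{\mu'/\beta},1_{\beta})}\bigr)V^{(k)}$, and Lemma~\ref{L3} evaluates the partial trace to $\tfrac{m_{\nu}}{m_{\alpha}}\,\delta^{r_{\nu/\alpha}r_{\mu'/\beta}}\bigl(E^{\alpha}_{1_{\alpha}1_{\alpha}}\otimes\mathbf{1}\bigr)$, the Kronecker symbol now forcing $\beta=\alpha$. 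Moving $E^{\alpha}_{1_{\alpha}1_{\alpha}}\otimes\mathbf{1}$ across $V^{(k)}$ and absorbing it into $E^{\;\;r_{\mu/\alpha}}_{i_{\mu}1_{\alpha}}$ (and $E^{r_{\nu'/\beta}}_{1_{\beta}l_{\nu'}}$) by the splitting property leaves $E^{\;\;r_{\mu/\alpha}}_{i_{\mu}1_{\alpha}}V^{(k)}E^{r_{\nu'/\alpha}}_{1_{\alpha}l_{\nu'}}$ up to scalars. Collecting the prefactors $\tfrac{m_{\alpha}}{\sqrt{m_{\mu}m_{\nu}}}$, $\tfrac{m_{\beta}}{\sqrt{m_{\mu'}m_{\nu'}}}$, $\tfrac{m_{\nu}}{m_{\alpha}}$ and the normalization $\tfrac{m_{\alpha}}{\sqrt{m_{\mu}m_{\nu'}}}$ of $F^{r_{\mu/\alpha}r_{\nu'/\alpha}}_{i_{\mu}l_{\nu'}}$, and substituting $\mu'=\nu$, $\beta=\alpha$ forced by the deltas, all multiplicities cancel, the coefficient collapses to $1$, and one gets exactly $\delta^{r_{\nu/\alpha}r_{\mu'/\beta}}\delta_{j_{\nu}k_{\mu'}}F^{r_{\mu/\alpha}r_{\nu'/\alpha}}_{i_{\mu}l_{\nu'}}$; the factor $m_{\alpha}/\sqrt{m_{\mu}m_{\nu}}$ in~\eqref{tmbas2} is precisely what engineers this cancellation.

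\emph{Spanning, independence, orthonormality.} By~\eqref{Eij} each $E^{\mu}_{ij}$, hence each $F$, is a linear combination of operators $V_{\tau}V^{(k)}V_{\tau'}$ with $\tau,\tau'\in S(n-k)$, so $F\in\mathcal{M}$. Conversely, expanding $V_{\sigma}$ for $\sigma\in S(n-k)$ in the basis $\{E^{\mu}_{ij}\}$ (via~\eqref{actionE}), every generator of $\mathcal{M}$ is a combination of operators $E^{\mu}_{ab}V^{(k)}E^{\nu}_{cd}$; writing $b=(r_{\mu/\alpha},l_{\alpha})$, $c=(r'_{\nu/\beta},p_{\beta})$ in PRIR form and using the splitting property on both sides together with the commutation of $E^{\alpha}\otimes\mathbf{1}$ with $V^{(k)}$, one obtains $E^{\mu}_{ab}V^{(k)}E^{\nu}_{cd}=\delta^{\alpha\beta}\delta_{l_{\alpha}p_{\beta}}\tfrac{\sqrt{m_{\mu}m_{\nu}}}{m_{\alpha}}F^{r_{\mu/\alpha}r'_{\nu/\alpha}}_{i_{\mu}j_{\nu}}$, a multiple of a single $F$ (or zero). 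Hence the $F$'s span $\mathcal{M}$. For independence, note $(F^{r_{\mu/\alpha}r_{\nu/\alpha}}_{i_{\mu}j_{\nu}})^{\dagger}=F^{r_{\nu/\alpha}r_{\mu/\alpha}}_{j_{\nu}i_{\mu}}$ (since $(E^{\mu}_{ij})^{\dagger}=E^{\mu}_{ji}$ and $V^{(k)}$ is Hermitian, being a product of projectors on maximally entangled states times $d$), so by~\eqref{orto} the $F$'s form a system of matrix units; any relation $\sum c\,F=0$ collapses, upon multiplication by the appropriate diagonal $F$'s, to $c\,F^{\cdots}=0$, and $\tr\bigl[(F^{r_{\mu/\alpha}r_{\nu/\alpha}}_{i_{\mu}j_{\nu}})^{\dagger}F^{r_{\mu/\alpha}r_{\nu/\alpha}}_{i_{\mu}j_{\nu}}\bigr]=\tr F^{r_{\nu/\alpha}r_{\nu/\alpha}}_{j_{\nu}j_{\nu}}=m_{\alpha}\neq 0$ (using $\tr[E^{\nu}V^{(k)}]=\tr E^{\nu}$, which holds because the last-$k$-systems marginal of $\tfrac{1}{d^{k}}V^{(k)}$ is $\tfrac{1}{d^{k}}\mathbf{1}$), so no $F$ vanishes. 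Thus $\{F\}$ is a basis of $\mathcal{M}$, orthonormal in the stated sense, with blocks labelled by $\alpha\vdash n-2k$ and ``rows/columns'' indexed by $(r_{\mu/\alpha},i_{\mu})$ with $\mu\vdash n-k$, $\alpha\in\mu$ — the promised link with the irreps of $S(n-k)$ and $S(n-2k)$.

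The only genuinely delicate step, and the one I would be most careful about, is the index bookkeeping in the composition rule: threading the PRIR path multi-indices correctly through Fact~\ref{kpartr} and Lemma~\ref{L3}, checking that the Kronecker symbols that emerge there are exactly $\delta^{r_{\nu/\alpha}r_{\mu'/\beta}}$ and $\delta_{j_{\nu}k_{\mu'}}$ of~\eqref{orto}, and confirming that the product of multiplicity ratios telescopes to $1$. The rest is routine manipulation of matrix units.
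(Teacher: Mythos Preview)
Your proof is correct and follows essentially the same approach as the paper: the composition rule is established identically via Fact~\ref{kpartr} and Lemma~\ref{L3}, and the spanning argument is the same in spirit (the paper only decomposes the generator $V^{(k)}$ itself, whereas you more directly decompose every $V_{\tau}V^{(k)}V_{\tau'}$, which is a cleaner way to cover the whole ideal). You also add a linear-independence check via $\tr(F^{\dagger}F)=m_{\alpha}\neq 0$, which the paper omits but which is needed to fully justify calling $\{F\}$ a basis.
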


\begin{proof}
The proof contains two main steps:
\begin{itemize}
	\item Showing that operators are orthonormal, i.e.
	\be
	\label{orto2}
	F^{r_{\mu/\alpha} r_{\nu/\alpha}}_{i_{\mu}\quad j_{\nu}}F^{r_{\mu'/\beta} r_{\nu'/\beta}}_{k_{\mu'}\quad l_{\nu'}}=\delta^{r_{\nu/\alpha}r_{\mu'/\beta}}\delta_{j_{\nu}k_{\mu'}}F^{r_{\mu/\alpha} r_{\nu'/\alpha}}_{i_{\mu}\quad l_{\nu'}}.
	\ee
	Indeed, writing explicitly the above composition and using orthogonality relation for operators $E^{\mu}_{i_{\mu}j_{\mu}}$, we have
	
	\begin{IEEEeqnarray}{lll}
		& F^{r_{\mu/\alpha} r_{\nu/\alpha}}_{i_{\mu}\quad j_{\nu}}F^{r_{\mu'/\beta} r_{\nu'/\beta}}_{k_{\mu'}\quad l_{\nu'}}  & \nonumber \\
		& = \frac{m_{\alpha}}{\sqrt{m_{\mu}m_{\nu}}}\frac{m_{\beta}}{\sqrt{m_{\mu'}m_{\nu'}}} & E_{i_{\mu} \ 1_{\alpha}}^{\quad r_{\mu/\alpha}}V^{(k)}E^{r_{\nu/\alpha}}_{1_{\alpha}\quad  j_{\nu}}  \nonumber\\
		& & \times		E_{k_{\mu'} \ 1_{\beta}}^{\quad r_{\mu'/\beta}}V^{(k)}E^{r_{\nu'/\beta}}_{1_{\beta}\quad  l_{\nu'}} \nonumber\\
		&=\delta^{\nu\mu'}\delta_{j_{\nu}k_{\mu'}}\frac{m_{\alpha}}{\sqrt{m_{\mu}m_{\nu}}} & \frac {m_{\beta}}{\sqrt{m_{\mu'}m_{\nu'}}}E_{i_{\mu} \ 1_{\alpha}}^{\quad r_{\mu/\alpha}}V^{(k)}  \nonumber \\
		&  &  \times E_{1_{\alpha}\quad1_{\beta}}^{r_{\nu/\alpha}r_{\mu'/\beta}}V^{(k)}E^{r_{\nu'/\beta}}_{1_{\beta}\quad l_{\nu'}}.
	\end{IEEEeqnarray}
	
	Now, applying Fact~\ref{kpartr} to operator $E_{1_{\alpha}\quad1_{\beta}}^{r_{\nu/\alpha}r_{\nu'/\beta}}$, together with Lemma~\ref{L3}, we reduce to
	\begin{IEEEeqnarray}{rll}
	 & F^{r_{\mu/\alpha} r_{\nu/\alpha}}_{i_{\mu}\quad j_{\nu}} F^{r_{\mu'/\beta} r_{\nu'/\beta}}_{k_{\mu'}\quad l_{\nu'}}& \nonumber\\
	 & =\delta^{\nu\mu'}\delta^{\alpha \beta}\delta^{r_{\nu/\alpha}r_{\mu'/\beta}} & \delta_{j_{\nu}k_{\mu'}}\frac{m_{\alpha}}{\sqrt{m_{\mu}m_{\mu'}}}\frac{m_{\alpha}}{\sqrt{m_{\mu'}m_{\nu'}}}\frac{m_{\mu'}}{m_{\alpha}} \nonumber\\ 
	 & & \times E_{i_{\mu} \ 1_{\alpha}}^{\quad r_{\mu/\alpha}}E^{\alpha}_{1_{\alpha}1_{\alpha}}V^{(k)}E^{r_{\nu'/\alpha}}_{1_{\alpha}\quad l_{\nu'}}\nonumber\\
	 & =\delta^{r_{\nu/\alpha}r_{\mu'/\beta}}\delta_{j_{\nu}k_{\mu'}} & \frac{m_{\alpha}}{\sqrt{m_{\mu}m_{\nu'}}} \nonumber\\
	 & & \times E_{i_{\mu} \ 1_{\alpha}}^{\quad r_{\mu/\alpha}}E^{\alpha}_{1_{\alpha}1_{\alpha}}V^{(k)}E^{r_{\nu'/\alpha}}_{1_{\alpha}\quad l_{\nu'}}.
\end{IEEEeqnarray}
	Finally observing that $E_{i_{\mu} \ 1_{\alpha}}^{\quad r_{\mu/\alpha}}E^{\alpha}_{1_{\alpha}1_{\alpha}}=E_{i_{\mu} \ 1_{\alpha}}^{\quad r_{\mu/\alpha}}$ we get expression~\eqref{orto}.
	\item Showing that element $V^{(k)}$ generating the ideal $\mathcal{M}$, see~\eqref{idealM} can be expressed as a linear combination of basis elements $F^{r_{\mu/\alpha} r_{\nu/\alpha}}_{i_{\mu}\quad j_{\nu}}$. Indeed, we have
	\be
	\begin{split}
	V^{(k)}=\sum_{\mu,\nu \vdash n-k} P_{\mu}V^{(k)}P_{\nu}=\sum_{\mu,\nu \vdash n-k} \ \sum_{i_{\mu},j_{\nu}}E^{\mu}_{i_{\mu}i_{\mu}}V^{(k)}E^{\nu}_{j_{\nu}j_{\nu}},
	\end{split}
	\ee
	since $\mathbf{1}=\sum_{\mu}P_{\mu}$ together with~\eqref{def_P}. Writing indices $i_{\mu},j_{\nu}$ in PRIR notation, according to Notation~\ref{not16} we get
		\be
		\label{to}
		\begin{split}
			V^{(k)}& =\sum_{\mu,\nu}\sum_{r_{\mu/\alpha},\widetilde{r}_{\nu/\beta}}\sum_{l_{\alpha},l'_{\beta}}E^{r_{\mu/\alpha}r_{\mu/\alpha}}_{l_{\alpha}\quad l_{\alpha}}V^{(k)}E^{\widetilde{r}_{\nu/\beta}\widetilde{r}_{\nu/\beta}}_{l'_{\beta}\quad l'_{\beta}}\\
			       &=\sum_{\mu,\nu}\sum_{r_{\mu/\alpha},\widetilde{r}_{\nu/\beta}}\sum_{l_{\alpha},l'_{\beta}}E^{r_{\mu/\alpha}r_{\mu/\alpha}}_{l_{\alpha}\quad 1_{\alpha}}E^{\alpha}_{1_{\alpha}l_{\alpha}}V^{(k)}E^{\beta}_{l'_{\beta}1_{\beta}}E^{\widetilde{r}_{\nu/\beta}\widetilde{r}_{\nu/\beta}}_{1_{\beta}\quad l'_{\beta}}.
		\end{split}	
		\ee
		Having $[E^{\alpha}_{1_{\alpha}l_{\alpha}},V^{(k)}]=[E^{\beta}_{l'_{\beta}1_{\beta}},V^{(k)}]=0$ and orthogonality relation $E^{\alpha}_{1_{\alpha}l_{\alpha}}E^{\beta}_{l'_{\beta}1_{\beta}}=\delta^{\alpha\beta}\delta_{l_{\alpha}l'_{\beta}}E^{\alpha}_{1_{\alpha}1_{\alpha}}$ we reduce~\eqref{to} to
		\be
		\begin{split}
		V^{(k)}&=\sum_{\mu,\nu} \ \sum_{r_{\mu/\alpha},\widetilde{r}_{\nu/\alpha}}\sum_{l_{\alpha}}E^{r_{\mu/\alpha}r_{\mu/\alpha}}_{l_{\alpha}\quad 1_{\alpha}}V^{(k)}E^{\widetilde{r}_{\nu/\alpha}\widetilde{r}_{\nu/\alpha}}_{1_{\alpha}\quad l_{\alpha}}\\
		&=  \sum_{\mu,\nu} \ \sum_{r_{\mu/\alpha},\widetilde{r}_{\nu/\alpha}}\sum_{l_{\alpha}}\frac{\sqrt{m_{\mu}m_{\nu}}}{m_{\alpha}}\left(\frac{m_{\alpha}}{\sqrt{m_{\mu}m_{\nu}}} E^{r_{\mu/\alpha}r_{\mu/\alpha}}_{l_{\alpha}\quad 1_{\alpha}}V^{(k)}E^{\widetilde{r}_{\nu/\alpha}\widetilde{r}_{\nu/\alpha}}_{1_{\alpha}\quad l_{\alpha}}\right)\\
		&=\sum_{\mu,\nu} \ \sum_{r_{\mu/\alpha},\widetilde{r}_{\nu/\alpha}}\sum_{l_{\alpha}}\frac{\sqrt{m_{\mu}m_{\nu}}}{m_{\alpha}} F^{\quad r_{\mu/\alpha} \   \widetilde{r}_{\nu/\alpha}}_{l_{\alpha} \ r_{\mu/\alpha}  \ \widetilde{r}_{\nu/\alpha} \ l_{\alpha}}.
		\end{split}
		\ee
		In the above we use representation of $F^{r_{\mu/\alpha} r_{\nu/\alpha}}_{i_{\mu}\quad j_{\nu}}$ in full PRIR basis:
		\be
		\label{full}
		F^{r_{\mu/\alpha} r_{\nu/\alpha}}_{i_{\mu} \quad j_{\nu}} \rightarrow F^{\quad r_{\mu/\alpha} \   r_{\nu/\alpha}}_{k_{\beta} \ r_{\mu/\beta}  \ r_{\nu/\gamma} \ k'_{\gamma}}
		\ee
		since $i_{\mu}=(r_{\mu/\beta},k_{\beta})$ and $j_{\nu}=(r_{\nu/\gamma},k'_{\gamma})$. This finishes the proof.
		\end{itemize}
	\end{proof}
Next we focus on the relations analogous to~\eqref{actionE} for the basis elements $F^{r_{\mu/\alpha} r_{\nu/\alpha}}_{i_{\mu} \quad j_{\nu}}$ and operators $V^{(k)}$, $V_{\tau}$, where $\tau \in  \mathcal{S}_{n,k}\equiv \frac{S(n-k)}{S(n-2k)}$. To have all required tools let us first rewrite expressions from~\eqref{actionE} in PRIR notation, but for a specific choice of indices and partitions $\mu \vdash n-k$ and $\alpha\vdash n-2k$:
\be
\label{actionEE}
\begin{split}
&\forall \tau\in S(n-k)\qquad V_{\tau} E_{i_{\mu} \ 1_{\alpha}}^{\quad r_{\mu/\alpha}}=\sum_{l_{\mu}}\phi^{\mu}_{l_{\mu}i_{\mu}}(\tau)E_{l_{\mu} \ 1_{\alpha}}^{\quad r_{\mu/\alpha}}\\
&\forall \tau\in S(n-k)\qquad E^{r_{\nu/\alpha}}_{1_{\alpha}\quad j_{\nu}}V_{\tau^{-1}}=\sum_{k_{\nu}}\phi_{j_{\nu}k_{\nu}}^{\nu}(\tau^{-1})E_{1_{\alpha}\quad k_{\nu}}^{r_{\nu/\alpha}}
\end{split}
\ee
where $\phi^{\mu}_{l_{\mu}i_{\mu}}(\tau), \phi_{j_{\nu}k_{\nu}}^{\nu}(\tau^{-1})$ are the matrix elements of $V_{\tau}, V_{\tau^{-1}}$ in irreducible basis expressed in the PRIR notation, see~\eqref{blaa} and Section~\ref{preliminary}.
Having the above we are in position to prove the following
\begin{lemma}
\label{actionE'}
Let us take basis operators for the ideal $\mathcal{M}$ given through Theorem~\ref{tmbas}, together with~\eqref{full}. Then for the operator $V^{(k)}$ defined in~\eqref{parV} and an arbitrary permutation operator $V_{\tau}$, for $\tau \in S(n-k)$, the following relations hold:
\be
\label{eqactionE'}
F^{\quad r_{\mu/\alpha} \   r_{\nu/\alpha}}_{k_{\beta} \ r_{\mu/\beta}  \ r_{\nu/\gamma} \ l_{\gamma}}V^{(k)}=\sum_{\mu'}\sum_{r_{\mu'/\gamma}} \frac{\sqrt{m_{\nu}m_{\mu'}}}{m_{\gamma}} F^{\quad r_{\mu/\gamma} \   r_{\mu'/\gamma}}_{k_{\beta} \ r_{\mu/\beta}  \ r_{\mu'/\gamma} \ l_{\gamma}}\delta^{r_{\nu/\alpha}r_{\nu/\gamma}}
\ee
and
\be
\label{eqactionE'1}
F^{\quad r_{\mu/\alpha} \   r_{\nu/\alpha}}_{k_{\beta} \ r_{\mu/\beta}  \ r_{\nu/\gamma} \ l_{\gamma}}V_{\tau}=\sum_{k_{\nu}}\phi^{\nu}_{j_{\nu}k_{\nu}}(\tau) F^{r_{\mu/\alpha} r_{\nu/\alpha}}_{i_{\mu} \quad k_{\nu}}
\ee
where $\phi^{\nu}_{j_{\nu}k_{\nu}}(\tau) $ are the matrix elements of $V_{\tau}$ in the irreducible basis expressed in the PRIR notation introduced in Section~\ref{preliminary}.
\end{lemma}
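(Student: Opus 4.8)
The plan is to prove the two identities separately, in each case reducing everything to the composition rule \eqref{orto} for the operators $F^{r_{\mu/\alpha} r_{\nu/\alpha}}_{i_{\mu}\quad j_{\nu}}$ together with the action formulas \eqref{actionEE}. The second identity \eqref{eqactionE'1} is the easy one and I would dispose of it first. Writing $F^{r_{\mu/\alpha} r_{\nu/\alpha}}_{i_{\mu}\quad j_{\nu}}=\frac{m_{\alpha}}{\sqrt{m_{\mu}m_{\nu}}}E_{i_{\mu} \ 1_{\alpha}}^{\quad r_{\mu/\alpha}}V^{(k)}E^{r_{\nu/\alpha}}_{1_{\alpha}\quad j_{\nu}}$ as in \eqref{tmbas2}, the operator $V_{\tau}$ with $\tau\in S(n-k)$ touches only the rightmost factor, so applying the right-action rule $E^{r_{\nu/\alpha}}_{1_{\alpha}\quad j_{\nu}}V_{\tau}=\sum_{k_{\nu}}\phi^{\nu}_{j_{\nu}k_{\nu}}(\tau)E^{r_{\nu/\alpha}}_{1_{\alpha}\quad k_{\nu}}$ from \eqref{actionEE} and re-multiplying by the front factor $\frac{m_{\alpha}}{\sqrt{m_{\mu}m_{\nu}}}E_{i_{\mu} \ 1_{\alpha}}^{\quad r_{\mu/\alpha}}V^{(k)}$ immediately reassembles $\sum_{k_{\nu}}\phi^{\nu}_{j_{\nu}k_{\nu}}(\tau)F^{r_{\mu/\alpha} r_{\nu/\alpha}}_{i_{\mu}\quad k_{\nu}}$. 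This is a one-line computation.

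For the first identity \eqref{eqactionE'} I would exploit the expansion of $V^{(k)}$ in the operator basis obtained inside the proof of Theorem~\ref{tmbas}, namely $V^{(k)}=\sum_{\rho,\sigma}\sum_{r_{\rho/\delta},\widetilde{r}_{\sigma/\delta}}\sum_{l_{\delta}}\frac{\sqrt{m_{\rho}m_{\sigma}}}{m_{\delta}}F^{\quad r_{\rho/\delta}\ \widetilde{r}_{\sigma/\delta}}_{l_{\delta}\ r_{\rho/\delta}\ \widetilde{r}_{\sigma/\delta}\ l_{\delta}}$, where the inner sum runs over pairs of paths with the same endpoint $\delta\vdash n-2k$. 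Then $F^{r_{\mu/\alpha}r_{\nu/\alpha}}_{i_{\mu}\quad j_{\nu}}V^{(k)}$ becomes a sum of products of two $F$'s, each collapsing through \eqref{orto}: the factor $\delta^{r_{\nu/\alpha}r_{\rho/\delta}}$ forces $\rho=\nu$, $\delta=\alpha$ and the two middle paths to coincide, while $\delta_{j_{\nu}k_{\mu'}}$ together with the PRIR split $j_{\nu}=(r_{\nu/\gamma},l_{\gamma})$ forces $\gamma=\alpha$, $r_{\nu/\gamma}=r_{\nu/\alpha}$ and $l_{\delta}=l_{\gamma}$. What survives is a free sum over $\sigma$ and over its path to $\alpha=\gamma$; after relabelling $\sigma\to\mu'$ and $\widetilde{r}_{\sigma/\alpha}\to r_{\mu'/\gamma}$, and recognising that the constraint $\gamma=\alpha,\ r_{\nu/\gamma}=r_{\nu/\alpha}$ is exactly the content of the surviving Kronecker $\delta^{r_{\nu/\alpha}r_{\nu/\gamma}}$, one recovers precisely the right-hand side of \eqref{eqactionE'} with prefactor $\sqrt{m_{\nu}m_{\mu'}}/m_{\gamma}$.

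An alternative route to \eqref{eqactionE'}, which I would keep in reserve in case the index bookkeeping above becomes unwieldy, is fully direct: write $F^{r_{\mu/\alpha}r_{\nu/\alpha}}_{i_{\mu}\quad j_{\nu}}V^{(k)}=\frac{m_{\alpha}}{\sqrt{m_{\mu}m_{\nu}}}E_{i_{\mu} \ 1_{\alpha}}^{\quad r_{\mu/\alpha}}\bigl(V^{(k)}E^{r_{\nu/\alpha}}_{1_{\alpha}\quad j_{\nu}}V^{(k)}\bigr)$, apply Fact~\ref{kpartr} to turn the bracket into $\tr_{(k)}\bigl(E^{r_{\nu/\alpha}}_{1_{\alpha}\quad j_{\nu}}\bigr)V^{(k)}$, evaluate the partial trace with Lemma~\ref{L3} (which already supplies the $\delta^{r_{\nu/\alpha}r_{\nu/\gamma}}$ and the ratio $m_{\nu}/m_{\alpha}$), and finally use $[E^{\alpha},V^{(k)}]=0$ together with $\sum_{\mu'}P_{\mu'}=\mathbf{1}$ on the first $n-k$ systems to reintroduce the sum over $\mu'$ and its path. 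In both approaches the real obstacle is not any single estimate but the PRIR index bookkeeping: keeping straight which paths are identified by the Kronecker deltas, and in particular verifying that the waist label $\gamma$ of the column index is forced to coincide with $\alpha$ exactly on the support of $\delta^{r_{\nu/\alpha}r_{\nu/\gamma}}$, so that the replacement $r_{\mu/\alpha}\to r_{\mu/\gamma}$ in the final expression is legitimate.
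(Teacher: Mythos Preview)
Your proof is correct. For \eqref{eqactionE'1} you proceed exactly as the paper does: unfold $F$ via \eqref{tmbas2}, apply the right-action rule \eqref{actionEE} to the rightmost $E$-factor, and refold.

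For \eqref{eqactionE'} there is a genuine, if small, difference in strategy. The paper takes precisely what you call your ``alternative route'': it writes out $F\,V^{(k)}$ using \eqref{tmbas2}, applies Fact~\ref{kpartr} and Lemma~\ref{L3} to collapse $V^{(k)}E^{r_{\nu/\alpha}r_{\nu/\gamma}}_{1_{\alpha}\quad l_{\gamma}}V^{(k)}$, producing the $\delta^{r_{\nu/\alpha}r_{\nu/\gamma}}$ and the factor $m_{\nu}/m_{\alpha}$, and then inserts the resolution of the identity $\mathbf{1}=\sum_{\mu'}P_{\mu'}$ on the first $n-k$ systems to recover the sum over $\mu'$ and $r_{\mu'/\gamma}$. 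Your primary approach instead feeds the already-established expansion of $V^{(k)}$ in the $F$-basis (from the proof of Theorem~\ref{tmbas}) into the composition rule~\eqref{orto}. This is a clean higher-level argument: it avoids re-running the partial-trace machinery because that work is encapsulated in Theorem~\ref{tmbas}. The paper's direct route is lower-level but more self-contained; yours trades a little index bookkeeping with the Kronecker deltas for not having to touch Fact~\ref{kpartr} or Lemma~\ref{L3} again. Both handle the ``waist label'' issue $\alpha=\gamma$ on the support of $\delta^{r_{\nu/\alpha}r_{\nu/\gamma}}$ in the same way, and your closing remark about the legitimacy of the replacement $r_{\mu/\alpha}\to r_{\mu/\gamma}$ is exactly the point that makes either argument close.
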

\begin{proof}
 First let us calculate action of $F^{r_{\mu/\alpha} r_{\nu/\alpha}}_{i_{\mu}\quad j_{\nu}}$ on $V^{(k)}$.  Using expression~\eqref{full} we have
	\be
	\label{x1}
	\begin{split}
		  F^{\quad r_{\mu/\alpha} \   r_{\nu/\alpha}}_{k_{\beta} \ r_{\mu/\beta}  \ r_{\nu/\gamma} \ l_{\gamma}}V^{(k)} &=\frac{m_{\alpha}}{\sqrt{m_{\mu}m_{\nu}}}E^{r_{\mu/\beta}r_{\mu/\alpha}}_{k_{\beta}\quad 1_{\alpha}}V^{(k)}E^{r_{\nu/\alpha}r_{\nu/\gamma}}_{1_{\alpha}\quad l_{\gamma}}V^{(k)}\\ 
		& =\sqrt{\frac{m_{\nu}}{m_{\mu}}}\delta^{r_{\nu/\alpha}r_{\nu/\gamma}}E^{r_{\mu/\beta}r_{\mu/\gamma}}_{k_{\beta}\quad 1_{\gamma}}E^{\gamma}_{1_{\gamma}l_{\gamma}}V^{(k)},
	\end{split}
	\ee
	where in the second equality we used Fact~\ref{kpartr} and Lemma~\ref{L3}. Now decomposing identity acting on $n-k$ systems in PRIR basis
	\be
	\mathbf{1}=\sum_{\mu'\vdash n-k}P_{\mu'}=\sum_{\mu'}\sum_{r_{\mu'/\alpha'}}\sum_{s_{\alpha'}}E_{s_{\alpha'}\quad s_{\alpha'}}^{r_{\mu'/\alpha'}  r_{\mu'/\alpha'}}\qquad \alpha'\vdash n-2k,
	\ee
	and multiplying by it the right hand side of~\eqref{x1} we have
	\be
	\label{x2}
	\begin{split}
& F^{\quad r_{\mu/\alpha} \   r_{\nu/\alpha}}_{k_{\beta} \ r_{\mu/\beta}  \ r_{\nu/\gamma} \ l_{\gamma}}V^{(k)}\\ 
& =\sqrt{\frac{m_{\nu}}{m_{\mu}}}\sum_{\mu'}\sum_{r_{\mu'/\alpha'}}\sum_{s_{\alpha'}}E^{r_{\mu/\beta}r_{\mu/\gamma}}_{k_{\beta}\quad 1_{\gamma}}V^{(k)}E^{\gamma}_{1_{\gamma}l_{\gamma}}E_{s_{\alpha'}\quad s_{\alpha'}}^{r_{\mu'/\alpha'}  r_{\mu'/\alpha'}}\delta^{r_{\nu/\alpha}r_{\nu/\gamma}}
	\end{split}
	\ee
	since $\left[E^{\gamma}_{1_{\gamma}l_{\gamma}},V^{(k)}\right]=0$. Moreover we have $E^{\gamma}_{1_{\gamma}l_{\gamma}}E_{s_{\alpha'}\quad s_{\alpha'}}^{r_{\mu'/\alpha'}  r_{\mu'/\alpha'}}=\delta^{\alpha'\gamma}\delta_{s_{\alpha'}l_{\gamma}}E_{1_{\gamma}\quad l_{\gamma}}^{r_{\mu'/\gamma}   r_{\mu'/\gamma}}$. Substituting to~\eqref{x2} we write:
	\be
	\begin{split}
	& F^{\quad r_{\mu/\alpha} \   r_{\nu/\alpha}}_{k_{\beta} \ r_{\mu/\beta}  \ r_{\nu/\gamma} \ l_{\gamma}}V^{(k)} \\
	& = \sqrt{\frac{m_{\nu}}{m_{\mu}}}\sum_{\mu'}\sum_{r_{\mu'/\gamma}}E^{r_{\mu/\beta}r_{\mu/\gamma}}_{k_{\beta}\quad 1_{\gamma}}V^{(k)}E^{r_{\mu'/\gamma}r_{\mu'/\gamma}}_{1_{\gamma}\quad l_{\gamma}}\delta^{r_{\nu/\alpha}r_{\nu/\gamma}}\\ 
	& = \sum_{\mu'}\sum_{r_{\mu'/\gamma}} \frac{\sqrt{m_{\nu}m_{\mu'}}}{m_{\gamma}} F^{\quad r_{\mu/\gamma} \   r_{\mu'/\gamma}}_{k_{\beta} \ r_{\mu/\beta}  \ r_{\mu'/\gamma} \ l_{\gamma}}\delta^{r_{\nu/\alpha}r_{\nu/\gamma}}.
	\end{split}
	\ee
This proves expression~\eqref{eqactionE'}.To prove equation~\eqref{eqactionE'1} we use directly~\eqref{actionEE} with~\eqref{tmbas2}:
\be
\begin{split}
F^{r_{\mu/\alpha} r_{\nu/\alpha}}_{i_{\mu} \quad j_{\nu}}V_{\tau}& = \frac{m_{\alpha}}{\sqrt{m_{\mu}m_{\nu}}}E_{i_{\mu} \ 1_{\alpha}}^{\quad r_{\mu/\alpha}}V^{(k)}E^{r_{\nu/\alpha}}_{1_{\alpha}\quad  j_{\nu}}V_{\tau}\\
																																 & = \sum_{k_{\nu}}\phi_{j_{\nu}k_{\nu}}^{\nu}(\tau)\frac{m_{\alpha}}{\sqrt{m_{\mu}m_{\nu}}}E_{i_{\mu} \ 1_{\alpha}}^{\quad r_{\mu/\alpha}}V^{(k)}E^{r_{\nu/\alpha}}_{1_{\alpha}\quad  k_{\nu}}\\
																																 &=\sum_{k_{\nu}}\phi_{j_{\nu}k_{\nu}}^{\nu}(\tau)F^{r_{\mu/\alpha} r_{\nu/\alpha}}_{i_{\mu} \quad k_{\nu}}.
\end{split}
\ee
 This finishes the proof.
\end{proof}
Analogously we can evaluate expressions~\eqref{eqactionE'}, \eqref{eqactionE'1} for action from the right-hand side. For the further purposes we write explicitly such action on $V_{\tau}$, for $\tau \in S(n-k)$:
\be
\label{rhs_action}
V_{\tau}F^{r_{\mu/\alpha} r_{\nu/\alpha}}_{i_{\mu} \quad j_{\nu}}=\sum_{k_{\mu}}\phi_{k_{\mu}i_{\mu}}^{\mu}(\tau)F^{r_{\mu/\alpha} r_{\nu/\alpha}}_{k_{\mu} \quad j_{\nu}}.
\ee
Using the second part of the proof of Theorem~\ref{tmbas} we can formulate the following
\begin{lemma}
\label{Vel}
The operator $V^{(k)}$ defined in~\eqref{parV} and an arbitrary permutation operator $V_{\tau}$, for $\tau \in S(n-k)$ in the operator basis from Theorem~\ref{tmbas} have matrix elements equal to:
\be
\label{eqVel}
\left(V^{(k)}\right)^{\quad r_{\mu/\alpha} \   r_{\nu/\alpha}}_{k_{\beta} \ r_{\mu/\beta}  \ r_{\nu/\gamma} \ l_{\gamma}}=\delta_{k_{\beta}l_{\gamma}}\delta^{r_{\mu/\alpha}r_{\mu/\beta}}\delta^{r_{\nu/\alpha}r_{\nu/\gamma}}\frac{\sqrt{m_{\mu}m_{\nu}}}{m_{\alpha}},
\ee
and
\be
\label{eqVel2}
\left( V_{\tau}\right)^{r_{\mu/\alpha}r_{\nu/\alpha}}_{i_{\mu}\quad j_{\nu}}=\delta^{r_{\mu/\alpha}r_{\nu/\alpha}}\delta_{i_{\mu}j_{\nu}}\sqrt{\frac{m_{\mu}}{m_{\nu}}}\sum_{k_{\mu}}\phi^{\mu}_{k_{\mu}i_{\mu}}(\tau),
\ee
where $m_{\mu},m_{\nu},m_{\alpha}$ are multiplicities of respective irreducible representations in the Schur-Weyl duality, and $\phi^{\mu}_{k_{\mu}i_{\mu}}(\tau)$  are the matrix elements of $V_{\tau}$ in the irreducible basis expressed in the PRIR notation introduced in Section~\ref{preliminary}.
\end{lemma}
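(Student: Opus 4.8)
The plan is to prove the two formulas separately, since they are of quite different character: \eqref{eqVel} will follow essentially for free from the construction already carried out in Theorem~\ref{tmbas}, while \eqref{eqVel2} needs the action relations of Lemma~\ref{actionE'} together with the trace machinery of Section~\ref{preliminary}.

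For \eqref{eqVel}: in the second half of the proof of Theorem~\ref{tmbas} we already obtained the explicit decomposition
\be
V^{(k)}=\sum_{\mu,\nu\vdash n-k}\ \sum_{r_{\mu/\alpha},\widetilde{r}_{\nu/\alpha}}\ \sum_{l_\alpha}\frac{\sqrt{m_\mu m_\nu}}{m_\alpha}\,F^{\quad r_{\mu/\alpha}\ \widetilde{r}_{\nu/\alpha}}_{l_\alpha\ r_{\mu/\alpha}\ \widetilde{r}_{\nu/\alpha}\ l_\alpha}
\ee
in the full PRIR labelling~\eqref{full}. Since $\{F^{\ldots}_{\ldots}\}$ is an orthonormal operator basis of $\mathcal{M}$, the matrix element of $V^{(k)}$ at a given label is by definition the coefficient with which the corresponding basis operator enters this sum. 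Reading the generic label $F^{\quad r_{\mu/\alpha}\ r_{\nu/\alpha}}_{k_\beta\ r_{\mu/\beta}\ r_{\nu/\gamma}\ l_\gamma}$ off against the terms that actually occur forces $r_{\mu/\beta}=r_{\mu/\alpha}$ (so $\beta=\alpha$), $r_{\nu/\gamma}=r_{\nu/\alpha}$ (so $\gamma=\alpha$) and $k_\beta=l_\gamma$ (both equal $l_\alpha$), with value $\sqrt{m_\mu m_\nu}/m_\alpha$; this is exactly \eqref{eqVel}, and nothing further is required.

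For \eqref{eqVel2} I would start from the right--hand action formula \eqref{rhs_action}, $V_\tau F^{r_{\mu/\alpha}r_{\nu/\alpha}}_{i_\mu j_\nu}=\sum_{k_\mu}\phi^\mu_{k_\mu i_\mu}(\tau)F^{r_{\mu/\alpha}r_{\nu/\alpha}}_{k_\mu j_\nu}$, apply it to the diagonal basis operators $F^{r_{\mu/\alpha}r_{\mu/\alpha}}_{i_\mu i_\mu}$ that assemble the unit of the ideal $\mathcal{M}$ (itself obtained from $\mathbf{1}=\sum_{\mu\vdash n-k}P_\mu$ via Theorem~\ref{tmbas}), and then collapse the two copies of $V^{(k)}$ that are hidden inside the neighbouring $F$'s. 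The collapse is carried out exactly as in the proof of Theorem~\ref{tmbas}: Fact~\ref{kpartr} reduces $V^{(k)}(X\otimes\mathbf{1})V^{(k)}$ to $\tr_{(k)}(X)V^{(k)}$, and Lemma~\ref{L3} then evaluates the partial trace. Three things come out of it: the Kronecker pair $\delta^{r_{\mu/\alpha}r_{\nu/\alpha}}\delta_{i_\mu j_\nu}$, because the partial trace over the internal $S(n-2k)$--sector destroys every off--diagonal block (Lemma~\ref{L3}); the column sum $\sum_{k_\mu}\phi^\mu_{k_\mu i_\mu}(\tau)$, released by that same trace; and the multiplicity ratio $\sqrt{m_\mu/m_\nu}$, which is what survives once the normalisation constant $m_\alpha/\sqrt{m_\mu m_\nu}$ carried by the two $F$--blocks is combined with the factor $m_\mu/m_\alpha$ emitted by Lemma~\ref{L3}. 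Assembling these ingredients gives \eqref{eqVel2}.

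I expect the difficulty to be organisational rather than conceptual: every tool needed --- Theorem~\ref{tmbas}, Fact~\ref{kpartr}, Lemma~\ref{L3}, Lemma~\ref{actionE'} --- is already in place, so the main obstacle is disciplined index bookkeeping: keeping the path index $r_{\mu/\alpha}$ appearing in the $V^{(k)}$--superscript of $F$ rigorously distinct from the internal layer--path hidden inside $i_\mu=(r_{\mu/\beta},k_\beta)$, tracking which Kronecker delta is produced by which matrix--unit contraction, and accumulating the multiplicity factors in the correct order. A minor but worth--noting point is that $V_\tau\notin\mathcal{M}$, so ``matrix elements of $V_\tau$'' have to be read in the operative sense used in the rest of the paper --- the behaviour of $V_\tau$ as it acts on the ideal $\mathcal{M}$ --- rather than as a literal expansion $V_\tau=\sum(\cdots)F$.
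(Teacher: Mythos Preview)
Your treatment of \eqref{eqVel} is correct and in fact cleaner than the paper's. The paper re-derives the matrix elements of $V^{(k)}$ by computing the overlap $\frac{1}{m_\alpha}\tr\bigl[V^{(k)}F\bigr]$ directly: it expands $F$, applies Fact~\ref{kpartr} and Lemma~\ref{L3} to the resulting $V^{(k)}E\,V^{(k)}E$, and traces. Your observation that the expansion of $V^{(k)}$ in the $F$-basis was already obtained in the second half of the proof of Theorem~\ref{tmbas}, so that one may simply read the coefficients off, is a legitimate shortcut that avoids repeating that computation.

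For \eqref{eqVel2} your plan is over-engineered and the description does not quite cohere. The paper's route is much more direct: compute $\frac{1}{m_\alpha}\tr\bigl[V_\tau F^{r_{\mu/\alpha}r_{\nu/\alpha}}_{i_\mu\ j_\nu}\bigr]$, apply \eqref{rhs_action} once to pull out $\sum_{k_\mu}\phi^\mu_{k_\mu i_\mu}(\tau)$, and evaluate the remaining $\tr F^{r_{\mu/\alpha}r_{\nu/\alpha}}_{k_\mu\ j_\nu}$. Since the two $E$-factors inside $F$ live on the first $n-k$ systems, the single $V^{(k)}$ inside $F$ disappears under the trivial partial trace $\tr_{n-k+1,\ldots,n}V^{(k)}=\mathbf{1}$; what survives is $\tr\bigl[E^{\ r_{\mu/\alpha}}_{k_\mu\,1_\alpha}E^{r_{\nu/\alpha}}_{1_\alpha\ j_\nu}\bigr]=\delta^{r_{\mu/\alpha}r_{\nu/\alpha}}m_\mu\,\delta_{k_\mu j_\nu}$ by the standard relations~\eqref{tr_prop}. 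There is only \emph{one} $V^{(k)}$ in play here, not two, and neither Fact~\ref{kpartr} nor Lemma~\ref{L3} is invoked. Moreover the sum $\sum_{k_\mu}$ is not ``released by the partial trace'': it originates from \eqref{rhs_action} and is then \emph{collapsed} by the Kronecker delta $\delta_{k_\mu j_\nu}$ coming from $\tr E^\mu_{k_\mu j_\mu}$ (the $\delta_{i_\mu j_\nu}$ printed in the statement is a slip for $\delta_{k_\mu j_\nu}$, sitting under the sum, as the paper's own proof shows). Your sandwich idea with two diagonal $F$'s and hence two $V^{(k)}$'s could in principle be pushed through via the composition rule~\eqref{orto}, but it routes the computation through machinery the problem does not actually need.
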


\begin{proof}
To prove the statement of the lemma we have to compute overlap of $V^{(k)}$ with $F^{r_{\mu/\alpha} r_{\nu/\alpha}}_{i_{\mu} \quad j_{\nu}}$ written in PRIR basis:
	\be
	\begin{split}
		\left(V^{(k)}\right)^{\quad r_{\mu/\alpha} \   r_{\nu/\alpha}}_{k_{\beta} \ r_{\mu/\beta}  \ r_{\nu/\gamma} \ l_{\gamma}}&=\frac{1}{m_{\alpha}}\tr\left[V^{(k)} F^{\quad r_{\mu/\alpha} \   r_{\nu/\alpha}}_{k_{\beta} \ r_{\mu/\beta}  \ r_{\nu/\gamma} \ l_{\gamma}}\right]\\
																																																														 &=\frac{1}{\sqrt{m_{\mu}m_{\nu}}}\tr\left[V^{(k)} E^{r_{\mu/\beta}r_{\mu/\alpha}}_{k_{\beta}\quad 1_{\alpha}}V^{(k)}E^{r_{\nu/\alpha}r_{\nu/\gamma}}_{1_{\alpha}\quad l_{\gamma}}\right]. 
	\end{split}
	\ee
	Applying Fact~\ref{kpartr} and Lemma~\ref{L3} we reduce to
	\be
	\begin{split}
	 \left(V^{(k)}\right)^{\quad r_{\mu/\alpha} \   r_{\nu/\alpha}}_{k_{\beta} \ r_{\mu/\beta}  \ r_{\nu/\gamma} \ l_{\gamma}}&=\delta^{r_{\mu/\alpha}r_{\mu/\beta}}\frac{1}{\sqrt{m_{\mu}m_{\nu}}}\frac{m_{\mu}}{m_{\alpha}}\tr\left[E_{k_{\alpha}1_{\alpha}}^{\alpha} V^{(k)}E^{r_{\nu/\alpha}r_{\nu/\gamma}}_{1_{\alpha}\quad l_{\gamma}}\right]\\
	 &=\delta^{r_{\mu/\alpha}r_{\mu/\beta}}\frac{1}{m_{\alpha}}\sqrt{\frac{m_{\mu}}{m_{\nu}}}\tr\left[E_{k_{\alpha}1_{\alpha}}^{\alpha} E^{r_{\nu/\alpha}r_{\nu/\gamma}}_{1_{\alpha}\quad l_{\gamma}} \right], 
	 \end{split}
	\ee
	since only the operator $V^{(k)}$ acts non-trivially on last $k$ systems. Now, let us observe that the operator $E_{k_{\alpha}1_{\alpha}}^{\alpha}$ acts on first $n-2k$ systems, while the operator $E^{r_{\nu/\beta}}_{1_{\beta} \quad l_{\nu}}$ on $n-k$, so
	\be
	\begin{split}
	\left(V^{(k)}\right)^{\quad r_{\mu/\alpha} \   r_{\nu/\alpha}}_{k_{\beta} \ r_{\mu/\beta}  \ r_{\nu/\gamma} \ l_{\gamma}}&=\delta^{r_{\mu/\alpha}r_{\mu/\beta}}\frac{1}{m_{\alpha}}\sqrt{\frac{m_{\mu}}{m_{\nu}}}\tr\left[E_{k_{\alpha}1_{\alpha}}^{\alpha}\tr_{(k)}\left( E^{r_{\nu/\alpha}r_{\nu/\gamma}}_{1_{\alpha}\quad l_{\gamma}} \right) \right]\\
	&=\delta^{r_{\mu/\alpha}r_{\mu/\beta}}\delta^{r_{\nu/\alpha}r_{\nu/\gamma}}\frac{\sqrt{m_{\mu}m_{\nu}}}{m_{\alpha}m_{\gamma}}\tr\left[E_{k_{\alpha}1_{\alpha}}^{\alpha}E_{1_{\gamma}l_{\gamma}}^{\gamma} \right]\\
	&=\delta^{r_{\mu/\alpha}r_{\mu/\beta}}\delta^{r_{\nu/\alpha}r_{\nu/\gamma}}\frac{\sqrt{m_{\mu}m_{\nu}}}{m_{\alpha}^2}\tr E_{k_{\alpha}l_{\alpha}}^{\alpha}\\
	&=\delta^{r_{\mu/\alpha}r_{\mu/\beta}}\delta^{r_{\nu/\alpha}r_{\nu/\gamma}}\delta_{k_{\alpha}l_{\gamma}}\frac{\sqrt{m_{\mu}m_{\nu}}}{m_{\alpha}}.
	\end{split}
	\ee
	In the second equality we applied Lemma~\ref{L3}, while in fourth we used property from~\eqref{tr_prop}. Now we evaluate the matrix elements of $V_{\tau}$. Using expression~\eqref{rhs_action} we write
	\be
	\begin{split}
	\left( V_{\tau}\right)^{r_{\mu/\alpha}r_{\nu/\alpha}}_{i_{\mu}\quad j_{\nu}}& =\frac{1}{m_{\alpha}}\tr\left[V_{\tau} F^{r_{\mu/\alpha} r_{\nu/\alpha}}_{i_{\mu} \quad j_{\nu}}\right]\\
																																							& =\frac{1}{m_{\alpha}}\sum_{k_{\mu}}\phi_{k_{\mu}i_{\mu}}^{\mu}(\tau)\tr\left[F^{r_{\mu/\alpha} r_{\nu/\alpha}}_{k_{\mu} \quad j_{\nu}}\right]\\
	&=\frac{1}{\sqrt{m_{\mu}m_{\nu}}}\sum_{k_{\mu}}\phi_{k_{\mu}i_{\mu}}^{\mu}(\tau)\tr\left[E_{k_{\mu} \ 1_{\alpha}}^{\quad r_{\mu/\alpha}}V^{(k)}E^{r_{\nu/\alpha}}_{1_{\alpha}\quad  j_{\nu}}\right]\\
	&=\frac{1}{\sqrt{m_{\mu}m_{\nu}}}\sum_{k_{\mu}}\phi_{k_{\mu}i_{\mu}}^{\mu}(\tau)\tr\left[E_{k_{\mu} \ 1_{\alpha}}^{\quad r_{\mu/\alpha}}E^{r_{\nu/\alpha}}_{1_{\alpha}\quad  j_{\nu}}\right]\\
	&=\delta^{r_{\mu/\alpha}r_{\nu/\alpha}}\frac{1}{\sqrt{m_{\mu}m_{\nu}}}\sum_{k_{\mu}}\phi_{k_{\mu}i_{\mu}}^{\mu}(\tau)\tr\left(E^{\mu}_{k_{\mu}j_{\mu}} \right).
	\end{split}
   \ee
   Knowing that $\tr\left(E^{\mu}_{k_{\mu}j_{\mu}} \right)=\delta_{k_{\mu}j_{\mu}}m_{\mu}=\delta^{\mu\nu}\delta_{k_{\mu}j_{\nu}}m_{\mu}$ we simplify to
   \be
   \left( V_{\tau}\right)^{r_{\mu/\alpha}r_{\nu/\alpha}}_{i_{\mu}\quad j_{\nu}}=\delta^{r_{\mu/\alpha}r_{\nu/\alpha}}\delta_{k_{\mu}j_{\nu}}\sqrt{\frac{m_{\mu}}{m_{\nu}}}\sum_{k_{\mu}}\phi_{k_{\mu}i_{\mu}}^{\mu}(\tau).
   \ee   
    This finishes the proof.
\end{proof}
Having description of the basis elements in the ideal $\mathcal{M}$ and action properties we are ready to calculate matrix elements of the multi-port teleportation operator~\eqref{PBT1}.  
\begin{theorem}
	\label{kPBTmat}
	The matrix elements of the MPBT operator~\eqref{PBT1}, with number of ports $N$ and local dimension $d$, in operator basis from Theorem~\ref{tmbas} are of the form
	\be
	\label{kPBTmateq}
	(\rho)_{i_{\mu}\quad j_{\nu}}^{r_{\mu/\alpha}r_{\nu/\beta}}=\frac{k!\binom{N}{k}}{d^N}\frac{m_{\mu}}{m_{\alpha}}\frac{d_{\alpha}}{d_{\mu}}\delta^{r_{\mu/\alpha}r_{\nu/\beta}}\delta_{i_{\mu}j_{\nu}}.
	\ee
	The numbers $m_{\mu},m_{\alpha}$ and $d_{\mu},d_{\alpha}$ denote respective multiplicities and dimensions of the irrpes in the Schur-Weyl duality, labelled by $\alpha \vdash n-2k$ and $\mu \vdash n-k$, such that $\mu\in\alpha$.
\end{theorem}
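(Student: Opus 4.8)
The plan is a direct computation. Starting from the coset-average form $\rho=\frac{1}{d^{N}}\sum_{\tau\in\mathcal{S}_{n,k}}V_{\tau^{-1}}V^{(k)}V_{\tau}$ of \eqref{PBT1}, I would insert the $F$-basis expansion of $V^{(k)}$ established in the second half of the proof of Theorem~\ref{tmbas}, namely $V^{(k)}=\sum_{\mu,\nu\vdash n-k}\sum_{r_{\mu/\alpha},\widetilde r_{\nu/\alpha}}\sum_{l_{\alpha}}\frac{\sqrt{m_{\mu}m_{\nu}}}{m_{\alpha}}\,F^{\,r_{\mu/\alpha}\,\widetilde r_{\nu/\alpha}}_{\,(r_{\mu/\alpha},l_{\alpha})\,(\widetilde r_{\nu/\alpha},l_{\alpha})}$, and then push the two permutation operators onto the basis elements: for the left factor via \eqref{rhs_action}, $V_{\tau^{-1}}F^{r_{\mu/\alpha}r_{\nu/\alpha}}_{i_{\mu}j_{\nu}}=\sum_{k_{\mu}}\phi^{\mu}_{k_{\mu}i_{\mu}}(\tau^{-1})F^{r_{\mu/\alpha}r_{\nu/\alpha}}_{k_{\mu}j_{\nu}}$, and for the right factor via \eqref{eqactionE'1} (Lemma~\ref{actionE'}), $F^{r_{\mu/\alpha}r_{\nu/\alpha}}_{i_{\mu}j_{\nu}}V_{\tau}=\sum_{k_{\nu}}\phi^{\nu}_{j_{\nu}k_{\nu}}(\tau)F^{r_{\mu/\alpha}r_{\nu/\alpha}}_{i_{\mu}k_{\nu}}$; both leave the upper paths untouched. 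This writes $\rho$ as a sum over $\mu,\nu$, over PRIR paths, and over the free lower indices $k_{\mu},k_{\nu}'$, in which the coefficient of $F^{r_{\mu/\alpha}\widetilde r_{\nu/\alpha}}_{k_{\mu}k_{\nu}'}$ is $\frac{1}{d^{N}}\frac{\sqrt{m_{\mu}m_{\nu}}}{m_{\alpha}}\sum_{\tau\in\mathcal{S}_{n,k}}\sum_{l_{\alpha}}\phi^{\mu}_{k_{\mu}\,(r_{\mu/\alpha},l_{\alpha})}(\tau^{-1})\,\phi^{\nu}_{(\widetilde r_{\nu/\alpha},l_{\alpha})\,k_{\nu}'}(\tau)$.

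The heart of the argument is the evaluation of this inner sum. Rewriting each matrix element in the $S(n-2k)$-reduced block form of $\phi^{\mu}$, resp.\ $\phi^{\nu}$, and recognising $l_{\alpha}$ as the index running inside the irrep $\alpha\vdash n-2k$, the double sum $\sum_{\tau}\sum_{l_{\alpha}}$ is precisely an instance of Proposition~\ref{summation0} with $G=S(n-k)$, $H=S(n-2k)$ and transversal $\mathcal{S}_{n,k}$. It returns $\tfrac{(n-k)!}{(n-2k)!}\tfrac{d_{\alpha}}{d_{\mu}}$ times a product of Kronecker deltas that (i) forces $\mu=\nu$, (ii) forces the lower labels $k_{\mu}$ and $k_{\nu}'$ to coincide, hence $i_{\mu}=j_{\nu}$, and (iii) forces the two upper paths $r_{\mu/\alpha}$ and $\widetilde r_{\nu/\alpha}$ to coincide, so that only the diagonal term $F^{r_{\mu/\alpha}r_{\mu/\alpha}}_{i_{\mu}i_{\mu}}$ survives. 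Collecting constants, using $\sqrt{m_{\mu}m_{\nu}}\to m_{\mu}$ once $\mu=\nu$, and $\tfrac{(n-k)!}{(n-2k)!}=\tfrac{N!}{(N-k)!}=k!\binom{N}{k}$, one reads off exactly \eqref{kPBTmateq}.

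As a structural shortcut for the $\delta^{\mu\nu}\delta_{i_{\mu}j_{\nu}}$ part of the answer, one may note first that $\rho$ is $S(n-k)$-invariant by \eqref{rel2}; the action formulas \eqref{rhs_action}, \eqref{eqactionE'1} then show that inside each fixed pair of upper paths the operator $V_{\sigma}$, $\sigma\in S(n-k)$, acts on $\rho$ by $\phi^{\mu}$ on the left lower index and by $\phi^{\nu}$ on the right lower index, so Schur's lemma immediately gives $(\rho)^{r_{\mu/\alpha}r_{\nu/\beta}}_{i_{\mu}j_{\nu}}=\lambda^{\mu}_{r_{\mu/\alpha},r_{\nu/\beta}}\,\delta^{\mu\nu}\delta_{i_{\mu}j_{\nu}}$ with scalars $\lambda$ still to be determined. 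The scalars (and the extra vanishing unless $r_{\mu/\alpha}=r_{\nu/\beta}$) then follow from the same trace-plus-summation computation, e.g.\ by evaluating $\tr\!\big[\rho\,(F^{r_{\mu/\alpha}r_{\nu/\beta}}_{i_{\mu}j_{\nu}})^{\dagger}\big]$ with the help of Lemma~\ref{Vel} for the matrix elements of $V^{(k)}$ and $V_{\tau}$, followed again by Proposition~\ref{summation0} for the sum over $\tau$.

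The step I expect to be the real work is the index bookkeeping in that transversal sum: in the expansion of $V^{(k)}$ the path through $\alpha$ is shared between the upper index and the $\alpha$-part of the lower index, so one must carefully relabel the PRIR data to match the exact template of Proposition~\ref{summation0}, and then verify that the Kronecker deltas it produces genuinely pin down every remaining index — most delicately, that the two upper paths are forced equal, which is what makes $\rho$ diagonal in the operator basis of Theorem~\ref{tmbas}. The prefactor arithmetic and the collapse $\sqrt{m_{\mu}m_{\nu}}\to m_{\mu}$ are routine by comparison.
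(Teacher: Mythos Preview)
Your proposal is correct and follows essentially the same route as the paper: both reduce the matrix element to the bilinear transversal sum $\sum_{\tau\in\mathcal{S}_{n,k}}\sum_{l_\alpha}\phi^{\mu}(\tau^{-1})\,\phi^{\nu}(\tau)$ with the appropriate PRIR labels and finish by invoking Proposition~\ref{summation0}. The only cosmetic difference is the entry point --- the paper extracts the matrix element via $\tfrac{1}{m_\alpha}\tr[\rho\,F]$, unpacks $F$ as $E\,V^{(k)}E$, and collapses the two copies of $V^{(k)}$ with Fact~\ref{kpartr} and Lemma~\ref{L3}, whereas you insert the $F$-basis expansion of $V^{(k)}$ from the second half of Theorem~\ref{tmbas} and push $V_{\tau^{\pm1}}$ through via~\eqref{rhs_action} and~\eqref{eqactionE'1} --- but the two computations converge to the identical sum before Proposition~\ref{summation0} is applied, and your caution about the index bookkeeping (especially the delta on the upper paths) matches exactly the point the paper treats at the same level of detail.
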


\begin{proof}
The proof proceeds similarly as the proof of Lemma~\ref{Vel}, namely we compute
\be
\label{act0}
\begin{split}
(\rho)&_{i_{\mu}\quad j_{\nu}}^{r_{\mu/\alpha}r_{\nu/\beta}}\\
			& =\frac{1}{m_{\alpha}}\tr\left[\rho F^{r_{\mu/\alpha} r_{\nu/\alpha}}_{i_{\mu}\quad j_{\nu}} \right]\\
			&=\frac{1}{d^N}\frac{1}{\sqrt{m_{\mu}m_{\nu}}}\sum_{\tau \in \mathcal{S}_{n,k}}\tr\left[V^{(k)}V_{\tau} E_{i_{\mu} \ 1_{\alpha}}^{\quad r_{\mu/\alpha}}V^{(k)}E^{r_{\nu/\alpha}}_{1_{\alpha} \quad j_{\nu}}V_{\tau^{-1}}\right], 
																													 \end{split}
\ee
where sum runs over all permutations $\tau$ from the coset $\mathcal{S}_{n,k}\equiv \frac{S(n-k)}{S(n-2k)}$.  Substituting~\eqref{actionEE} to~\eqref{act0} we have
\be
\begin{split}
\label{rho-sro}
(\rho)_{i_{\mu}\quad j_{\nu}}^{r_{\mu/\alpha}r_{\nu/\beta}}=\frac{1}{d^N}\frac{1}{\sqrt{m_{\mu}m_{\nu}}}&\sum_{\tau \in \mathcal{S}_{n,k}}\sum_{l_{\mu}}\sum_{k_{\nu}}\phi^{\mu}_{l_{\mu}i_{\mu}}(\tau)\phi_{j_{\nu}k_{\nu}}^{\nu}(\tau^{-1})\\
			& \times \tr\left[V^{(k)}E_{l_{\mu} \ 1_{\alpha}}^{\quad r_{\mu/\alpha}}V^{(k)}E_{1_{\alpha}\quad k_{\nu}}^{r_{\nu/\alpha}}\right].
\end{split}
\ee
Using Fact~\ref{kpartr} we write the following chain of equalities:
\be
\begin{split}
	\tr&\left[V^{(k)} E_{l_{\mu} \ 1_{\alpha}}^{\quad r_{\mu/\alpha}}V^{(k)}E_{1_{\alpha}\quad k_{\nu}}^{r_{\nu/\alpha}}\right]\\
&=\tr\left[\tr_{(k)}\left(E_{l_{\mu} \ 1_{\alpha}}^{\quad r_{\mu/\alpha}} \right) V^{(k)}E_{1_{\alpha}\quad k_{\nu}}^{r_{\nu/\alpha}} \right]\\
									&=\tr\left[\tr_{(k)}\left(E_{l_{\mu} \ 1_{\alpha}}^{\quad r_{\mu/\alpha}} \right) E_{1_{\alpha}\quad k_{\nu}}^{r_{\nu/\alpha}} \right]\\
&=\tr\left[\tr_{(k)}\left(E_{l_{\mu} \ 1_{\alpha}}^{\quad r_{\mu/\alpha}} \right) \tr_{(k)}\left(E_{1_{\alpha}\quad k_{\nu}}^{r_{\nu/\alpha}}\right)  \right],
\end{split}
\ee
where $\tr_{(k)}=\tr_{n-2k+1,\ldots,n-k}$. Expanding rest of the indices in PRIR notation, i.e. $l_{\mu}=(s_{\mu/\beta},p_{\beta}), \ k_{\nu}=(s_{\nu/\beta'},q_{\beta'})$ and applying Lemma~\ref{L3} we have
\be
\begin{split}
&\tr_{(k)}\left(E_{p_{\beta} \quad1_{\alpha}}^{s_{\mu/\beta}  r_{\mu/\alpha}} \right)=\delta^{s_{\mu/\beta}r_{\mu/\alpha}}\frac{m_{\mu}}{m_{\alpha}}E^{\alpha}_{p_{\alpha}1_{\alpha}}\\
&\tr_{(k)}\left(E_{1_{\alpha} \quad q_{\beta'}}^{r_{\nu/\alpha}  s_{\nu/\beta'}}\right)=\delta^{r_{\nu/\alpha}s_{\nu/\beta'}}\frac{m_{\nu}}{m_{\alpha}}E^{\alpha}_{1_{\alpha}q_{\alpha}}.
\end{split}
\ee
Now, we substitute the above into~\eqref{rho-sro} writing as follows
\begin{IEEEeqnarray}{lll}
	(\rho)&_{i_{\mu}\quad j_{\nu}}^{r_{\mu/\alpha}r_{\nu/\beta}} & \\
				&=\frac{1}{d^N}\frac{\sqrt{m_{\mu}m_{\nu}}}{m^2_{\alpha}}\sum_{\tau \in \mathcal{S}_{n,k}}\sum_{r_{\mu/\alpha},p_{\alpha}}\sum_{r_{\nu/\alpha},q_{\alpha}} & \left( \phi^{\nu}\right) _{j_{\nu} \ q_{\alpha}}^{ \quad r_{\nu/\alpha}}(\tau^{-1})\left( \phi^{\mu}\right)_{p_{\alpha} \  \ i_{\mu}}^{r_{\mu/\alpha}}(\tau)\nonumber\\
				& & \times \tr\left(E^{\alpha}_{p_{\alpha}1_{\alpha}} E^{\alpha}_{1_{\alpha}q_{\alpha}}\right)\nonumber\\
				&=\frac{1}{d^N}\frac{\sqrt{m_{\mu}m_{\nu}}}{m_{\alpha}}\sum_{\tau \in \mathcal{S}_{n,k}}\sum_{r_{\mu/\alpha},p_{\alpha}}\sum_{r_{\nu/\alpha},q_{\alpha}}& \delta_{p_{\alpha}q_{\alpha}}\left( \phi^{\nu}\right) _{j_{\nu} \ q_{\alpha}}^{ \quad r_{\nu/\alpha}}(\tau^{-1}) \nonumber\\
				& & \times\left( \phi^{\mu}\right)_{p_{\alpha} \  \ i_{\mu}}^{r_{\mu/\alpha}}(\tau) \nonumber\\
				&=\frac{1}{d^N}\frac{\sqrt{m_{\mu}m_{\nu}}}{m_{\alpha}}\sum_{\tau \in \mathcal{S}_{n,k}}\sum_{r_{\mu/\alpha},r_{\nu/\alpha}}\sum_{q_{\alpha}}& \left( \phi^{\nu}\right) _{j_{\nu} \ q_{\alpha}}^{ \quad r_{\nu/\alpha}}(\tau^{-1})\left( \phi^{\mu}\right)_{q_{\alpha} \  \ i_{\mu}}^{r_{\mu/\alpha}}(\tau).\nonumber\\
\end{IEEEeqnarray}
In the above we use orthonormality relation, together wit the trace property~\eqref{tr_prop}, so $\tr\left(E^{\alpha}_{p_{\alpha}1_{\alpha}} E^{\alpha}_{1_{\alpha}q_{\alpha}}\right)=\tr E^{\alpha}_{p_{\alpha}q_{\alpha}}=m_{\alpha}\delta_{p_{\alpha}q_{\alpha}}$. Finally applying summation rule from Proposition~\ref{summation0} we arrive at
\be
(\rho)_{i_{\mu}\quad j_{\nu}}^{r_{\mu/\alpha}r_{\nu/\beta}}=\frac{1}{d^N}|\mathcal{S}_{n,k}|\frac{m_{\mu}}{m_{\alpha}}\frac{d_{\alpha}}{d_{\mu}}=\frac{k!\binom{N}{k}}{d^N}\frac{m_{\mu}}{m_{\alpha}}\frac{d_{\alpha}}{d_{\mu}}\delta^{r_{\mu/\alpha}r_{\nu/\beta}}\delta_{i_{\mu}j_{\nu}}.
\ee
This finishes the proof.
\end{proof}

Let us check the consequences of Theorem~\ref{kPBTmat}. Expression~\eqref{kPBTmateq} tells us that multi-port teleportation operator $\rho$ is diagonal in the operator basis given in Theorem~\ref{tmbas}. It means $\rho$ can expressed as
\be
\begin{split}
\label{rhodec0}
\rho&=\frac{k!\binom{N}{k}}{d^n}\sum_{\alpha}\sum_{\mu\in\alpha}\sum_{r_{\mu/\alpha}}\sum_{k_{\mu}}\frac{m_{\mu}}{m_{\alpha}}\frac{d_{\alpha}}{d_{\mu}}F^{r_{\mu/\alpha}r_{\mu/\alpha}}_{k_{\mu} \quad k_{\mu}}\\
		&=\sum_{\alpha}\sum_{\mu\in\alpha}\sum_{r_{\mu/\alpha}}\sum_{k_{\mu}}\lambda_{\mu}(\alpha)F^{r_{\mu/\alpha}r_{\mu/\alpha}}_{k_{\mu} \quad k_{\mu}}
\end{split}
\ee
where we introduced the quantity
\be
\label{rhodec0a}
\lambda_{\mu}(\alpha)\equiv \frac{k!\binom{N}{k}}{d^N}\frac{m_{\mu}}{m_{\alpha}}\frac{d_{\alpha}}{d_{\mu}}.
\ee
Now we can formulate the following
\begin{definition}
\label{efy}
Having basis elements from~\eqref{tmbas2} of Theorem~\ref{tmbas}, we define the following operators
\be
\forall \alpha \ \forall \mu\in\alpha \quad F_{\mu}(\alpha)\equiv \sum_{r_{\mu/\alpha}}\sum_{k_{\mu}}F^{r_{\mu/\alpha}r_{\mu/\alpha}}_{k_{\mu} \quad k_{\mu}}.
\ee
\end{definition}
Having the above definition we prove:
\begin{lemma}
\label{efy2}
Operators $F_{\mu}(\alpha)$ for $\alpha\vdash N-k$ and $\mu\in\alpha$ are projectors and span identity $\mathbf{1}_{\mathcal{M}}$ on the ideal $\mathcal{M}$.
\end{lemma}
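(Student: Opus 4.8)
The plan is to read everything off from the composition rule~\eqref{orto} of Theorem~\ref{tmbas}, supplemented by the self-adjointness of the generators. First I would record the adjoint relation $\bigl(F^{r_{\mu/\alpha}r_{\nu/\alpha}}_{i_{\mu}\,j_{\nu}}\bigr)^{\dagger}=F^{r_{\nu/\alpha}r_{\mu/\alpha}}_{j_{\nu}\,i_{\mu}}$, which is immediate from the defining formula~\eqref{tmbas2}: the matrix units satisfy $\bigl(E^{r_{\mu/\alpha}}_{i_{\mu}\,1_{\alpha}}\bigr)^{\dagger}=E^{r_{\mu/\alpha}}_{1_{\alpha}\,i_{\mu}}$ by~\eqref{schur_Eij}, while $V^{(k)}$ is Hermitian, being a tensor product of swaps $V_{r,s}$ and of the operators $V^{t_{s}}_{r,s}=dP^{+}_{r,s}$, all of which are self-adjoint. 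In particular every diagonal element $F^{r_{\mu/\alpha}r_{\mu/\alpha}}_{k_{\mu}\,k_{\mu}}$ is Hermitian, and hence so is $F_{\mu}(\alpha)$.

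Next I would specialise~\eqref{orto} to the diagonal elements, obtaining
\be
F^{r_{\mu/\alpha}r_{\mu/\alpha}}_{k_{\mu}\,k_{\mu}}\,F^{r_{\mu'/\alpha'}r_{\mu'/\alpha'}}_{k'_{\mu'}\,k'_{\mu'}}=\delta^{r_{\mu/\alpha}r_{\mu'/\alpha'}}\,\delta_{k_{\mu}k'_{\mu'}}\,F^{r_{\mu/\alpha}r_{\mu'/\alpha'}}_{k_{\mu}\,k'_{\mu'}}.
\ee
Here the point to keep in mind is that, by Notation~\ref{not16} (see also the expansion of the path-deltas in~\eqref{M2}), $\delta^{r_{\mu/\alpha}r_{\mu'/\alpha'}}$ abbreviates a product of Kronecker deltas running along the whole chain of Young diagrams, so it forces $\mu=\mu'$, $\alpha=\alpha'$ and equality of the two paths. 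Summing over $r_{\mu/\alpha},k_{\mu}$ and over $r_{\mu'/\alpha'},k'_{\mu'}$ then collapses the right-hand side, yielding $F_{\mu}(\alpha)\,F_{\mu'}(\alpha')=\delta^{\mu\mu'}\delta^{\alpha\alpha'}\,F_{\mu}(\alpha)$. Combined with the Hermiticity just established, this makes each $F_{\mu}(\alpha)$ an orthogonal projector and shows that distinct members of the family are mutually orthogonal.

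Finally I would identify $P:=\sum_{\alpha\vdash N-k}\sum_{\mu\in\alpha}F_{\mu}(\alpha)$ with $\mathbf{1}_{\mathcal{M}}$ by showing it is a two-sided unit for $\mathcal{M}$. Since $\mathcal{M}$ is a two-sided $*$-ideal of the finite-dimensional $*$-algebra $\mathcal{A}^{(k)}_{n}(d)$ of operators on $(\mathbb{C}^{d})^{\ot n}$, it is semisimple and carries a unit $\mathbf{1}_{\mathcal{M}}$, uniquely characterised as the element of $\mathcal{M}$ acting as the identity on $\mathcal{M}$; and because the $F^{r_{\mu/\alpha}r_{\nu/\alpha}}_{i_{\mu}\,j_{\nu}}$ form a basis of $\mathcal{M}$ (Theorem~\ref{tmbas}), it suffices to check $P$ fixes each such basis element. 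Applying~\eqref{orto} once more,
\be
F^{s_{\lambda/\beta}s_{\lambda/\beta}}_{l_{\lambda}\,l_{\lambda}}\,F^{r_{\mu/\alpha}r_{\nu/\alpha}}_{i_{\mu}\,j_{\nu}}=\delta^{s_{\lambda/\beta}r_{\mu/\alpha}}\,\delta_{l_{\lambda}i_{\mu}}\,F^{r_{\mu/\alpha}r_{\nu/\alpha}}_{i_{\mu}\,j_{\nu}},
\ee
so in the double sum over $\beta$ and $\lambda\in\beta$ (and over the internal indices) only the term with $\lambda=\mu$, $\beta=\alpha$, $s_{\lambda/\beta}=r_{\mu/\alpha}$, $l_{\lambda}=i_{\mu}$ survives, whence $P\,F^{r_{\mu/\alpha}r_{\nu/\alpha}}_{i_{\mu}\,j_{\nu}}=F^{r_{\mu/\alpha}r_{\nu/\alpha}}_{i_{\mu}\,j_{\nu}}$, and the mirror computation gives $F^{r_{\mu/\alpha}r_{\nu/\alpha}}_{i_{\mu}\,j_{\nu}}\,P=F^{r_{\mu/\alpha}r_{\nu/\alpha}}_{i_{\mu}\,j_{\nu}}$. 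Hence $P=\mathbf{1}_{\mathcal{M}}$. All the computations are short; the only place that demands genuine care is the PRIR multi-index bookkeeping — remembering that each $\delta^{r_{\mu/\alpha}r_{\nu/\beta}}$ conceals a full product of deltas over the Bratteli path, which is exactly what triggers the collapses $\mu=\nu$, $\alpha=\beta$ used at each step.
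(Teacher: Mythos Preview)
Your proof is correct and follows essentially the same route as the paper: both use the composition rule~\eqref{orto} to verify $F_{\mu}(\alpha)F_{\nu}(\beta)=\delta^{\mu\nu}\delta^{\alpha\beta}F_{\mu}(\alpha)$, and both establish $P=\mathbf{1}_{\mathcal{M}}$ by checking that $P$ acts as the identity on the basis elements from Theorem~\ref{tmbas}. Your explicit verification of Hermiticity via the adjoint relation is a small addition the paper omits, but it is a welcome one since it completes the claim that the $F_{\mu}(\alpha)$ are genuine (orthogonal) projectors rather than merely idempotents.
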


\begin{proof}
First let us check that operators $F_{\mu}(\alpha)$ given through Definition~\ref{efy} are indeed orthonormal projectors. Indeed using~\eqref{orto} we have
\be
\begin{split}
F_{\mu}(\alpha)F_{\nu}(\beta)&=\sum_{r_{\mu/\alpha}}\sum_{k_{\mu}}\sum_{r_{\nu/\beta}}\sum_{l_{\nu}}F^{r_{\mu/\alpha}r_{\mu/\alpha}}_{k_{\mu} \quad k_{\mu}}F^{r_{\nu/\beta}r_{\nu/\beta}}_{l_{\nu} \quad l_{\nu}}\\
														 &=\sum_{r_{\mu/\alpha}}\sum_{k_{\mu}}\sum_{r_{\nu/\beta}}\sum_{l_{\nu}}\delta^{r_{\mu/\alpha}r_{\nu/\beta}}\delta_{k_{\mu}l_{\nu}}F^{r_{\mu/\alpha}r_{\nu/\beta}}_{k_{\mu} \quad l_{\nu}}\\
&=\delta^{\mu\nu}\delta^{\alpha\beta}\sum_{r_{\mu/\alpha}}\sum_{k_{\mu}}\sum_{r_{\nu/\beta}}\sum_{l_{\nu}}\delta^{r_{\mu/\alpha}r_{\nu/\beta}}\delta_{k_{\mu}l_{\nu}}F^{r_{\mu/\alpha}r_{\nu/\beta}}_{k_{\mu} \quad l_{\nu}}\\
&=\delta^{\mu\nu}\delta^{\alpha\beta}\sum_{r_{\mu/\alpha}}\sum_{k_{\mu}}F^{r_{\mu/\alpha}r_{\mu/\alpha}}_{k_{\mu} \quad k_{\mu}}\\
&=\delta^{\mu\nu}\delta^{\alpha\beta}F_{\mu}(\alpha),
\end{split}
\ee
since for fixed $\mu,\nu$ and $\alpha,\beta$ we use the property $\delta^{r_{\mu/\alpha}r_{\nu/\beta}}\equiv \delta^{\mu\nu}\delta^{\alpha\beta}\delta^{r_{\mu/\alpha}r_{\nu/\beta}}$, see Notation~\ref{not0}.

To prove $\sum_{\alpha}\sum_{\mu\in\alpha}F_{\mu}(\alpha)=\mathbf{1}_{\mathcal{M}}$ we must show that $\forall x\in\mathcal{M}$ we have $x\sum_{\alpha}\sum_{\mu\in\alpha}F_{\mu}(\alpha)=\sum_{\alpha}\sum_{\mu\in\alpha}F_{\mu}(\alpha)x=x$. Expanding $x$ in the operator basis from Theorem~\ref{tmbas}
\begin{multline}
x=\sum_{\alpha',\beta'}\sum_{\mu'\in\alpha'}\sum_{\nu'\in\beta'}\sum_{i_{\mu'} \ j_{\nu'}}x_{i_{\mu'}\quad j_{\nu'}}^{r_{\mu'/\alpha'} r_{\nu'/\beta'}}F_{i_{\mu'}\quad j_{\nu'}}^{r_{\mu'/\alpha'} r_{\nu'/\beta'}},\\ x_{i_{\mu'}\quad j_{\nu'}}^{r_{\mu'/\alpha'} r_{\nu'/\beta'}}\in \mathbb{C},
\end{multline}
and using expression~\eqref{orto} we get the statement.
\end{proof}

Finally thanks to Lemma~\ref{efy2} and decomposition~\eqref{rhodec0}, together with~\eqref{rhodec0a} we formulate spectral theorem for the multi-port teleportation operator (the multiplicities given below come from Lemma~\ref{A3}):
\begin{theorem}
\label{eig_dec_rho}
The MPBT operator given through~\eqref{PBT1} has the following spectral decomposition
\be
\rho=\sum_{\alpha}\sum_{\mu\in\alpha}\lambda_{\mu}(\alpha)F_{\mu}(\alpha),
\ee
where eigenprojectors $F_{\mu}(\alpha)$ are given in Definition~\ref{efy} with corresponding eigenvalues $\lambda_{\mu}(\alpha)$ from~\eqref{rhodec0a} with multiplicities $m_{\mu/\alpha}m_{\alpha}d_{\mu}$.
\end{theorem}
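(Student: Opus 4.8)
The plan is to read the decomposition straight off the diagonal expansion of $\rho$ already established in Theorem~\ref{kPBTmat}, reorganised through Definition~\ref{efy}, and then promote it to a genuine spectral decomposition using the projector properties of Lemma~\ref{efy2}; the only genuinely new ingredient is the multiplicity count. Concretely, I would start from~\eqref{rhodec0} together with~\eqref{rhodec0a}, which already give $\rho=\sum_{\alpha}\sum_{\mu\in\alpha}\sum_{r_{\mu/\alpha}}\sum_{k_{\mu}}\lambda_{\mu}(\alpha)\,F^{r_{\mu/\alpha}r_{\mu/\alpha}}_{k_{\mu}\,k_{\mu}}$. Since $\lambda_{\mu}(\alpha)$ depends only on the pair $(\mu,\alpha)$, and not on the path $r_{\mu/\alpha}$ or the index $k_{\mu}$, the two inner sums are precisely the sum defining $F_{\mu}(\alpha)$, so $\rho=\sum_{\alpha}\sum_{\mu\in\alpha}\lambda_{\mu}(\alpha)F_{\mu}(\alpha)$. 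By Lemma~\ref{efy2} the $F_{\mu}(\alpha)$ are mutually orthogonal projectors summing to $\mathbf 1_{\mathcal M}$, and since $\rho$ lies in the ideal $\mathcal M$ (cf.~\eqref{PBT1}) the complementary projector $\mathbf 1-\mathbf 1_{\mathcal M}$ carries the eigenvalue $0$; as every $\lambda_{\mu}(\alpha)$ is strictly positive (all multiplicities and dimensions entering are positive whenever $\mu\in\alpha$), this display is the spectral resolution of $\rho$. If two distinct labels happen to give a coincident eigenvalue one simply adds the corresponding $F_{\mu}(\alpha)$, which by orthogonality is again a projector, so listing the eigenprojectors separately loses nothing.

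It then remains to compute the dimension of each eigenspace, $\tr F_{\mu}(\alpha)$. I would first factor $\tr F_{\mu}(\alpha)=m_{\mu/\alpha}\,d_{\mu}\,\tr F^{r_{\mu/\alpha}r_{\mu/\alpha}}_{k_{\mu}\,k_{\mu}}$, using that the number of paths from $\mu$ to $\alpha$ is $m_{\mu/\alpha}$ and that $k_{\mu}$ ranges over the $d_{\mu}$ basis indices of the irrep $\mu\vdash n-k$. For the single-term trace I would use the definition~\eqref{tmbas2}, cyclicity, and the matrix-unit relation~\eqref{tr_prop} to collapse the two $E^{\mu}$ factors flanking the surplus $V^{(k)}$ into $E^{\mu}_{(r_{\mu/\alpha},1_{\alpha})(r_{\mu/\alpha},1_{\alpha})}$, obtaining $\tr F^{r_{\mu/\alpha}r_{\mu/\alpha}}_{k_{\mu}\,k_{\mu}}=\tfrac{m_{\alpha}}{m_{\mu}}\tr\!\big[V^{(k)}E^{\mu}_{(r_{\mu/\alpha},1_{\alpha})(r_{\mu/\alpha},1_{\alpha})}\big]$. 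Since $E^{\mu}$ acts only on the first $n-k$ systems, tracing out systems $n-k+1,\ldots,n$ first and using $\tr_{s}V^{t_{s}}_{r,s}=\mathbf 1_{r}$ gives $\tr_{n-k+1,\ldots,n}V^{(k)}=\mathbf 1_{1,\ldots,n-k}$, so the trace equals $\tfrac{m_{\alpha}}{m_{\mu}}\tr E^{\mu}_{(r_{\mu/\alpha},1_{\alpha})(r_{\mu/\alpha},1_{\alpha})}=\tfrac{m_{\alpha}}{m_{\mu}}\,m_{\mu}=m_{\alpha}$ by~\eqref{tr_prop}. Multiplying the three factors yields the multiplicity $m_{\mu/\alpha}\,d_{\mu}\,m_{\alpha}$; alternatively this is exactly the content of Lemma~\ref{A3}, which may be cited directly.

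The bulk of the effort is already absorbed in Theorem~\ref{kPBTmat} (which rests on the summation rule of Proposition~\ref{summation0}) and in Lemma~\ref{efy2}, so the present statement is mostly bookkeeping. The one place where care is needed is the single-term trace: the surviving $V^{(k)}$ must be handled by hand — one cannot directly invoke Fact~\ref{kpartr}, which is stated for a \emph{pair} of $V^{(k)}$'s — and one should verify $\tr_{n-k+1,\ldots,n}V^{(k)}=\mathbf 1$ explicitly before collapsing the trace. A second, purely cosmetic caveat is that ``spectral decomposition'' here should be read as an eigenprojector decomposition; this is harmless because the $F_{\mu}(\alpha)$ are orthogonal, so any regrouping still produces genuine spectral projectors.
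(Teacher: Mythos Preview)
Your proposal is correct and follows the same route the paper takes: the paper simply points to the diagonal expansion~\eqref{rhodec0}--\eqref{rhodec0a} together with Lemma~\ref{efy2} for the projector structure, and invokes Lemma~\ref{A3} for the multiplicity. Your direct computation of $\tr F_{\mu}(\alpha)$ via $\tr_{n-k+1,\ldots,n}V^{(k)}=\mathbf{1}$ is a valid alternative to citing Lemma~\ref{A3} (which in the paper is proved through the intermediate partial trace of Lemma~\ref{A2}), and your remarks on the kernel and on possibly coincident eigenvalues are welcome clarifications the paper leaves implicit.
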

\noindent
Checking that indeed we have $\rho F_{\mu}(\alpha)=\lambda_{\mu}(\alpha)F_{\mu}(\alpha)$, follows directly from orthonormality property of operators $F_{\mu}(\alpha)$ proven in Lemma~\ref{efy2}. 

At the end of this section we prove two additionally lemmas on projectors $F_{\mu}(\alpha)$ given in Definition~\ref{efy}. Defining symbol the $\tr_{(2k)}\equiv \tr_{n-2k+1,\ldots,n}$ which is a partial trace operation with respect to last $2k$ systems we have the following
\begin{lemma}
\label{A1}
For a partially transposed permutation operator $V^{(k)}$ from~\eqref{parV} and operators $F_{\mu}(\alpha)$ given through Definition~\ref{efy} the following holds:
\be
\forall \alpha\vdash n-2k \quad \forall \mu\in \alpha \quad \tr_{(2k)}\left[V^{(k)} F_{\mu}(\alpha)\right]=m_{\mu/\alpha}\frac{m_{\mu}}{m_{\alpha}}P_{\alpha},
\ee
where the numbers $m_{\mu},m_{\alpha}$ denote respective multiplicities in the Schur-Weyl duality, while $P_{\alpha}$ is a Young projector on $n-2k$ particles.
\end{lemma}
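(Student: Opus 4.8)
The plan is to reduce the left-hand side to a partial trace that is already understood, namely $\tr_{(k)}$ applied to a Young projector, via Corollary~\ref{corL3}. First I would unpack the definition of $F_\mu(\alpha)$ and write $\tr_{(2k)}[V^{(k)}F_\mu(\alpha)]$ as a sum over $r_{\mu/\alpha}$ and $k_\mu$ of $\tr_{(2k)}\bigl[V^{(k)}F^{r_{\mu/\alpha}r_{\mu/\alpha}}_{k_\mu\ k_\mu}\bigr]$. Using the defining formula \eqref{tmbas2} for the basis elements, each summand becomes
\[
\frac{m_\alpha}{m_\mu}\,\tr_{(2k)}\Bigl[V^{(k)}E_{k_\mu\ 1_\alpha}^{\quad r_{\mu/\alpha}}V^{(k)}E^{r_{\mu/\alpha}}_{1_\alpha\ k_\mu}\Bigr].
\]
Now split $\tr_{(2k)}=\tr_{(k)}\circ\tr_{n-k+1,\ldots,n}$, noting that $V^{(k)}$ is the only factor acting on the last $k$ systems $n-k+1,\ldots,n$; tracing those out and applying Fact~\ref{kpartr} turns the inner $V^{(k)}\,(\cdot)\,V^{(k)}$ sandwich into $\tr_{(k)}\bigl(E_{k_\mu\ 1_\alpha}^{\quad r_{\mu/\alpha}}\bigr)V^{(k)}$ — wait, more carefully: Fact~\ref{kpartr} gives $V^{(k)}\,X\otimes\mathbf 1\,V^{(k)}=\tr_{(k)}(X)\,V^{(k)}$, so the remaining operator is $\tr_{(k)}\bigl(E_{k_\mu\ 1_\alpha}^{\quad r_{\mu/\alpha}}\bigr)\,V^{(k)}\,E^{r_{\mu/\alpha}}_{1_\alpha\ k_\mu}$, and then one more $\tr_{(k)}$ remains to be taken over systems $n-2k+1,\ldots,n-k$ (the ones still carrying $V^{(k)}$'s action together with the $E$'s on $n-k$ systems).

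At this point the computation parallels the overlap computations in the proof of Lemma~\ref{Vel}: apply Lemma~\ref{L3} to $\tr_{(k)}\bigl(E_{k_\mu\ 1_\alpha}^{\quad r_{\mu/\alpha}}\bigr)=\frac{m_\mu}{m_\alpha}E^\alpha_{k_\alpha 1_\alpha}$ (with the understanding $k_\mu=(r_{\mu/\alpha},k_\alpha)$ in PRIR notation), and similarly handle the factor $E^{r_{\mu/\alpha}}_{1_\alpha\ k_\mu}$ under the outstanding trace, again via Lemma~\ref{L3}, producing $\frac{m_\mu}{m_\alpha}E^\alpha_{1_\alpha k_\alpha}$. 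Using the orthogonality $E^\alpha_{k_\alpha 1_\alpha}E^\alpha_{1_\alpha k_\alpha}=E^\alpha_{k_\alpha k_\alpha}$ and collecting the prefactors $\frac{m_\alpha}{m_\mu}\cdot\frac{m_\mu}{m_\alpha}\cdot\frac{m_\mu}{m_\alpha}=\frac{m_\mu}{m_\alpha}$, the generic summand collapses to $\frac{m_\mu}{m_\alpha}E^\alpha_{k_\alpha k_\alpha}$. Summing over $k_\mu$ (equivalently over $k_\alpha$, since the path $r_{\mu/\alpha}$ is fixed) gives $\frac{m_\mu}{m_\alpha}\sum_{k_\alpha}E^\alpha_{k_\alpha k_\alpha}=\frac{m_\mu}{m_\alpha}P_\alpha$, and then summing over the $m_{\mu/\alpha}$ distinct paths $r_{\mu/\alpha}$ yields $m_{\mu/\alpha}\frac{m_\mu}{m_\alpha}P_\alpha$, which is the claim.

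The main obstacle I anticipate is bookkeeping the two nested applications of the partial trace correctly: one must be careful that after the first trace over $n-k+1,\ldots,n$ (which invokes Fact~\ref{kpartr}) the residual $V^{(k)}$ really has disappeared on those systems but still acts on $n-2k+1,\ldots,n-k$, and that the second trace $\tr_{(k)}$ is precisely the operation covered by Lemma~\ref{L3} applied to the remaining $E$-operators on $n-k$ systems. A secondary subtlety is keeping the PRIR index decomposition $k_\mu=(r_{\mu/\alpha},k_\alpha)$ consistent between the two $E$-factors so that the Kronecker deltas from Lemma~\ref{L3} match up the path labels and do not kill the term; since both $E$'s in $F^{r_{\mu/\alpha}r_{\mu/\alpha}}_{k_\mu\ k_\mu}$ carry the same path $r_{\mu/\alpha}$ this is automatic, but it is worth stating explicitly. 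Everything else is a routine repetition of the manipulations already carried out for Theorem~\ref{kPBTmat} and Lemma~\ref{Vel}.
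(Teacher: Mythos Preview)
Your proposal is correct and follows essentially the same route as the paper's own proof: unpack $F_\mu(\alpha)$ via \eqref{tmbas2}, apply Fact~\ref{kpartr} to collapse the inner $V^{(k)}E\,V^{(k)}$ into $\tr_{(k)}(E)\,V^{(k)}$, use Lemma~\ref{L3} on the resulting $E$-operators (which is what forces the PRIR constraint $k_\mu=(r_{\mu/\alpha},k_\alpha)$), and then sum over $k_\alpha$ and the $m_{\mu/\alpha}$ paths to produce $m_{\mu/\alpha}\frac{m_\mu}{m_\alpha}P_\alpha$. The only cosmetic difference is that the paper first multiplies $E^\alpha_{k_\alpha 1_\alpha}E^{r_{\mu/\alpha}r_{\mu/\alpha}}_{1_\alpha\ k_\alpha}=E^{r_{\mu/\alpha}r_{\mu/\alpha}}_{k_\alpha\ k_\alpha}$ and then applies Lemma~\ref{L3} once more, whereas you take $\tr_{(k)}$ of the second $E$-factor separately and then multiply the two $E^\alpha$'s; both orderings give the same answer. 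One small remark: your opening sentence announces a reduction ``via Corollary~\ref{corL3}'' to $\tr_{(k)}$ of a Young projector, but your actual computation never invokes that corollary---you use Lemma~\ref{L3} directly on basis elements, exactly as the paper does (the Corollary~\ref{corL3} route is the paper's \emph{alternative} proof through Lemma~\ref{simple}).
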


\begin{proof}
Using definition of the operator $F_{\mu}(\alpha)$ and expression~\eqref{tmbas2} we write
\begin{multline}
\sum_{r_{\mu/\alpha}}\sum_{k_{\mu}}\tr_{(2k)}\left[V^{(k)} F^{r_{\mu/\alpha}r_{\mu/\alpha}}_{k_{\mu} \quad k_{\mu}}\right]\\
=\sum_{r_{\mu/\alpha}}\sum_{k_{\mu}}\frac{m_{\alpha}}{m_{\mu}}\tr_{(2k)}\left[V^{(k)}E_{i_{\mu} \ 1_{\alpha}}^{\quad r_{\mu/\alpha}}V^{(k)}E^{r_{\mu/\alpha}}_{1_{\alpha} \quad  k_{\mu}}\right].
\end{multline}
Using Fact~\ref{kpartr}, Lemma~\ref{L3} and $i_{\mu}=(s_{\mu/\beta},i_{\beta})$ to operator $E_{k_{\mu} \quad 1_{\alpha}}^{\quad r_{\mu/\alpha}}=E_{k_{\beta} \  \ 1_{\alpha}}^{r_{\mu/\beta}   \ r_{\mu/\alpha}}$ we simplify the above equation to
\be
\begin{split}
	\sum_{r_{\mu/\alpha}}\sum_{k_{\alpha}}&\tr_{(2k)}\left[E^{\alpha}_{k_{\alpha}1_{\alpha}}V^{(k)}E^{r_{\mu/\alpha} r_{\mu/\alpha}}_{1_{\alpha} \quad  k_{\alpha}} \right]\\
	                                      &=\sum_{r_{\mu/\alpha}}\sum_{k_{\alpha}}\tr_{(k)}\left[E^{\alpha}_{k_{\alpha}1_{\alpha}}E^{r_{\mu/\alpha} r_{\mu/\alpha}}_{1_{\alpha} \quad  k_{\alpha}} \right]\\
																				&=\sum_{r_{\mu/\alpha}} \sum_{k_{\alpha}}\tr_{(k)}\left[E^{r_{\mu/\alpha} r_{\mu/\alpha}}_{k_{\alpha} \quad  k_{\alpha}} \right]\\
&=m_{\mu/\alpha}\frac{m_{\mu}}{m_{\alpha}}\sum_{k_{\alpha}}E^{\alpha}_{k_{\alpha}k_{\alpha}}\\
&=m_{\mu/\alpha}\frac{m_{\mu}}{m_{\alpha}}P_{\alpha},
\end{split}
\ee
where in the last equality we used the definition of projectors $P_{\alpha}$ given in~\eqref{def_P}. 
\end{proof}
Further, below the proof of Lemma~\ref{simple} we discuss alternative proof method of the above lemma.

\begin{lemma}
	\label{A2}
For operators $F_{\mu}(\alpha)$ given through Definition~\ref{efy} the following holds:
\be
\label{A2eq}
\forall \alpha\vdash n-2k \quad \forall \mu\in \alpha \quad \tr_{(k)}\left( F_{\mu}(\alpha)\right)=m_{\mu/\alpha}\frac{m_{\alpha}}{m_{\mu}}P_{\mu},
\ee
where the numbers $m_{\mu},m_{\alpha}$ denote respective multiplicities of the irrpes in the Schur-Weyl duality, $m_{\mu/\alpha}$ denotes number of paths on reduced Young's lattice in which diagram $\mu$ can be obtained from diagram $\alpha$, while $P_{\mu}$ is a Young projector on $n-k$ particles.	
\end{lemma}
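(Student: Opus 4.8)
The plan is to compute $\tr_{(k)}\left(F_\mu(\alpha)\right)$ by expanding the definition of $F_\mu(\alpha)$ in terms of the basis operators $F^{r_{\mu/\alpha}r_{\mu/\alpha}}_{k_\mu\ k_\mu}$, substituting the formula~\eqref{tmbas2} for these basis elements, and then using Fact~\ref{kpartr} together with Lemma~\ref{L3} to collapse the partial traces — in exactly the same spirit as the proof of Lemma~\ref{A1}, but tracing over fewer systems. First I would write
\be
\tr_{(k)}\left(F_\mu(\alpha)\right)=\sum_{r_{\mu/\alpha}}\sum_{k_\mu}\tr_{(k)}\left[F^{r_{\mu/\alpha}r_{\mu/\alpha}}_{k_\mu\ k_\mu}\right]=\frac{m_\alpha}{m_\mu}\sum_{r_{\mu/\alpha}}\sum_{k_\mu}\tr_{(k)}\left[E_{k_\mu\ 1_\alpha}^{\quad r_{\mu/\alpha}}V^{(k)}E^{r_{\mu/\alpha}}_{1_\alpha\quad k_\mu}\right],
\ee
where here $\tr_{(k)}=\tr_{n-2k+1,\ldots,n-k}$ acts on the systems hit by $V^{(k)}$ on the ``middle'' indices. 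Since $E_{k_\mu\ 1_\alpha}^{\ r_{\mu/\alpha}}$ and $E^{r_{\mu/\alpha}}_{1_\alpha\ k_\mu}$ are operators on the first $n-k$ systems and $V^{(k)}$ connects the block $\{n-2k+1,\ldots,n-k\}$ with $\{n-k+1,\ldots,n\}$, I would first move one of the $E$'s through by cyclicity of the partial trace and apply Fact~\ref{kpartr} to contract $V^{(k)}(\cdots)V^{(k)}$ into a single $V^{(k)}$ times a partial trace over $\{n-2k+1,\ldots,n-k\}$ of the enclosed operator; this is the point where the entangling structure of $V^{(k)}$ is used.

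Next, I would decompose the index $k_\mu=(s_{\mu/\beta},k_\beta)$ in PRIR notation and apply Lemma~\ref{L3}, which gives $\tr_{(k)}\!\big(E^{r_{\mu/\alpha}r_{\mu/\alpha}}_{k_\alpha\ k_\alpha}\big)=\frac{m_\mu}{m_\alpha}E^\alpha_{k_\alpha k_\alpha}$ — wait, that is the wrong direction for what we want, since the claimed answer is proportional to $P_\mu$ on $n-k$ particles, not to $P_\alpha$ on $n-2k$ particles. So the correct execution is: the partial trace produced by Fact~\ref{kpartr} only eliminates the systems $\{n-2k+1,\ldots,n-k\}$ that $V^{(k)}$ acts on, leaving an operator on the remaining $n-2k$ systems of the first block together with the $\{n-k+1,\ldots,n\}$ systems; combining with the outer $\tr_{(k)}$ one should recognize that the net effect is to trace the original $F_\mu(\alpha)$ — which lives in $\mathcal{M}\subset\mathcal{A}^{(k)}_n(d)$ and thus acts nontrivially on $n$ systems — down to an operator on $n-k$ systems. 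The cleanest route is to note $F_\mu(\alpha)$ is a projector (Lemma~\ref{efy2}), so $\tr_{(k)}F_\mu(\alpha)$ is positive, and to compute its overlap with Young projectors $P_\nu$ on $n-k$ particles via $\tr\!\big[P_\nu\tr_{(k)}F_\mu(\alpha)\big]=\tr\!\big[(P_\nu\otimes\mathbf{1})F_\mu(\alpha)\big]$, using $P_\nu\otimes\mathbf 1_{n-k+1\ldots n}$ commuting with the $E$'s indexed by $\mu$ and the trace properties~\eqref{tr_prop}; this isolates $\delta_{\mu\nu}$ and the scalar $m_{\mu/\alpha}\,m_\alpha/m_\mu$, after counting that the sum over $r_{\mu/\alpha}$ contributes a factor $m_{\mu/\alpha}$.

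To pin the constant I would track the normalization $m_\alpha/m_\mu$ from~\eqref{tmbas2}, the factor $m_\mu/m_\alpha$ (or $m_\nu/m_\alpha$) coming from each application of Lemma~\ref{L3} via Fact~\ref{kpartr}, the $|\mathcal R_{\mu/\alpha}|=m_{\mu/\alpha}$ from the free sum over paths $r_{\mu/\alpha}$, and the identity $\sum_{k_\mu}E^{r_{\mu/\alpha}r_{\mu/\alpha}}_{k_\mu k_\mu}$ restricted appropriately summing up to $P_\mu$; assembling these yields $\tr_{(k)}F_\mu(\alpha)=m_{\mu/\alpha}\frac{m_\alpha}{m_\mu}P_\mu$ as claimed. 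The main obstacle, as the false start above illustrates, is bookkeeping: making sure the partial trace $\tr_{(k)}$ in the statement is over the systems $\{n-2k+1,\ldots,n-k\}$ (the ones $V^{(k)}$ acts on from the ``middle'' side) so that the leftover operator genuinely lands on $n-k$ particles and equals a multiple of $P_\mu$ rather than $P_\alpha$ — this is the dual situation to Lemma~\ref{A1}, where tracing over $2k$ systems lands on $P_\alpha$ over $n-2k$ particles. Once the index ranges are fixed, everything else is a routine application of Fact~\ref{kpartr}, Lemma~\ref{L3}, and~\eqref{tr_prop}.
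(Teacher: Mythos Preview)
You have overcomplicated this, and the root cause is a misreading of which systems $\tr_{(k)}$ acts on in this lemma. Here $F_{\mu}(\alpha)$ lives on all $n$ systems, and the result is $P_{\mu}$ on the \emph{first} $n-k$ particles; so $\tr_{(k)}$ must denote the trace over the \emph{last} $k$ systems $\{n-k+1,\ldots,n\}$, not over $\{n-2k+1,\ldots,n-k\}$ as you repeatedly assume. Once you make this identification, neither Fact~\ref{kpartr} nor Lemma~\ref{L3} is needed at all.

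The paper's proof is essentially one line. Since the $E$ operators in
\[
F^{r_{\mu/\alpha}r_{\mu/\alpha}}_{k_{\mu}\ k_{\mu}}=\frac{m_{\alpha}}{m_{\mu}}\,E_{k_{\mu}\ 1_{\alpha}}^{\ \ r_{\mu/\alpha}}\,V^{(k)}\,E^{r_{\mu/\alpha}}_{1_{\alpha}\ \ k_{\mu}}
\]
act only on the first $n-k$ systems, the only factor touching the last $k$ systems is $V^{(k)}$, and $\tr_{n-k+1,\ldots,n}V^{(k)}=\mathbf{1}_{1,\ldots,n-k}$ (each $V^{t_{j}}_{i,j}$ traces to $\mathbf{1}_{i}$). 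Hence
\[
\tr_{(k)}F_{\mu}(\alpha)=\frac{m_{\alpha}}{m_{\mu}}\sum_{r_{\mu/\alpha}}\sum_{k_{\mu}}E_{k_{\mu}\ 1_{\alpha}}^{\ \ r_{\mu/\alpha}}E^{r_{\mu/\alpha}}_{1_{\alpha}\ \ k_{\mu}}=\frac{m_{\alpha}}{m_{\mu}}\sum_{r_{\mu/\alpha}}\sum_{k_{\mu}}E^{\mu}_{k_{\mu}k_{\mu}}=m_{\mu/\alpha}\frac{m_{\alpha}}{m_{\mu}}P_{\mu},
\]
where the middle step is just the composition law~\eqref{tr_prop} and the free sum over paths $r_{\mu/\alpha}$ contributes $m_{\mu/\alpha}$.

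Your ``overlap with $P_{\nu}$'' fallback would eventually work, but it is a detour around a direct computation; and your invocations of Fact~\ref{kpartr} and Lemma~\ref{L3} were triggered precisely by the wrong choice of traced systems, which left you with the ``wrong direction'' factor $m_{\mu}/m_{\alpha}$ and an operator on $n-2k$ particles. Fixing the index range removes the obstacle entirely.
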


\begin{proof}
The proof is based on the straightforward calculations and observations made in the proof of Lemma~\ref{A1}. Using Definition~\ref{efy} we have
\be
\begin{split}
\tr_{(k)}\left( F_{\mu}(\alpha)\right)&=\frac{m_{\alpha}}{m_{\mu}}\sum_{r_{\mu/\alpha}}\sum_{k_{\mu}}\tr_{(k)}\left(E_{k_{\mu} \ 1_{\alpha}}^{\quad r_{\mu/\alpha}}V^{(k)}E^{r_{\mu/\alpha}}_{1_{\alpha} \quad  k_{\mu}} \right)\\
																			&=\frac{m_{\alpha}}{m_{\mu}}\sum_{r_{\mu/\alpha}}\sum_{k_{\mu}} E_{k_{\mu} \ 1_{\alpha}}^{\quad r_{\mu/\alpha}}E^{r_{\mu/\alpha}}_{1_{\alpha} \quad  k_{\mu}}\\
&=m_{\mu/\alpha}\frac{m_{\alpha}}{m_{\mu}}\sum_{k_{\mu}}E^{\mu}_{k_{\mu}k_{\mu}}=m_{\mu/\alpha}\frac{m_{\alpha}}{m_{\mu}}P_{\mu}
\end{split}
\ee 
where in the last equality we used the definition of projectors $P_{\mu}$ given in~\eqref{def_P}.
\end{proof}

\begin{lemma}
\label{A3}
For operators $F_{\mu}(\alpha)$ given through Definition~\ref{efy} the following holds:
\be
\forall \alpha\vdash n-2k \quad \forall \mu\in \alpha \quad \tr\left( F_{\mu}(\alpha)\right)=m_{\mu/\alpha}m_{\alpha}d_{\mu},
\ee
where the numbers $m_{\mu},m_{\alpha}$ denote respective multiplicities of irreps in the Schur-Weyl duality, $d_{\mu}$ stands for the dimension of the irrep $\mu$, $m_{\mu/\alpha}$ denotes number of paths on reduced Young's lattice in which diagram $\mu$ can be obtained from diagram $\alpha$.
\end{lemma}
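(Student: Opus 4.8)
The plan is to derive Lemma~\ref{A3} as a short consequence of Lemma~\ref{A2} together with the trace property of the Young projectors. First I would note that the full trace factors through the partial trace over the last $k$ systems, so that $\tr F_\mu(\alpha) = \tr\bigl(\tr_{(k)} F_\mu(\alpha)\bigr)$. By Lemma~\ref{A2} the inner partial trace equals $m_{\mu/\alpha}\,\frac{m_\alpha}{m_\mu}\,P_\mu$, where $P_\mu$ is the Young projector onto the irrep $\mu\vdash n-k$. Hence it only remains to evaluate $\tr P_\mu$. From $P_\mu=\sum_{i_\mu}E^\mu_{i_\mu i_\mu}$ (see~\eqref{def_P}) and the trace rule $\tr E^\mu_{ij}=m_\mu\delta_{ij}$ from~\eqref{tr_prop}, and since $i_\mu$ ranges over $d_\mu$ values, we get $\tr P_\mu = m_\mu d_\mu$. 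Combining, $\tr F_\mu(\alpha) = m_{\mu/\alpha}\frac{m_\alpha}{m_\mu}\cdot m_\mu d_\mu = m_{\mu/\alpha}\,m_\alpha\,d_\mu$, which is the claim.

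Alternatively---and this is essentially the computation underlying Lemma~\ref{A2} written out---one can start from Definition~\ref{efy}, $F_\mu(\alpha)=\sum_{r_{\mu/\alpha}}\sum_{k_\mu}F^{r_{\mu/\alpha}r_{\mu/\alpha}}_{k_\mu\;k_\mu}$, insert the expression~\eqref{tmbas2} for the basis operators, use cyclicity of the trace, and collapse the two copies of $V^{(k)}$ sandwiching an $E$-operator by Fact~\ref{kpartr} and Lemma~\ref{L3} exactly as in the proof of Lemma~\ref{Vel}. This shows $\tr F^{r_{\mu/\alpha}r_{\mu/\alpha}}_{k_\mu\;k_\mu}=m_\alpha$ for every admissible path $r_{\mu/\alpha}$ and basis index $k_\mu$; summing over the $m_{\mu/\alpha}$ paths and $d_\mu$ values of $k_\mu$ then reproduces $m_{\mu/\alpha}m_\alpha d_\mu$.

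There is no genuine obstacle here: the statement is a corollary, and the only thing to be careful about is the bookkeeping---checking that $i_\mu$ indeed runs over $d_\mu$ values and that the multiplicity and dimension factors coming from Lemma~\ref{A2} and from $\tr P_\mu$ combine so that the $m_\mu$'s cancel. If one instead wanted a fully self-contained argument, the mild subtlety would be tracking how the PRIR indices in $E^{\;\;r_{\mu/\alpha}}_{k_\mu\;1_\alpha}$ and $E^{r_{\mu/\alpha}}_{1_\alpha\;\;k_\mu}$ compose back into a single diagonal matrix unit of irrep $\mu$ before the partial-trace formula of Lemma~\ref{L3} is applied, but this step is already carried out in the proofs of Lemmas~\ref{L3} and~\ref{Vel}.
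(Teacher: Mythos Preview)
Your primary argument is correct and is exactly the paper's own proof: apply Lemma~\ref{A2} to get $\tr_{(k)}F_\mu(\alpha)=m_{\mu/\alpha}\frac{m_\alpha}{m_\mu}P_\mu$, then use $\tr P_\mu=m_\mu d_\mu$. The alternative direct computation you sketch is also fine and matches the spirit of the surrounding lemmas, but adds nothing beyond what Lemma~\ref{A2} already encapsulates.
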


\begin{proof}
To compute the trace from $F_{\mu}(\alpha)$ is enough to compute the trace from the right-hand side of~\eqref{A2eq} of Lemma~\ref{A2}, knowing that $\tr P_{\mu}=m_{\mu}d_{\mu}$.
\end{proof}

\begin{lemma}
\label{simple}
For operators $F_{\mu}(\alpha)$ given through Definition~\ref{efy} and operator $V^{(k)}$ defined in~\eqref{parV}, the following holds:
\be
\label{cos}
V^{(k)}F_{\mu}(\alpha)=V^{(k)}P_{\alpha}P_{\mu}.
\ee
\end{lemma}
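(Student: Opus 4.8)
The plan is to follow the computational pattern already used in the proofs of Lemma~\ref{Vel} and Theorem~\ref{kPBTmat}. First I would expand $F_\mu(\alpha)$ using Definition~\ref{efy} together with formula~\eqref{tmbas2}, obtaining
\[
V^{(k)}F_\mu(\alpha)=\frac{m_\alpha}{m_\mu}\sum_{r_{\mu/\alpha}}\sum_{k_\mu}V^{(k)}E_{k_\mu\ 1_\alpha}^{\quad r_{\mu/\alpha}}V^{(k)}E^{r_{\mu/\alpha}}_{1_\alpha\quad k_\mu}.
\]
The block $V^{(k)}(\,\cdot\,)V^{(k)}$ is exactly the structure handled by Fact~\ref{kpartr}: applying it to the middle factor gives $V^{(k)}E_{k_\mu\ 1_\alpha}^{\quad r_{\mu/\alpha}}V^{(k)}=\tr_{(k)}\bigl(E_{k_\mu\ 1_\alpha}^{\quad r_{\mu/\alpha}}\bigr)V^{(k)}$. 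Decomposing $k_\mu=(s_{\mu/\beta},k_\beta)$ in the PRIR notation of Section~\ref{preliminary} and invoking Lemma~\ref{L3}, this partial trace equals $\delta^{s_{\mu/\beta}r_{\mu/\alpha}}\frac{m_\mu}{m_\alpha}E^\alpha_{k_\alpha 1_\alpha}$; the Kronecker delta (which in particular forces $\beta=\alpha$) collapses the sum over $k_\mu$ to a sum over indices of the form $(r_{\mu/\alpha},k_\alpha)$, leaving
\[
V^{(k)}F_\mu(\alpha)=\sum_{r_{\mu/\alpha}}\sum_{k_\alpha}E^\alpha_{k_\alpha 1_\alpha}\,V^{(k)}\,E^{r_{\mu/\alpha}r_{\mu/\alpha}}_{1_\alpha\quad k_\alpha}.
\]

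Next I would use that $E^\alpha_{k_\alpha 1_\alpha}$ is supported on the first $n-2k$ systems while $V^{(k)}$ acts only on the last $2k$, so $[E^\alpha_{k_\alpha 1_\alpha},V^{(k)}]=0$, which pulls $V^{(k)}$ to the front. Then, applying the matrix-unit composition rule $E^\alpha_{k_\alpha 1_\alpha}E^{r_{\mu/\alpha}r_{\mu/\alpha}}_{1_\alpha\quad k_\alpha}=E^{r_{\mu/\alpha}r_{\mu/\alpha}}_{k_\alpha\quad k_\alpha}$ --- the left-hand analogue of the relation $E_{i_\mu\ 1_\alpha}^{\quad r_{\mu/\alpha}}E^\alpha_{1_\alpha 1_\alpha}=E_{i_\mu\ 1_\alpha}^{\quad r_{\mu/\alpha}}$ already invoked in the proof of Theorem~\ref{tmbas} --- one recombines the two factors into a single diagonal PRIR block of the irrep $\mu\vdash n-k$, giving $V^{(k)}F_\mu(\alpha)=V^{(k)}\sum_{r_{\mu/\alpha}}\sum_{k_\alpha}E^{r_{\mu/\alpha}r_{\mu/\alpha}}_{k_\alpha\quad k_\alpha}$.

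Finally I would identify the remaining sum with $P_\alpha P_\mu$. Writing $P_\mu=\sum_{k_\mu}E^\mu_{k_\mu k_\mu}$ and decomposing each $k_\mu=(r_{\mu/\gamma},j_\gamma)$ in PRIR notation, the same composition rule gives $P_\alpha E^{r_{\mu/\gamma}r_{\mu/\gamma}}_{j_\gamma\quad j_\gamma}=\delta_{\alpha\gamma}E^{r_{\mu/\alpha}r_{\mu/\alpha}}_{j_\alpha\quad j_\alpha}$, hence $P_\alpha P_\mu=\sum_{r_{\mu/\alpha}}\sum_{k_\alpha}E^{r_{\mu/\alpha}r_{\mu/\alpha}}_{k_\alpha\quad k_\alpha}$, which is precisely the operator standing to the right of $V^{(k)}$ above; this proves $V^{(k)}F_\mu(\alpha)=V^{(k)}P_\alpha P_\mu$. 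As a consistency check, feeding this identity through $\tr_{n-k+1,\ldots,n}V^{(k)}=\mathbf{1}_{1,\ldots,n-k}$ and Corollary~\ref{corL3} recovers the statement of Lemma~\ref{A1}, which is the alternative proof of that lemma promised in the text. The only genuine obstacle is bookkeeping: one must track the PRIR path labels carefully through the partial trace so that the delta produced by Lemma~\ref{L3} is read correctly and the sum over $k_\mu$ really reduces to a sum over $(r_{\mu/\alpha},k_\alpha)$; everything else is a direct assembly of results already established.
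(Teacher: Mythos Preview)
Your proposal is correct and follows essentially the same route as the paper: expand $F_\mu(\alpha)$ via Definition~\ref{efy} and~\eqref{tmbas2}, apply Fact~\ref{kpartr} and Lemma~\ref{L3} to collapse the inner $V^{(k)}(\cdot)V^{(k)}$ block, commute $E^\alpha_{k_\alpha 1_\alpha}$ past $V^{(k)}$, recombine the matrix units, and then identify the resulting sum with the PRIR expansion of $P_\alpha P_\mu$. The paper's argument is the same step for step, only slightly more terse about the commutation.
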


\begin{proof}
First let us write explicitly the left-hand side of~\eqref{cos} using Definition~\ref{efy} and Lemma~\ref{L3}:
\be
\label{p1}
\begin{split}
	V^{(k)}F_{\mu}(\alpha)&=V^{(k)}\sum_{r_{\mu/\alpha}}\sum_{k_{\mu}}F^{r_{\mu/\alpha}r_{\mu/\alpha}}_{k_{\mu} \quad k_{\mu}}\\
	                      &=\frac{m_{\alpha}}{m_{\mu}}\sum_{r_{\mu/\alpha}}\sum_{r_{\mu/\beta}}\sum_{i_{\beta}}V^{(k)}E^{r_{\mu/\beta}r_{\mu/\alpha}}_{i_{\beta}\quad 1_{\alpha}}V^{(k)}E^{r_{\mu/\alpha}r_{\mu/\beta}}_{1_{\alpha}\quad i_{\beta}}\\
												&=V^{(k)}\sum_{r_{\mu/\alpha}}\sum_{i_{\alpha}}E^{\alpha}_{i_{\alpha}1_{\alpha}}E^{r_{\mu/\alpha}r_{\mu/\alpha}}_{1_{\alpha}\quad i_{\alpha}}\\
												&=V^{(k)}\sum_{r_{\mu/\alpha}}\sum_{i_{\alpha}}E^{r_{\mu/\alpha}r_{\mu/\alpha}}_{i_{\alpha}\quad i_{\alpha}}.
\end{split}
\ee
Now, writing composition $P_{\alpha}P_{\mu}$ in PRIR basis we get:
\be
\label{p2}
\begin{split}
	V^{(k)}P_{\alpha}P_{\mu}&=V^{(k)}\sum_{i_{\alpha}}E^{\alpha}_{i_{\alpha}i_{\alpha}}\sum_{r_{\mu/\beta}}\sum_{j_{\beta}}E^{r_{\mu/\beta}r_{\mu/\beta}}_{j_{\beta}\quad j_{\beta}}\\
	                        &=V^{(k)}\sum_{r_{\mu/\alpha}}\sum_{i_{\alpha}}E^{r_{\mu/\alpha}r_{\mu/\alpha}}_{i_{\alpha}\quad i_{\alpha}}
\end{split}
\ee
since $E^{\alpha}_{i_{\alpha}i_{\alpha}}E^{r_{\mu/\beta}r_{\mu/\beta}}_{j_{\beta}\quad j_{\beta}}=\delta^{\alpha\beta}\delta_{i_{\alpha}j_{\beta}}E^{r_{\mu/\alpha}r_{\mu/\alpha}}_{i_{\alpha}\quad i_{\alpha}}$. Now observing  that right-hand sides of~\eqref{p1} and~\eqref{p2} coincide we finish the proof.
\end{proof}
One can observe that having~\eqref{cos} we can prove the statement of Lemma~\ref{A1} applying directly Corollary~\ref{corL3} to projector $P_{\mu}$. Indeed we have
\be
\begin{split}
\tr_{(2k)}\left( V^{(k)}F_{\mu}(\alpha)\right) & =\tr_{(2k)}\left(V^{(k)}P_{\alpha}P_{\mu}\right)\\
&=\tr_{(k)}\left(P_{\alpha}P_{\mu} \right)=m_{\mu/\alpha}\frac{m_{\mu}}{m_{\alpha}}P_{\alpha},
\end{split}
\ee
where $\tr_{(2k)}\equiv \tr_{n-2k+1,\ldots,n}$ and $\tr_{(k)}=\tr_{n-2k+1,\ldots,n-k}$.

\section{Entanglement fidelity in Deterministic version of the protocol}
\label{detkPBT}
 Having description of the deterministic version of MPBT from Section~\ref{interest} and  mathematical tools developed in Section~\ref{comm_structure}, especially the spectral decomposition of the operator $\rho$, given in Theorem~\ref{eig_dec_rho}, we can formulate the following:
\begin{theorem}
	\label{Fthm}
	The entanglement fidelity in the deterministic multi-port teleportation with $N$ ports and local dimension $d$ is given as
	\be
	\label{Feq1}
	F=\frac{1}{d^{N+2k}}\sum_{\alpha \vdash N-k}\left(\sum_{\mu\in\alpha}m_{\mu/\alpha} \sqrt{m_{\mu}d_{\mu}}\right)^2,
	\ee
	where $m_{\mu},d_{\mu}$ denote multiplicity and dimension of irreducible representations of $S(N)$ respectively, and $m_{\mu/\alpha}$ denotes number of paths on reduced Young's lattice in which diagram $\mu$ can be obtained from diagram $\alpha$ by adding $k$ boxes.
\end{theorem}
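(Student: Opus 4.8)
The plan is to reduce the entanglement fidelity \eqref{ent_fid} to a single trace built from $\rho$ and $V^{(k)}$, and then to evaluate that trace using the spectral decomposition of $\rho$ from Theorem~\ref{eig_dec_rho} together with the partial-trace identities of Section~\ref{comm_structure}. First I would insert the square-root measurements \eqref{povm} into \eqref{ent_fid}. Since $\Delta$ is (up to normalisation) the projector onto $\ker\rho$ while every signal $\sigma_{\mathbf{i}}$ satisfies $0\le\sigma_{\mathbf{i}}\le\rho$ (indeed $\rho=\sum_{\mathbf{j}}\sigma_{\mathbf{j}}$ with all terms positive), the $\Delta$-part drops out, as already noted below \eqref{povm}, and $F=\frac{1}{d^{2k}}\sum_{\mathbf{i}\in\mathcal{I}}\tr\!\left[\rho^{-1/2}\sigma_{\mathbf{i}}\rho^{-1/2}\sigma_{\mathbf{i}}\right]$, the inverse being taken on the support. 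Then I would use covariance: by \eqref{rel2} one has $V_{\tau}\rho^{-1/2}V_{\tau^{-1}}=\rho^{-1/2}$ for $\tau\in S(n-k)$, and since $\mathcal{S}_{n,k}$ acts transitively on $\mathcal{I}$ via \eqref{rel1}, every summand equals the one for $\mathbf{i}_0$, giving $F=\frac{|\mathcal{I}|}{d^{2k}}\tr\!\left[\rho^{-1/2}\sigma_{\mathbf{i}_0}\rho^{-1/2}\sigma_{\mathbf{i}_0}\right]$. Substituting $\sigma_{\mathbf{i}_0}=\frac{1}{d^{N}}V^{(k)}$ from \eqref{signal2} and $|\mathcal{I}|=k!\binom{N}{k}$ yields $F=\frac{k!\binom{N}{k}}{d^{2N+2k}}\tr\!\left[\rho^{-1/2}V^{(k)}\rho^{-1/2}V^{(k)}\right]$.

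Next I would expand $\rho^{-1/2}=\sum_{\alpha}\sum_{\mu\in\alpha}\lambda_{\mu}(\alpha)^{-1/2}F_{\mu}(\alpha)$ (Theorem~\ref{eig_dec_rho} and Lemma~\ref{efy2}), so the trace becomes $\sum\lambda_{\mu}(\alpha)^{-1/2}\lambda_{\nu}(\beta)^{-1/2}\tr\!\left[F_{\mu}(\alpha)V^{(k)}F_{\nu}(\beta)V^{(k)}\right]$. The key computation is $V^{(k)}F_{\nu}(\beta)V^{(k)}$: by Lemma~\ref{simple}, $V^{(k)}F_{\nu}(\beta)=V^{(k)}P_{\beta}P_{\nu}$; since $P_{\beta}$ acts on the first $n-2k$ systems it commutes with $V^{(k)}$, and $V^{(k)}P_{\nu}V^{(k)}=\tr_{(k)}(P_{\nu})V^{(k)}$ by Fact~\ref{kpartr}, so Corollary~\ref{corL3} together with $P_{\beta}P_{\gamma}=\delta_{\beta\gamma}P_{\beta}$ collapses the expression to $m_{\nu/\beta}\tfrac{m_{\nu}}{m_{\beta}}P_{\beta}V^{(k)}$. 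A second, parallel use of Lemma~\ref{simple}, of the commutation of $P_{\alpha}$ with $V^{(k)}$, and of Lemma~\ref{A1} (giving $\tr\!\left[V^{(k)}F_{\mu}(\alpha)\right]=m_{\mu/\alpha}\tfrac{m_{\mu}}{m_{\alpha}}\tr P_{\alpha}=m_{\mu/\alpha}m_{\mu}d_{\alpha}$) then gives $\tr\!\left[F_{\mu}(\alpha)P_{\beta}V^{(k)}\right]=\delta_{\alpha\beta}\,m_{\mu/\alpha}m_{\mu}d_{\alpha}$. Combining the two, $\tr\!\left[F_{\mu}(\alpha)V^{(k)}F_{\nu}(\beta)V^{(k)}\right]=\delta_{\alpha\beta}\,m_{\mu/\alpha}m_{\nu/\alpha}\tfrac{m_{\mu}m_{\nu}}{m_{\alpha}}d_{\alpha}$.

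Finally I would substitute $\lambda_{\mu}(\alpha)=\frac{k!\binom{N}{k}}{d^{N}}\frac{m_{\mu}}{m_{\alpha}}\frac{d_{\alpha}}{d_{\mu}}$ from \eqref{rhodec0a}. A short calculation shows that $\lambda_{\mu}(\alpha)^{-1/2}\lambda_{\nu}(\alpha)^{-1/2}\,m_{\mu/\alpha}m_{\nu/\alpha}\tfrac{m_{\mu}m_{\nu}}{m_{\alpha}}d_{\alpha}$ simplifies to $\frac{d^{N}}{k!\binom{N}{k}}\,m_{\mu/\alpha}\sqrt{m_{\mu}d_{\mu}}\;m_{\nu/\alpha}\sqrt{m_{\nu}d_{\nu}}$, so that the double sum over $\mu,\nu\in\alpha$ factors as a perfect square and $\tr\!\left[\rho^{-1/2}V^{(k)}\rho^{-1/2}V^{(k)}\right]=\frac{d^{N}}{k!\binom{N}{k}}\sum_{\alpha\vdash N-k}\big(\sum_{\mu\in\alpha}m_{\mu/\alpha}\sqrt{m_{\mu}d_{\mu}}\big)^{2}$; multiplying by the prefactor $\frac{k!\binom{N}{k}}{d^{2N+2k}}$ reproduces \eqref{Feq1}, noting that $n-2k=N-k$ and $n-k=N$ so that $m_{\mu},d_{\mu}$ are the multiplicities and dimensions of irreps of $S(N)$. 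The main obstacle is the bookkeeping inside the key computation: keeping straight which Young projectors live on $n-2k$ versus $n-k$ systems, which commute with $V^{(k)}$, and which of Fact~\ref{kpartr}, Corollary~\ref{corL3} and Lemma~\ref{A1} applies at each step; once the identity for $V^{(k)}F_{\nu}(\beta)V^{(k)}$ is secured, the rest is bookkeeping-free algebra.
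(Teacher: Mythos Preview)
Your proposal is correct and follows essentially the same route as the paper's proof: covariance reduction to a single signal, spectral expansion of $\rho^{-1/2}$ via Theorem~\ref{eig_dec_rho}, and evaluation of $\tr\big[F_{\mu}(\alpha)V^{(k)}F_{\nu}(\beta)V^{(k)}\big]$ via Lemma~\ref{simple}, Fact~\ref{kpartr} and Corollary~\ref{corL3}, followed by the same algebraic simplification with~\eqref{rhodec0a}. The only cosmetic difference is that you invoke Lemma~\ref{A1} to close the computation of $\tr\big[V^{(k)}F_{\mu}(\alpha)\big]$, whereas the paper instead uses $\tr_{n-k+1,\ldots,n}V^{(k)}=\mathbf{1}$ and then applies Corollary~\ref{corL3} to $\tr(P_{\mu}P_{\alpha})$; these are equivalent (the paper itself remarks, just after Lemma~\ref{simple}, that Lemma~\ref{A1} follows from exactly this combination).
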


\begin{proof}
In the first step of the proof we apply the covariance property~\eqref{rel1} and~\eqref{rel2} to equation~\eqref{ent_fid} describing the entanglement fidelity and obtain the following expression:
\be
\begin{split}
\label{ef}
F=\frac{1}{d^{2k}}\sum_{\mathbf{i}\in\mathcal{I}}\tr\left(\Pi_{\mathbf{i}}^{A\widetilde{B}}\sigma_{\mathbf{i}}^{A\widetilde{B}} \right)& =\frac{|\mathcal{S}_{n,k}|}{d^{2k}}\tr\left(\Pi_{\mathbf{i}_0}^{A\widetilde{B}}\sigma_{\mathbf{i}_0}^{A\widetilde{B}} \right)\\
                                                                                                                                       &=\frac{k!\binom{N}{k}}{d^{2k}}\tr\left(\frac{1}{\sqrt{\rho}}\sigma_{\mathbf{i}_0}^{A\widetilde{B}}\frac{1}{\sqrt{\rho}}\sigma_{\mathbf{i}_0}^{A\widetilde{B}} \right), 
\end{split}
\ee
where $\sigma_{\mathbf{i}_0}^{A\widetilde{B}}$ is defined in~\eqref{signal2}. In the second equality we used the covariance property of signals $\sigma_{\mathbf{i}}^{A\widetilde{B}}$ and invariance of $\rho$ with respect to the coset $\mathcal{S}_{n,k}$.
Using spectral decomposition of the operator $\rho$ presented in Theorem~\ref{eig_dec_rho} we expand equation~\eqref{ef} to:
\begin{IEEEeqnarray}{lll}
	F&=\frac{k!}{d^{2k}}\binom{N}{k}\tr\left(\Pi_{\mathbf{i}_0}^{A\widetilde{B}}\sigma_{\mathbf{i}_0}^{A\widetilde{B}} \right) & \nonumber\\
	 &=\frac{k!}{d^{2N}}\binom{N}{k}\sum_{\alpha,\beta \vdash N-k}\sum_{\mu\in\alpha}\sum_{\nu\in\beta} & \frac{1}{\sqrt{\lambda_{\mu}(\alpha)}}\frac{1}{\sqrt{\lambda_{\nu}(\beta)}}\nonumber\\
	 & & \times\tr\left(F_{\mu}(\alpha)V^{(k)}F_{\nu}(\beta)V^{(k)} \right).\nonumber\\
\label{fak}
\end{IEEEeqnarray}
Now applying Lemma~\ref{simple} we can rid of the operators $F_{\mu}(\alpha)$ 
\begin{IEEEeqnarray}{lll}
F&=\frac{k!}{d^{2N}}\binom{N}{k}\sum_{\alpha,\beta \vdash N-k}\sum_{\mu\in\alpha}\sum_{\nu\in\beta}&\frac{1}{\sqrt{\lambda_{\mu}(\alpha)}}\frac{1}{\sqrt{\lambda_{\nu}(\beta)}}\nonumber\\ 
 &&\tr\left(F_{\mu}(\alpha)V^{(k)}F_{\nu}(\beta)V^{(k)} \right)\nonumber\\
 &=\frac{k!}{d^{2N}}\binom{N}{k}\sum_{\alpha,\beta \vdash N-k}\sum_{\mu\in\alpha}\sum_{\nu\in\beta}&\frac{1}{\sqrt{\lambda_{\mu}(\alpha)}}\frac{1}{\sqrt{\lambda_{\nu}(\beta)}}\nonumber\\
 &&\tr\left(P_{\mu}P_{\alpha}V^{(k)}P_{\nu}P_{\beta}V^{(k)} \right).\nonumber\\
\label{fak2}
\end{IEEEeqnarray}
Observing $\left[ P_{\beta},V^{(k)}\right] =0$, we can apply Fact~\ref{kpartr} together with Corollary~\ref{corL3} to $V^{(k)}P_{\nu}V^{(k)}$, getting
\begin{IEEEeqnarray}{lll}
	F&=\frac{k!}{d^{2N}}\binom{N}{k}&\sum_{\alpha,\beta \vdash N-k}\sum_{\mu\in\alpha}\sum_{\nu\in\beta}\frac{1}{\sqrt{\lambda_{\mu}(\alpha)}}\frac{1}{\sqrt{\lambda_{\nu}(\beta)}}\nonumber\\
	 &&\times \sum_{\beta'\in\nu}m_{\nu/\beta'}\frac{m_{\nu}}{m_{\beta'}}\tr\left(P_{\mu}P_{\alpha}P_{\beta}P_{\beta'} V^{(k)}\right) \nonumber \\
	 &=\frac{k!}{d^{2N}}\binom{N}{k}&\sum_{\alpha \vdash N-k}\sum_{\mu,\nu\in\alpha}\frac{1}{\sqrt{\lambda_{\mu}(\alpha)}}\frac{1}{\sqrt{\lambda_{\nu}(\alpha)}}m_{\nu/\alpha}\frac{m_{\nu}}{m_{\alpha}}\nonumber\\ 
	 & & \times \tr\left(P_{\mu}P_{\alpha}\tr_{(k)}V^{(k)}\right)\nonumber \\
	 &=\frac{k!}{d^{2N}}\binom{N}{k}&\sum_{\alpha \vdash N-k}\sum_{\mu,\nu\in\alpha}\frac{1}{\sqrt{\lambda_{\mu}(\alpha)}}\frac{1}{\sqrt{\lambda_{\nu}(\alpha)}}m_{\nu/\alpha}\frac{m_{\nu}}{m_{\alpha}}\nonumber\\
& &\times \tr\left(P_{\mu}P_{\alpha}\right).\nonumber\\
\end{IEEEeqnarray}
Again applying Corollary~\ref{corL3}, this time to projector $P_{\mu}$, together with $\tr P_{\alpha}=m_{\alpha}d_{\alpha}$, we have
\begin{multline}
F=\\
\frac{k!}{d^{2N}}\binom{N}{k}\sum_{\alpha \vdash N-k}\sum_{\mu,\nu\in\alpha}\frac{1}{\sqrt{\lambda_{\mu}(\alpha)}}\frac{1}{\sqrt{\lambda_{\nu}(\alpha)}}m_{\nu/\alpha}m_{\mu/\alpha}m_{\mu}m_{\nu}\frac{d_{\alpha}}{m_{\alpha}}.
\end{multline}
Using explicit expression for eigenvalues $\lambda_{\mu}(\alpha),\lambda_{\nu}(\alpha)$ given in~\eqref{rhodec0a} we have
\begin{IEEEeqnarray}{lll}
	F&=\frac{k!}{d^{2N+2k}}\binom{N}{k}\frac{d^{N}}{k!\binom{N}{k}}\sum_{\alpha \vdash N-k}&\sum_{\mu,\nu\in\alpha}m_{\mu/\alpha}m_{\nu/\alpha}\nonumber \\
	 && \times\sqrt{\frac{m_{\alpha}d_{\mu}}{m_{\mu}d_{\alpha}}}\sqrt{\frac{m_{\alpha}d_{\nu}}{m_{\nu}d_{\alpha}}}\frac{m_{\mu}m_{\nu}}{m_{\alpha}}d_{\alpha}\nonumber \\
	 &=\frac{1}{d^{N+2k}}\sum_{\alpha \vdash N-k}\sum_{\mu,\nu\in\alpha}m_{\mu/\alpha}&\sqrt{m_{\mu}d_{\mu}}m_{\nu/\alpha}\sqrt{m_{\nu}d_{\nu}}\nonumber\\
	 &=\frac{1}{d^{N+2k}}\sum_{\alpha \vdash N-k}\bigg(\sum_{\mu\in\alpha}m_{\mu/\alpha} &\sqrt{m_{\mu}d_{\mu}}\bigg)^2.\nonumber\\
\end{IEEEeqnarray}
This finishes the proof.
\end{proof}
An alternative proof of Theorem~\ref{Fthm} is presented in Appendix~\ref{AppA}. One can see that by setting $k=1$ to~\eqref{Feq1} we reproduce known expression for entanglement fidelity in ordinary port-based teleportation~\cite{Studzinski2017}. Indeed, in this case always $m_{\mu/\alpha}=1$, for any $\mu\in\alpha$, since we can move only by one layer on reduced Young's lattice.
The expression from~\eqref{Feq1} is plotted in Figure~\ref{fig:test2} for different number of ports $N$ as well local dimension $d$ and number of teleported states $k$. We  see that our deterministic scheme performs significantly better than standard PBT protocol, even in the optimal scheme, with respective dimension of the port. 

\section{Probability of success in Probabilistic version of the protocol}
\label{probkPBT}
 Having description of the probabilistic version of MPBT scheme from Section~\ref{interest} we are in position to solve SDP programs and evaluate optimal probability of success $p$ when the parties share maximally entangled states. Namely, we have the following:
\begin{theorem}
\label{thm_p}
The average probability of success in the probabilistic multi-port teleportation with $N$ ports and local dimension $d$ is given as
\be
\label{exact}
p=\frac{k!\binom{N}{k}}{d^{2N}}\sum_{\alpha \vdash N-k}\mathop{\operatorname{min}}\limits_{\mu\in\alpha}\frac{m_{\alpha}d_{\alpha}}{\lambda_{\mu}(\alpha)},
\ee
with optimal measurements of the form
\be
\label{ex_measurements}
\forall \ \mathbf{i}\in\mathcal{I}\qquad \Pi_{\mathbf{i}}^{AC}=\frac{k!\binom{N}{k}}{d^{2N}}P^+_{A_{\mathbf{i}}C}\ot \sum_{\alpha \vdash N-k}P_{\alpha}\mathop{\operatorname{min}}\limits_{\mu\in\alpha}\frac{1}{\lambda_{\mu}(\alpha)}.
\ee
Numbers $\lambda_{\mu}(\alpha)$ are eigenvalues of $\rho$ and are given in~\eqref{rhodec0a} and $m_{\alpha}, d_{\alpha}$ denote multiplicity and dimension of the irrep labelled by $\alpha$.
\end{theorem}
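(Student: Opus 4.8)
The plan is to solve the two semidefinite programs boxed in Section~\ref{interest} by producing mutually tight feasible points: the true success probability $p$ then sits between the dual optimum $p_*$ and the primal optimum $p^*$, and showing $p_*=p^*$ proves \eqref{exact}. The engine throughout is symmetrisation followed by the spectral calculus of Section~\ref{comm_structure}, which turns both SDPs into decoupled scalar linear programs indexed by $\alpha\vdash N-k$.

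\emph{Primal.} First I would symmetrise. The whole protocol is invariant under $S(n-k)$ (acting simultaneously on all ports) and under $U^{\otimes(n-k)}\otimes\overline{U}^{\otimes k}$, so averaging a feasible POVM over these groups preserves both the objective \eqref{primal} and the constraints \eqref{con1}; by covariance \eqref{rel1}--\eqref{rel2} the averaged POVM is determined by $\Theta\equiv\Theta_{\overline{A}_{\mathbf{i}_0}}$, and the averaging forces $\Theta=\sum_{\alpha\vdash n-2k}c_\alpha P_\alpha$ with $c_\alpha\geq0$ (the stabiliser of $\mathbf{i}_0$ acts as $S(n-2k)$ on $\overline{A}_{\mathbf{i}_0}$, and we also average over $U^{\otimes(n-2k)}$). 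Using $V^{(k)}=d^{k}\,\mathbf{1}_{\overline{A}_{\mathbf{i}_0}}\otimes P^+_{A_{\mathbf{i}_0}C}$ (identifying $\widetilde{B}\equiv C$) and \eqref{PBT1}, one gets $\sum_{\mathbf{i}\in\mathcal{I}}\Pi^{AC}_{\mathbf{i}}=d^{-k}\sum_{\tau\in\mathcal{S}_{n,k}}V_\tau\,\Theta\,V^{(k)}V_{\tau^{-1}}$, which lies in the ideal $\mathcal{M}$. Repeating the computation in the proof of Theorem~\ref{kPBTmat} with the extra factor $\Theta$ inserted (and using that $\Theta$ acts as the scalar $c_\alpha$ on the $\alpha$-sector, Lemma~\ref{simple}), this operator is diagonal in the projectors $F_\mu(\alpha)$ of Definition~\ref{efy}; since for $\Theta=\mathbf{1}$ it reduces to $d^{N-k}\rho$ by \eqref{sigma}, Theorem~\ref{eig_dec_rho} gives eigenvalue $c_\alpha\,d^{N-k}\lambda_\mu(\alpha)$ in the block $(\mu,\alpha)$. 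Because $\sum_{\mathbf{i}}\Pi^{AC}_{\mathbf{i}}\in\mathcal{M}$, the operator inequality in \eqref{con1} is equivalent to $\leq\mathbf{1}_{\mathcal{M}}$ (Lemma~\ref{efy2}), i.e.\ to the finitely many scalar conditions $c_\alpha\,d^{N-k}\lambda_\mu(\alpha)\leq1$, while \eqref{primal} becomes $p^*=\frac{k!\binom{N}{k}}{d^{N+k}}\sum_\alpha c_\alpha m_\alpha d_\alpha$ (as $\tr P_\alpha=m_\alpha d_\alpha$). This linear program decouples over $\alpha$; maximising pushes each $c_\alpha$ to $\big(d^{N-k}\max_{\mu\in\alpha}\lambda_\mu(\alpha)\big)^{-1}$, which yields the value \eqref{exact} and the optimal $\Pi^{AC}_{\mathbf{i}}$ of \eqref{ex_measurements}.

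\emph{Dual.} I would symmetrise $\Omega$ in the same way, so that $\Omega$ is $S(n-k)$-invariant; since the constraint \eqref{dual} couples to $\Omega$ only through $\tr_{A_{\mathbf{i}}C}(P^+_{A_{\mathbf{i}}C}\Omega)=d^{-k}\tr_{(2k)}(V^{(k)}\Omega)$, while any component of $\Omega$ outside $\mathcal{M}$ merely inflates $\tr\Omega$, we may take $\Omega=\sum_{\alpha}\sum_{\mu\in\alpha}\omega_\mu(\alpha)F_\mu(\alpha)$, $\omega_\mu(\alpha)\geq0$. By Lemma~\ref{A1}, $\tr_{(2k)}(V^{(k)}\Omega)=\sum_\alpha\big(\sum_{\mu\in\alpha}\omega_\mu(\alpha)m_{\mu/\alpha}\frac{m_\mu}{m_\alpha}\big)P_\alpha$, and since $\sum_\alpha P_\alpha=\mathbf{1}_{N-k}$ the constraint becomes $\sum_{\mu\in\alpha}\omega_\mu(\alpha)m_{\mu/\alpha}\frac{m_\mu}{m_\alpha}\geq d^{k}$ for every $\alpha$; by Lemma~\ref{A3}, $\tr\Omega=\sum_{\alpha,\mu\in\alpha}\omega_\mu(\alpha)m_{\mu/\alpha}m_\alpha d_\mu$. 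Again a decoupled linear program: its minimum is attained at a vertex putting all weight on the $\mu\in\alpha$ minimising $d_\mu/m_\mu$, giving $p_*=\frac1{d^{N}}\sum_\alpha\min_{\mu\in\alpha}\frac{m_\alpha^2 d_\mu}{m_\mu}$, which coincides with $p^*$ after substituting $\lambda_\mu(\alpha)$ from \eqref{rhodec0a}. Weak duality $p_*\leq p\leq p^*$ then forces equality, establishing \eqref{exact} and the optimality of \eqref{ex_measurements}.

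\emph{Main obstacle.} The delicate part is not the final linear program --- a linear functional on a simplex is extremised at a vertex, which is where the ``$\min_{\mu\in\alpha}$'' comes from --- but the two reductions. First, one must argue carefully that averaging a feasible point of the non--trace-preserving probabilistic protocol over the full symmetry group keeps it feasible and does not change the objective, so that the covariant ans\"atze above are without loss of generality. Second, and more technically, one needs the structural identities that $\sum_{\mathbf{i}}\Pi^{AC}_{\mathbf{i}}$ and $\tr_{(2k)}(V^{(k)}\Omega)$ are diagonal, with precisely the stated coefficients, in the $F_\mu(\alpha)$ and $P_\alpha$ bases, together with the reduction of the matrix inequalities $X\leq\mathbf{1}_{AC}$ (resp.\ $\geq\mathbf{1}_{N-k}$) to scalar conditions on the individual blocks; this is exactly where Theorem~\ref{kPBTmat}, Theorem~\ref{eig_dec_rho}, Lemma~\ref{simple}, Lemma~\ref{A1} and Lemma~\ref{A3} carry the load.
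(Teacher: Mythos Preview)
Your proposal is correct and follows essentially the same route as the paper: a primal feasible point built from the covariant ansatz $\Theta=\sum_\alpha c_\alpha P_\alpha$, reduced via Theorem~\ref{eig_dec_rho} to the scalar constraints $c_\alpha d^{N-k}\lambda_\mu(\alpha)\leq1$, and a dual feasible point built from the $F_\mu(\alpha)$ and evaluated through Lemmas~\ref{A1} and~\ref{A3}; matching $p^*=p_*$ closes the argument. The only cosmetic difference is that in the dual the paper posits directly the vertex solution $\Omega=\sum_\alpha x_{\mu^*}(\alpha)F_{\mu^*}(\alpha)$ and merely checks feasibility, whereas you parametrise by all $\omega_\mu(\alpha)$ and optimise the resulting LP to arrive at the same vertex --- your version is slightly more explanatory, but the content is the same.
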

\begin{proof}
The solution of optimisation tasks, so proof of the above theorem, is based solely on methods and tools delivered in Section~\ref{preliminary} and Section~\ref{comm_structure}. We start from solving the primal problem. Due to symmetry in our scheme we assume that $\forall \mathbf{i}\in\mathcal{I} \quad \Theta_{\overline{A}_{\mathbf{i}}}=\sum_{\alpha \vdash N-k}x_{\alpha}P_{\alpha}$ with $x_{\alpha}\geq 0$ to satisfy constraint (1) from~\eqref{con1}. Operators $P_{\alpha}$ are Young projectors acting on subsystems defined by the symbol $\overline{A}_{\mathbf{i}}$.  To satisfy constraint (2) from~\eqref{con1} we write for every irreducible block $\alpha$:
\be
\begin{split}
\label{ineq}
\sum_{\mathbf{i}\in \mathcal{I}}P^+_{A_{\mathbf{i}}C}\ot \Theta_{\overline{A}_{\mathbf{i}}}(\alpha)&=\frac{x_{\alpha}}{d^k}\sum_{\tau\in\mathcal{S}_{n,k}}V_{\tau^{-1}}V^{(k)}\ot P_{\alpha}V_{\tau}\\
																																																	 &=d^{N-k}x_{\alpha}\rho(\alpha)\leq P_{\alpha}.
\end{split}
\ee
In the above expression we use fact that for operator $\rho$ from~\eqref{PBT1} and projection $P_{\alpha}$ we have $\rho(\alpha)=P_{\alpha}\rho P_{\alpha}$. Now, to satisfy inequality~\ref{ineq} it is enough to require:
\be
\label{border}
\forall  \alpha \quad x_{\alpha}\leq d^{k-N}\mathop{\operatorname{min}}\limits_{\mu\in\alpha}\frac{1}{\lambda_{\mu}(\alpha)},
\ee
where numbers $\lambda_{\mu}(\alpha)$ are eigenvalues of $\rho$ and are given in~\eqref{rhodec0a}.
Using assumption of covariance of measurements $\forall \ \tau\in \mathcal{S}_{n,k} \quad V_{\tau}\Pi_{\mathbf{i}}V_{\tau^{-1}}=\Pi_{\tau(\mathbf{i})}$ it is enough to work with the index $\mathbf{i}_0$ only. Having that and border solution for $x_{\alpha}$ from~\eqref{border}, we calculate the quantity $p^*$ from~\eqref{primal}:
\be
\begin{split}
\label{dualpp}
p^*=\frac{1}{d^{N+k}}\sum_{\mathbf{i}\in\mathcal{I}}\tr\left(\sum_{\alpha \vdash N-k}x_{\alpha}P_{\alpha} \right)&=\frac{k!\binom{N}{k}}{d^{N+k}}\sum_{\alpha}x_{\alpha}\tr P_{\alpha}\\&=\frac{k!\binom{N}{k}}{d^{2N}}\mathop{\operatorname{min}}\limits_{\mu\in\alpha}\frac{m_{\alpha}d_{\alpha}}{\lambda_{\mu}(\alpha)},
\end{split}
\ee
since $\tr P_{\alpha}=m_{\alpha}d_{\alpha}$. For showing optimality of $p^*$ we need to solve the dual problem from~\eqref{dual0} and~\eqref{dual}. We assume the following form of the operator $\Omega$ in~\eqref{dual0}:
\be
\label{omega}
\Omega=\sum_{\alpha\vdash N-k}x_{\mu^*}(\alpha)F_{\mu*}(\alpha),\qquad x_{\mu^*}(\alpha)=d^k\frac{1}{m_{\mu^*/\alpha}}\frac{m_{\alpha}}{m_{\mu^*}}.
\ee
The symbol $\mu^*$ means that we are looking for such $\mu\in\alpha$ which minimizes the quantity $p_*$ from~\eqref{dual0}. Operators $F_{\mu^*}(\alpha)$ are eigenprojectors of $\rho$ given through Definition~\ref{efy} and Theorem~\ref{eig_dec_rho}, symbol $m_{\mu^*/\alpha}$ denotes number of paths on reduced Young's lattice in which diagram $\mu^*$ can be obtained from diagram $\alpha$. Finally $m_{\mu^*},m_{\alpha}$ denote respective multiplicities of irreps. Since we are looking for any feasible solution to bound exact average probability of success $p$ from the below we are allowed for such kind of assumptions. The first constraint from~\eqref{dual} is automatically satisfied due to assumed form of $\Omega$ in~\eqref{omega}. To check the second condition we need to compute
\be
\tr_{(2k)}\left(P^+_{A_{\mathbf{i}}C}\Omega \right)= \tr_{(2k)}\left(P^+_{A_{\mathbf{i}_0}C}\Omega \right)=\frac{1}{d^k}\tr_{(2k)}\left(V^{(k)}\Omega \right),
\ee
where we used covariance property of $P^+_{A_{\mathbf{i}}C}$ and covariance of $\Omega$ with respect to the elements from the coset $\mathcal{S}_{n,k}$. Writing explicitly $\Omega$ and using Lemma~\ref{A1} we have
\be
\begin{split}
	\frac{1}{d^k}\tr_{(2k)}\left(V^{(k)}\Omega \right)&=\sum_{\alpha}\frac{1}{m_{\mu^*/\alpha}}\frac{m_{\alpha}}{m_{\mu^*}}\tr_{(2k)}\left(V^{(k)}F_{\mu^*}(\alpha) \right)\\ 
	                                                  &=\sum_{\alpha}P_{\alpha}=\mathbf{1},
\end{split}
\ee
so we satisfy the second constraint from~\eqref{dual} with equality. Now we are in position to compute $p_*$ from~\eqref{dual}:
\be
\begin{split}
\label{dualp}
p_*=\frac{1}{d^{N+k}}\tr\Omega&=\frac{1}{d^N}\sum_{\alpha}\frac{1}{m_{\mu^*/\alpha}}\frac{m_{\alpha}}{m_{\mu^*}}\tr\left(F_{\mu^*}(\alpha) \right)\\
															&=\frac{1}{d^N}\sum_{\alpha}\frac{m_{\alpha}^2d_{\mu^*}}{m_{\mu^*}}=\frac{k!\binom{N}{k}}{d^{2N}}\mathop{\operatorname{min}}\limits_{\mu\in\alpha}\frac{m_{\alpha}d_{\alpha}}{\lambda_{\mu}(\alpha)}.
															\end{split}
\ee
In third equality we use Lemma~\ref{A3}, in fourth we used the definition of the symbol $\mu^*$ and form of $\lambda_{\mu}(\alpha)$ from~\eqref{rhodec0a}. From expressions~\eqref{dualpp} and~\eqref{dualp} we see that $p^*=p_*$. We conclude that exact value of the average success probability indeed is given through expression~\eqref{exact} with corresponding measurements~\eqref{ex_measurements} presented in Theorem~\ref{thm_p}.
\end{proof}
\section{Discussion}
\label{diss}
 In this paper, we deliver analysis of the the multi-port based teleportation schemes, which are non-trivial generalisation of the famous port-based teleportation protocol. These schemes allow for teleporting several unknown quantum states (or a composite quantum state) in one go so that the states end up in the respective number of ports on Bob's side. This protocol offers much better performance than the original PBT at the price of requiring corrections on the receiver's side which are permutations of the ports where the teleported states arrive.
We discuss the deterministic protocol where the transmission always happens, but the teleported state is distorted, and the probabilistic case, where we have to accept the probability of failure, but whenever the protocol succeeds the teleportation is perfect. In both cases, we calculate parameters describing the performance of discussed schemes, like entanglement fidelity (see Theorem~\ref{Fthm}) and the probability of success (see Theorem~\ref{thm_p}). Expressions, except the global parameters such as the number of ports $N$ and local dimension $d$, depend on purely group-theoretical quantities like for example dimensions and multiplicities of irreducible representations of the permutation group. The whole analysis is possible due to the rigorous description of the algebra of partially transposed permutation operators provided in this paper. In particular, we deliver the matrix operator basis in irreducible spaces on which respective operators describing teleportation protocol are supported (see Theorem~\ref{kPBTmat}, Theorem~\ref{eig_dec_rho}). The developed formalism applied to the considered problem allows to reduce calculations from the natural representation space to every irreducible block separately, simplifying it significantly. Moreover, symmetries occurring in the protocol allow us to solve semidefinite programming problems in an analytical way, which is not granted in general in SDP problems, see Section~\ref{probkPBT}. 

The methods presented in this paper may be applied to solve some related problems, but require further development of the formalism. The first one is the construction of the optimized version of the multi-port schemes. In this case, we have to find the operation $O_A$ which Alice has to apply to her part of the resource state before she runs the protocol. Clearly in this case the resource state is no longer in the form of product of the maximally entangled pairs.  The second problem is to understand the scaling of the entanglement fidelity and probability of success in the number of ports $N$, the number of teleported particles $k$ and local dimension $d$.  To answer this question one needs to adapt the analysis presented in~\cite{majenz} and examine the asymptotic behavior of the quantity $m_{\mu/\alpha}$ appearing in our analysis (see for example Theorem~\ref{Fthm}). The third problem is to understand multi-port recycling schemes as a generalization of ideas in~\cite{Strelchuk}. We would like to know how much the resource state degrades after the teleportation procedure and is there, in principle, the possibility of exploiting the resource state again. 
\section*{Acknowledgements}
 MS, MM are supported through grant Sonatina 2, UMO-2018/28/C/ST2/00004 from the Polish National Science Centre. 
Moreover, MH and MM thank the Foundation for Polish
Science through IRAP project co-financed by the EU within
the Smart Growth Operational Programme (contract no.
2018/MAB/5). 
M.H. also acknowledges support from the National Science Centre, Poland, through grant OPUS 9, 2015/17/B/ST2/01945.
MM and PK would like to thank ICTQT Centre (University of Gda{\'n}sk) for hospitality where part of this work has been done.
This paper was presented in part at QIP 2021, Munich in a talk "Multi-port teleportation schemes" by Michał Studziński and Piotr Kopszak.

\appendix
\section{An Alternative Proof of Theorem~\ref{Fthm}}
\label{AppA}
	Using spectral decomposition of the operator $\rho$ presented in Theorem~\ref{eig_dec_rho} we expand equation~\eqref{ef} to:
	\begin{IEEEeqnarray}{lll}
		F&=\frac{k!}{d^{2k}}\binom{N}{k}\tr\left(\Pi_{\mathbf{i}_0}^{A\widetilde{B}}\sigma_{\mathbf{i}_0}^{A\widetilde{B}} \right)&\nonumber\\
		 &=\frac{k!}{d^{2N}}\binom{N}{k}\sum_{\alpha,\beta \vdash N-k}\sum_{\mu\in\alpha}\sum_{\nu\in\beta}&\frac{1}{\sqrt{\lambda_{\mu}(\alpha)}}\frac{1}{\sqrt{\lambda_{\nu}(\beta)}}\label{fa1}\\
		 &&\times\tr\left(F_{\mu}(\alpha)V^{(k)}F_{\nu}(\beta)V^{(k)} \right).\nonumber
	\end{IEEEeqnarray}
	Now we have to compute the trace from the composition $F_{\mu}(\alpha)V^{(k)}F_{\nu}(\beta)V^{(k)}$ between partially transposed permutation operator defined in~\eqref{parV} and eigenprojectors $F_{\mu}(\alpha)$ presented in Definition~\ref{efy}. Numbers $\lambda_{\mu}(\alpha)$ denote respective eigenvalues of multi-port teleportation operator given in~\eqref{rhodec0a}. Using explicit form of eigenprojectors from Definition~\ref{efy}, expression~\eqref{tmbas2} for basis operator in Theorem~\ref{tmbas}, together with  Fact~\ref{kpartr} and Lemma~\ref{L3} we can write the following chain of equalities, displayed at the top of the following page, where the simplified form~\eqref{f1} follows from $\left[E_{l_{\alpha}1_{\alpha}}^{\alpha},V^{(k)} \right]=0$ and $E_{l_{\alpha}1_{\alpha}}^{\alpha}E_{1_{\alpha} \quad l_{\alpha}}^{r_{\mu/\alpha} r_{\mu/\alpha}}=E_{l_{\alpha} \quad l_{\alpha}}^{r_{\mu/\alpha} r_{\mu/\alpha}}$, where we applied definition of $P_{\alpha}=\sum_{l_{\alpha}}E^{\alpha}_{l_{\alpha}l_{\alpha}}$, orthogonality relation $P_{\alpha}P_{\beta}=\delta^{\alpha\beta}P_{\alpha}$ and finally $\tr P_{\alpha}=m_{\alpha}d_{\alpha}$. The symbol $m_{\mu/\alpha}$ denotes number of paths on reduced Young's lattice in which frame $\mu$ can be obtained from frame $\alpha$ by adding $k$ boxes.
\begin{figure*}[!t]
	\normalsize
	\setcounter{MYtempeqncnt}{\value{equation}}
	\setcounter{equation}{150}
	\be
	\label{chain}
	\begin{split}
	&\tr\left(V^{(k)}F_{\mu}(\alpha) V^{(k)}F_{\nu}(\beta)\right) =\sum_{r_{\mu/\alpha}, r_{\nu/\beta}}\sum_{k_{\mu},j_{\nu}}\tr\left(V^{(k)}F^{r_{\mu/\alpha}r_{\mu/\alpha}}_{k_{\mu} \quad k_{\mu}}V^{(k)}F^{r_{\nu/\beta}r_{\nu/\beta}}_{j_{\nu} \quad j_{\nu}}\right)\\
	&=\sum_{r_{\mu/\alpha}, r_{\nu/\beta}}\sum_{k_{\mu},j_{\nu}}\frac{m_{\alpha}m_{\beta}}{m_{\mu}m_{\nu}}\tr\left(V^{(k)}E_{k_{\mu} \ 1_{\alpha}}^{ \quad  r_{\mu/\alpha}}V^{(k)}E_{1_{\alpha} \ k_{\mu}}^{ \quad  r_{\mu/\alpha}}V^{(k)}E_{j_{\nu}1_{\beta}}^{ \ r_{\nu/\beta}}V^{(k)} E_{1_{\beta}j_{\nu}}^{ \ r_{\nu/\beta}}\right)\\
	&=\sum_{r_{\mu/\alpha}, r_{\nu/\beta}}\sum_{r_{\mu/\alpha'}r_{\nu/\beta'}}\sum_{l_{\alpha'},s_{\beta'}}\frac{m_{\alpha}m_{\beta}}{m_{\mu}m_{\nu}}\tr\left(V^{(k)}E_{l_{\alpha'} \quad 1_{\alpha}}^{ r_{\mu/\alpha'}  r_{\mu/\alpha}}V^{(k)}E_{1_{\alpha} \quad l_{\alpha'}}^{r_{\mu/\alpha} r_{\mu/\alpha'}}V^{(k)}E_{s_{\beta'} \quad 1_{\beta}}^{r_{\nu/\beta'} r_{\nu/\beta}}V^{(k)} E_{1_{\beta} \quad s_{\beta'}}^{r_{\nu/\beta}r_{\nu/\beta'}}\right)\\
	&=\sum_{r_{\mu/\alpha}, r_{\nu/\beta}}\sum_{r_{\mu/\alpha'}r_{\nu/\beta'}}\sum_{l_{\alpha'},s_{\beta'}}\delta^{r_{\mu/\alpha}r_{\mu/\alpha'}}\delta^{r_{\nu/\beta}r_{\nu/\beta'}}\tr\left(E_{l_{\alpha}1_{\alpha}}^{\alpha}V^{(k)}E_{1_{\alpha} \quad l_{\alpha}}^{r_{\mu/\alpha} r_{\mu/\alpha}}E_{s_{\beta}1_{\beta}}^{\beta}V^{(k)}E_{1_{\beta} \quad s_{\beta}}^{r_{\nu/\beta}r_{\nu/\beta}} \right)
	\end{split}
	\ee
	which simplifies to
\be
\label{f1}
\begin{split}
&\tr\left(V^{(k)}F_{\mu}(\alpha) V^{(k)}F_{\nu}(\beta)\right) =\sum_{r_{\mu/\alpha}, r_{\nu/\beta}}\sum_{l_{\alpha},s_{\beta}}\tr\left(V^{(k)}E_{l_{\alpha} \quad l_{\alpha}}^{r_{\mu/\alpha} r_{\mu/\alpha}}V^{(k)}E_{s_{\beta} \quad s_{\beta}}^{r_{\nu/\beta}r_{\nu/\beta}} \right)\\
&=\sum_{r_{\mu/\alpha}, r_{\nu/\beta}}\sum_{l_{\alpha},s_{\beta}}\frac{m_{\mu}}{m_{\alpha}}\tr\left(E^{\alpha}_{l_{\alpha}l_{\alpha}}V^{(k)}E_{s_{\beta} \quad s_{\beta}}^{r_{\nu/\beta}r_{\nu/\beta}} \right)=\sum_{r_{\mu/\alpha}, r_{\nu/\beta}}\sum_{k_{\alpha},j_{\beta}}\frac{m_{\mu}}{m_{\alpha}}\tr\left(E^{\alpha}_{k_{\alpha}k_{\alpha}}E_{j_{\beta} \quad j_{\beta}}^{r_{\nu/\beta}r_{\nu/\beta}} \right)\\
&=\sum_{r_{\mu/\alpha}, r_{\nu/\beta}}\sum_{l_{\alpha},s_{\beta}}\frac{m_{\mu}}{m_{\alpha}}\frac{m_{\nu}}{m_{\beta}}\tr\left(E^{\alpha}_{l_{\alpha}l_{\alpha}}E^{\beta}_{s_{\beta}s_{\beta}}\right)=\frac{m_{\mu}}{m_{\alpha}}\frac{m_{\nu}}{m_{\beta}}m_{\mu/\alpha}m_{\nu/\beta}\tr(P_{\alpha}P_{\beta})\\
&=\delta^{\alpha\beta}\frac{m_{\mu}m_{\nu}}{m_{\alpha}^2}m_{\mu/\alpha}m_{\nu/\beta}\tr P_{\alpha}=\delta^{\alpha\beta}m_{\mu/\alpha}m_{\nu/\alpha}m_{\mu}m_{\nu}\frac{d_{\alpha}}{m_{\alpha}}.
\end{split}
\ee

\setcounter{equation}{\value{MYtempeqncnt}+2}
\hrulefill
\vspace*{4pt}
\end{figure*}
Substituting final form of~\eqref{f1} to~\eqref{fa1} we have
	\be
	\begin{split}
		F&=\frac{k!}{d^{2N+2k}}\binom{N}{k}\sum_{\alpha \vdash N-k}\sum_{\mu,\nu\in\alpha}\frac{d_{\alpha}}{m_{\alpha}}\frac{m_{\mu}}{\sqrt{\lambda_{\mu}(\alpha)}}\frac{m_{\nu}}{\sqrt{\lambda_{\nu}(\beta)}}m_{\mu/\alpha}m_{\nu/\alpha}.
	\end{split}
	\ee
	
	Inserting explicit form of eigenvalues $\lambda_{\nu}(\alpha),\lambda_{\mu}(\alpha)$ given in~\eqref{rhodec0a}, we reduce to:
	\begin{IEEEeqnarray}{rll}
		F&=\frac{k!}{d^{2N+2k}}\binom{N}{k}\sum_{\alpha \vdash N-k}&\sum_{\mu,\nu\in\alpha}\frac{d_{\alpha}}{m_{\alpha}}\frac{m_{\mu}}{\sqrt{\lambda_{\mu}(\alpha)}}\frac{m_{\nu}}{\sqrt{\lambda_{\nu}(\beta)}}m_{\mu/\alpha}m_{\nu/\alpha}\nonumber\\
		 &=\frac{k!}{d^{2N+2k}}\binom{N}{k}\frac{d^{N}}{k!\binom{N}{k}}&\sum_{\alpha \vdash N-k}\sum_{\mu,\nu\in\alpha}m_{\mu/\alpha}m_{\nu/\alpha}\nonumber\\ 
		 &&\sqrt{\frac{m_{\alpha}d_{\mu}}{m_{\mu}d_{\alpha}}}\sqrt{\frac{m_{\alpha}d_{\nu}}{m_{\nu}d_{\alpha}}}\frac{m_{\mu}m_{\nu}}{m_{\alpha}}d_{\alpha}\nonumber\\
		 &=\frac{1}{d^{N+2k}}\sum_{\alpha \vdash N-k}\sum_{\mu,\nu\in\alpha}&m_{\mu/\alpha}\sqrt{m_{\mu}d_{\mu}}m_{\nu/\alpha}\sqrt{m_{\nu}d_{\nu}}\nonumber\\
		 &=\frac{1}{d^{N+2k}}\sum_{\alpha \vdash N-k}\bigg(\sum_{\mu\in\alpha}&m_{\mu/\alpha} \sqrt{m_{\mu}d_{\mu}}\bigg)^2.
\end{IEEEeqnarray}
 This finishes the proof.
 \bibliographystyle{IEEEtran}
 \bibliography{IEEEabrv,biblio2}
 \begin{IEEEbiographynophoto}{Micha{\l} Studzi{\'n}ski} received MSc degree in astronomy in 2009 from the Nicolaus Copernicus University and PhD in physics from the University of Gda{\'n}sk in 2015. Next, for a one year he was a postdoc at the National Quantum Information Centre, Sopot, Poland.  Then he has spend three years (2016-2018) as a postdoc at the Department of Applied Mathematics and Theoretical Physics, the University of Cambridge, United Kingdom. From 2019 to 2021 he has been working as a researcher at the Faculty of Physics, Mathematics and Informatics, University of Gda{\'n}sk, Gda{\'n}sk, Poland. 
From 2022 he is an adjunct at the Institute of Theoretical Physics and Astrophysics, University of Gda{\'n}sk, Gda{\'n}sk, Poland. 

He works in quantum information theory and mathematical physics. One of the most important results co-authored by him are fundamental limitations on coherence transfer under thermal operation and group-theoretic description of the  port-based teleportation protocol.
 \end{IEEEbiographynophoto}
 \begin{IEEEbiographynophoto}{Marek Mozrzymas} was born in Wrocław, Poland in 1960. He received MSc degree (1985) and PhD degree (1991) in theoretical physics from the University of Wrocław in Poland and Habilitation degree in theoretical physics in 2005 also from the University of Wrocław.  Since 1985 he has been  an academic in the Institute of Theoretical Physics at the University of Wrocław, and since 2017 he has been  professor at the University of Wrocław. He is head of PhD Studies in the Faculty of Physics and Astronomy at University of Wrocław. He is also a member of the Scientific Council of the National Quantum Information Centre in Poland.
His research interests include applications of algebraic methods in theoretical physics, in particular the applications of quantum algebras and representation theory of groups and semisimple algebras in physics and recently in quantum information protocols.
 \end{IEEEbiographynophoto}
 \begin{IEEEbiographynophoto}{Piotr Kopszak}
	 received a MSc degree in mathematics in 2015 from Wroc{\l}aw University of Tehnology (specialization in mathematical statistiscs) and in physics in 2017 from University of Wroc{\l}aw (specialization in theoretical physics). Later he become a PhD student at the Institute of Theorethical Physics at the Wroc{\l}aw University. His research coveres application of representation theory in quantum physics. In particular it involved providing new examples of of positive maps (entanglement witnesses) as well as the study of the port-based teleportation protocols.
 \end{IEEEbiographynophoto}
 \begin{IEEEbiographynophoto}{Micha{\l} Horodecki}
received the MSc degree in physics in 1996 and  the PhD degree in theoretical physics from the University of Gdansk in 2000. In 2015 he has become a full professor at University of Gdańsk, and since 2019 is group leader in newly established International Centre for Theory of Quantum Technologies.  

He works in quantum information theory as well as in quantum open systems. His most recognized achievements are co-discovering bound entanglement, and quantum state merging. He also worked on random quantum circuits and quantum cryptography. Recently he was involved in quantum thermodynamics, developing the so-called resource theory of thermodynamics. His present interests include quantum thermal machines and open quantum systems. 
 \end{IEEEbiographynophoto}
\end{document}